\documentclass[3p, sort&compress]{elsarticle}
\usepackage{amsmath}
\usepackage{amsfonts}
\usepackage{amssymb}
\usepackage{amsthm}
\usepackage[utf8]{inputenc}
\usepackage{cellspace}
\usepackage[small,sc]{caption}
\usepackage{booktabs, multicol, multirow}
\usepackage{enumitem}

\usepackage{graphicx}
\usepackage{epsfig}
\usepackage{algorithm}
\usepackage{algpseudocode}
\usepackage{color}

\usepackage{extpfeil}

\theoremstyle{plain}
\newtheorem{theorem}{Theorem}

\newtheorem{lemma}{Lemma}
\newtheorem{proposition}{Proposition}
\newtheorem{corollary}{Corollary}

\theoremstyle{definition}
\newtheorem{definition}{Definition}

\newtheorem{example}{Example}

\theoremstyle{definition}
\newtheorem{remark}{Remark}
\newtheorem{fact}{Fact}

\begin{document}
\begin{frontmatter}
\title{ \Large{Characterizations of symmetrically partial Boolean functions \\ with exact  quantum query complexity}}

\author{Daowen Qiu\corref{one}}
\author{Shenggen Zheng\corref{two}}

\cortext[one]{{\it E-mail address:} issqdw@mail.sysu.edu.cn (D. Qiu). Corresponding author.}
\cortext[two]{{\it E-mail address:} zhengshg@mail2.sysu.edu.cn (S. Zheng).  Corresponding and also co-first author.}

\address{School of Data and
Computer Science, Sun Yat-sen University, Guangzhou 510006, China}

\begin{abstract}

We give and prove an optimal exact quantum query algorithm with complexity $k+1$ for computing the promise problem (i.e., symmetric and partial Boolean function)  $DJ_n^k$ defined as: $DJ_n^k(x)=1$ for $|x|=n/2$, $DJ_n^k(x)=0$ for $|x|$ in the set $\{0, 1,\ldots, k, n-k, n-k+1,\ldots,n\}$, and it is undefined for the rest cases,  where $n$ is even, $|x|$ is the Hamming weight of $x$.
The case of $k=0$ is the well-known Deutsch-Jozsa problem. We outline all symmetric (and partial) Boolean functions with degrees 1 and 2, and prove their exact quantum query complexity. Then we prove that any symmetrical (and partial) Boolean function $f$ has exact quantum 1-query complexity if and only if $f$ can be computed by the Deutsch-Jozsa algorithm. We also discover the optimal exact quantum 2-query complexity for distinguishing between inputs of Hamming weight $\{ \lfloor n/2\rfloor, \lceil n/2\rceil \}$ and Hamming weight in the set $\{ 0, n\}$ for  all odd $n$. In addition,  a method is provided to determine the degree of any symmetrical (and partial)  Boolean function.

\end{abstract}

\begin{keyword}
Exact quantum query algorithms\sep Deutsch-Jozsa  problems \sep  Query complexity \sep  Symmetrically  partial Boolean functions

\end{keyword}
\end{frontmatter}

\section{Introduction}

{\em Quantum computing models} can be divided into {\em bounded-error} and {\em exact} versions in terms of their outputs. A bounded-error model means that the mistake probability for any output cannot be beyond an error value given a priori, and an exact model requires its outputs be fully correct always, without any error allowed. {\em Exact quantum computing models} have been studied in the frameworks of {\em quantum finite automata} \cite{AY12,GQZ15} and particularly {\em quantum query models} (for example, \cite{DJ92,BSS03,HS05,BW02,MJM11,AIS13,Amb13}).

The quantum query models are the quantum analog to the classical Boolean decision tree models, so they are also called {\em quantum decision tree} models and are at least as powerful as the classical decision tree models \cite{BW02}. The implementation procedure of a quantum decision tree model is exactly a {\em quantum query algorithm}, and it can be roughly described as: it starts with a fixed starting state $|\psi_s\rangle$ of a Hilbert ${\cal H}$ and will perform the sequence of operations $U_0, O_x, U_1,  \ldots, O_x,U_t$, where $U_i$'s are unitary operators that do not depend on the input $x$ but the query $O_x$  does. This leads to the final state $ |\psi_f\rangle=U_tO_xU_{t-1}\cdots U_1O_xU_0|\psi_s\rangle$. The result is obtained by measuring the final state  $ |\psi_f\rangle$.

A quantum query algorithm ${\cal A}$ {\em  computes exactly} a Boolean function $f$ if its output equals $f(x)$ with probability 1, for all input $x$. ${\cal A}$ {\em computes with bounded-error} $f$ if its output equals $f(x)$ with probability at least $\frac{2}{3}$, for all input $x$. The {\em  exact quantum query complexity} denoted by
$Q_E(f)$ is the minimum number of queries used by any quantum algorithm which
computes $f(x)$ exactly for all input $x$.

For the bounded-error case, quantum query algorithms have been investigated extensively and deeply (for example, \cite{AA15,ABK16,ABA+16,Amb04,Bel11,BDHHMSW01, CD08,DHHM04, DT07, FGG08,Gro96, RS07,  Sho94,Sim94} and the references therein), and some of them have either polynomial speed-up over classical algorithms for computing total Boolean functions. The exact quantum query algorithms for computing total Boolean functions also have been studied \cite{ABK16,ABA+16,Amb13,AGZ14,AIS13,CEMM98,DM06,FGGS98,HKM02,Mid04,MJM11,SVKF15}.   In 2013, as a  breakthrough result, Ambainis \cite{Amb13} has
presented the first example of a Boolean function for which exact quantum algorithms have superlinear advantage over exact classical algorithms, {\em i.e.} $Q_{E}(f)=O(D(f)^{0.8675...})$,
where $D(f)$ denotes the minimum number of queries used by any classical deterministic query algorithm. The result was improved
 by Ambainis {\em et al} \cite{ABK16,ABA+16} to nearly-quadratic separation in 2016.

 Brassard and H{\o}yer \cite{BH97} gave an example of a partial function whose exact quantum query complexity is exponentially lower than its classical randomized query complexity.
However, for computing partial Boolean functions, there can be more than exponential separation between exact quantum and classical deterministic query complexity, and the first result was the well-known Deutsch-Jozsa algorithm \cite{DJ92}.

 Deutsch-Jozsa problem \cite{DJ92} can be  described as a partial Boolean function $DJ_n^0: \{0,1\}^n\rightarrow \{0,1\}$ defined as: $n$ is even, and $DJ_n^0(x)=1$ for $|x|=\frac{n}{2}$ and $DJ_n^0(x)=0$ for $|x|=0~\text{or} ~n$, and the other cases are undefined,  where $|x|$ is the Hamming weight of $x$. Deutsch-Jozsa problem has attracted a lot of research and discussion (for example, \cite{CKH98,MJM11,Ber14}), and the physical realization was implemented in \cite{PMG+15}. Montanaro, Jozsa, and Mitchison  \cite{MJM11} generalized the Deutsch-Jozsa problem to another partial Boolean function, say $DJ_n^1: \{0,1\}^n\rightarrow \{0,1\}$ defined as $DJ_n^0$ except for $DJ_n^1(x)=0$ for $|x|=0,1,n-1,n$.  Also, Montanaro {\em et al} \cite{MJM11} designed an exact quantum 2-query algorithm to compute it by using an analytical method.

Aaronson and Ambainis \cite{AA14} have showed that there can be at most a quadratic separation between quantum and classical bounded-error algorithms for computing any symmetric partial boolean function.
In this paper we will study symmetric partial boolean function for the exact computing cases.
Ambainis \cite{Amb98} showed that almost all total Boolean functions have high approximate degree, so, we are also interested in partial Boolean functions with lower degree. Indeed, partial Boolean functions have also been called as promise problems \cite{ESY84,Go06}, and both symmetric Boolean functions and partial Boolean functions have had important applications in cryptography (for example, \cite {CV05,ESY84,Go06}).

\subsection{Definitions}

Let $f$ be a Boolean function from $D\subseteq\{0,1\}^n$ to $\{0, 1\}$.  If $D=\{0,1\}^n$, then $f$ is called a total Boolean function. Otherwise, $f$ is called a partial Boolean function or a promise problem \cite{ESY84,Go06} and  $D$ is referred to as the domain of definition or promised set.

A (partial) Boolean function $f$ is called symmetric if $f(x)$  only depends  on the Hamming weight of $x$, i.e., $|x|$. Some characteristics of the symmetric Boolean functions were given in, for example,  \cite {CV05}. Some common symmetric functions over $\{0,1\}^n$ are listed as follows.

\begin{itemize}
  \item $OR_n(x)=1$ if and only if $|x|\geq 1$;
  \item $AND_n(x)=1$ if and only if $|x|= n$;
 \item $PARITY_n(x)=1$ if and only if $|x|$ is odd;
 \item $MAJ_n(x)=1$ if and only if $|x|>n/2$;
\item $EXACT_n^k(x)=1$ if and only if $|x|=k$, where $0\leq k\leq n$;
\item $THRESHOLD_n^k(x)=1$ if and only if $|x|\geq k$, where $0\leq k\leq n$.
\end{itemize}

\begin{remark}
In \cite{BWY15}, {\em partially symmetric Boolean functions} were studied and the definition is: For a subset $J\subseteq [n]:=
\{1, \ldots , n\}$, a function $f: \{0, 1\}^n \rightarrow \{0, 1\}$ is $J$-symmetric if permuting the labels
of the variables of $J$ does not change the function. So, a partially symmetric Boolean function is a total function but its symmetric property is partial. If $J=[n]$, then it is exactly a symmetric Boolean function.

\end{remark}

So, different from partially symmetric Boolean functions \cite{BWY15}, the functions $\text{DJ}_n^0$ and $\text{DJ}_n^1$ above are both symmetric and  partial,  called  {\em symmetrically  partial  Boolean functions} (i.e. promise problems) in this paper and the exact definition can be described as follows.

\begin{definition}

Let $f:\{0, 1\}^n \rightarrow \{0, 1\}$ be a partial Boolean function, and let $D\subseteq \{0, 1\}^n$ be its domain of definition. If for any $x\in D$ and for any $y\in \{0, 1\}^n$ with $|x|=|y|$, it holds that $y\in D$ and $f(x)=f(y)$, then $f$ is called a {\em  symmetrically  partial  Boolean function}. When $D=\{0,1\}^n$,  $f$ is  an symmetric function.

\end{definition}

So, a  symmetrically  partial Boolean function equals a symmetric and partial Boolean function, and has been called a {\em promise problem} \cite{ESY84,Go06}.
Clearly, if $f:\{0, 1\}^n \rightarrow \{0, 1\}$ is a symmetrically  partial function, then its domain of definition has the version $\{x: |x|=k_1,k_2,\ldots,k_l\}$ for some $0\leq k_i \leq n$ with $i=1,2,\ldots,l$.

{\em Isomorphism} is useful in the study of query complexity, and two partial  functions $f$ and $g$ over $\{0, 1\}^n$  are {\em isomorphic}  if they are equal up to negations and permutations of
the input variables, and negation of the output variable.  

\begin{fact} \label{fact1}
For any two partial  functions $f,g$ over $\{0,1\}^n$, if they are isomorphic, then they have the same (exact) quantum query complexity.
\end{fact}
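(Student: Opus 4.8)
The plan is to show that exact quantum query complexity is invariant under each of the three elementary transformations whose compositions generate isomorphism — negation of the output, permutation of the input variables, and negation of individual input variables — and then to conclude by composing them. Write the isomorphism carrying $g$ to $f$ as $g(x)=f(\pi(x))\oplus c$, where $c\in\{0,1\}$ encodes output negation and $\pi(x)_i=x_{\sigma(i)}\oplus s_i$ for a permutation $\sigma$ of $[n]$ and a string $s\in\{0,1\}^n$ encoding the permutation and negation of variables. Since $\pi$ is a bijection of $\{0,1\}^n$ that carries the promised set of $g$ onto that of $f$, it suffices to convert any exact algorithm $\mathcal{A}$ for $f$ using operations $U_0, O_x, U_1, \ldots, O_x, U_t$ into an exact algorithm $\mathcal{B}$ for $g$ with the same number $t$ of queries; the reverse inequality then follows because the inverse of an isomorphism is again an isomorphism, giving $Q_E(f)=Q_E(g)$.

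The heart of the argument is that the oracle for the transformed input can be simulated by the oracle for the original input without spending any extra query. Working with the standard oracle $O_x|i\rangle|b\rangle=|i\rangle|b\oplus x_i\rangle$, let $P_\sigma|i\rangle=|\sigma(i)\rangle$ and let $S|i\rangle|b\rangle=|i\rangle|b\oplus s_i\rangle$; both are input-independent unitaries. A direct check then gives $O_{\pi(x)}=S\,P_\sigma^{-1}\,O_x\,P_\sigma$, so a single query to $O_x$, flanked by input-independent operators, reproduces exactly one query to the input $\pi(x)$ that $\mathcal{A}$ expects. Absorbing these flanking operators into the surrounding $U_i$'s of $\mathcal{A}$ leaves the number of oracle calls unchanged, so $\mathcal{B}$ still uses $t$ queries. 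Running the resulting algorithm on input $x$ reproduces the computation of $\mathcal{A}$ on $\pi(x)$, and since $x$ lies in the promise set of $g$ exactly when $\pi(x)$ lies in that of $f$, the output equals $f(\pi(x))$. A final NOT gate applied to the answer qubit when $c=1$ (pure post-processing, requiring no query) yields $f(\pi(x))\oplus c=g(x)$, as desired.

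I expect the only delicate point to be verifying the oracle identity $O_{\pi(x)}=S\,P_\sigma^{-1}\,O_x\,P_\sigma$ and checking that it is compatible with whichever oracle convention one adopts; the phase-oracle version $O_x|i\rangle=(-1)^{x_i}|i\rangle$ is handled identically, with $S$ replaced by the diagonal phase $\sum_i(-1)^{s_i}|i\rangle\langle i|$ and the argument otherwise unchanged. Everything else — the trivial treatment of output negation, the absorption of input-independent unitaries into the $U_i$'s, and the symmetry of the final equality — is routine.
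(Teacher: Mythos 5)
Your proposal is correct and follows essentially the same route as the paper: reduce computing $g$ to running the algorithm for $f$ on the transformed input (variable negations plus a permutation), with output negation handled as free post-processing. The only difference is one of rigor — where the paper loosely writes the preprocessing as ${\cal A}'(x)=(\neg){\cal A}U_1U_0(x)$ acting on the input string, you make explicit the key oracle identity $O_{\pi(x)}=S\,P_\sigma^{-1}\,O_x\,P_\sigma$ showing that this preprocessing costs no extra queries, which is exactly the detail the paper's proof glosses over.
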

\begin{proof}
Let  $g(x)=(\neg)f(\pi((\neg) x_1, (\neg) x_2,\ldots, (\neg) x_n ))$ where $\pi$ is a permutation. Suppose that there is  a $t$-queries quantum algorithm ${\cal A}$ that computes $f(x)$, and let ${\cal A}(x)$ represent the output (0 or 1) for input $x$.
Now for any $x$ in the domain of definition of $g$, we consider the following $t$-queries quantum algorithm ${\cal A}'$:
\begin{equation}
    {\cal A}'(x)=(\neg) {\cal A} U_1U_0(x),
\end{equation}
where $U_0(x)=((\neg) x_1, (\neg) x_2,\ldots, (\neg) x_n )$ and $U_1(x)=\pi(x)$. It is clear that ${\cal A}'$ computes exactly function $g$.
\end{proof}

\begin{remark}
Given a partial symmetric function $f:\{0,1\}^n\rightarrow \{0,1\}$, with the domain $D$ of definition,  it can be equivalently described by a vector $(b_0,b_1,\ldots,b_n)\in\{0,1,*\}^{n+1}$, where $f(x)=b_{|x|}$,  i.e. $b_k$ is the value of $f(x)$ when $|x|=k$, and $f(x)$ is `undefined' for $b_{|x|}=*$. In the interest of simplicity, sometimes we will use the vector to denote a symmetrically  partial function in this article.

\end{remark}

Concerning the $n$-bit symmetrically  partial  functions, it is clear that the following functions are isomorphic to each other:

\begin{itemize}
  \item $(b_0,b_1,\ldots,b_n)$;
  \item $(b_n,b_{n-1},\ldots,b_0)$;
 \item $(\bar{b}_0,\bar{b}_1,\ldots,\bar{b}_n)$;
 \item $(\bar{b}_n,\bar{b}_{n-1},\ldots,\bar{b}_0)$.
\end{itemize}

We need to introduce some complexity measures for symmetrically  partial  functions.

\begin{definition}

Let $f$ be a partial function with a domain of definition $D\subseteq\{0,1\}^n$.
For $0\leq \varepsilon < 1/2$, we say a real multilinear polynomial $p$ approximates $f$ with error $\varepsilon$ if:
\begin{enumerate}
\item[(1)]  $|p(x)-f(x)|\leq \varepsilon$ for all $x\in D$;
\item[(2)]  $0\leq p(x)\leq 1$ for all $x\in\{0,1\}^n$.
\end{enumerate}
The approximate degree of $f$ with error $\varepsilon$, denoted by $\widetilde{\text{deg}}_{\varepsilon}(f)$, is the minimum degree among all real multilinear polynomials  that approximate $f$ with error $\varepsilon$.

\end{definition}


Clearly, if $\varepsilon=0$, then $\widetilde{\text{deg}}_{0}(f)$ is the {\em exact degree} of $f$. Furthermore,
if $D=\{0,1\}^n$, i.e. $f$ is a total function, then the exact degree of $f$ is exactly the degree of $f$ as usual \cite{BW02}, denoted by $\text{deg}(f)$. In the interest of simplicity, sometimes we just identity $\widetilde{\text{deg}}_{0}(f)$ with $\text{deg}(f)$ for any partial Boolean function $f$, since  no confusion leads.

\subsection{Preliminaries}

Let input $x=x_1\cdots x_n\in\{0,1\}^n$ for some fixed $n$. We will consider a Hilbert space ${\cal H}$ with basis states $|i,j\rangle$ for $i\in\{0,1,\ldots,n\}$ and $j\in\{1,\cdots, m\}$ (where $m$ can be chosen arbitrarily). A query $O_x$ to an input $x\in\{0,1\}^n$ will be formulated as the following unitary transformation:
\begin{itemize}
  \item $O_x|0,j\rangle=|0,j\rangle$;
  \item $O_x|i,j\rangle=(-1)^{x_i}|i,j\rangle$ for $i\in\{1,2,\cdots, n\}$.
\end{itemize}

A  quantum query algorithm ${\cal A}$ which uses $t$ queries for an input $x$ consists of a sequence of
 unitary operators $U_0, O_x, U_1,  \ldots, O_x,U_t$, where $U_i$'s  do not depend on the input
$x$ and the query $O_x$   does. The algorithm will start in a fixed starting state $|\psi_s\rangle$ of ${\cal H}$ and will perform the above sequence of operations. This leads to the final state
\begin{equation}
   |\psi_f\rangle=U_tO_xU_{t-1}\cdots U_1O_xU_0|\psi_s\rangle.
\end{equation}
The final state is then measured with
a measurement $\{M_0, M_1\}$.  For an  input $x\in\{0,1\}^n$, we denote ${\cal A}(x)$  the output of the  quantum query algorithm  ${\cal A}$.  Obviously,
$Pr[{\cal A}(x)=0] =\|M_0|\psi_f\rangle\|^2$ and $Pr[{\cal A}(x)=1] =\|M_1|\psi_f\rangle\|^2=1-Pr[{\cal A}(x)=0]$.
We say that the quantum query algorithm ${\cal A}$ computes $f$ within an
error $\varepsilon$ if for every input $x\in\{0,1\}^n$ it holds that $Pr[{\cal A}(x)=f(x)]\geq 1-\varepsilon$. If $\varepsilon=0$, we says that the  quantum algorithm is exact.
For more details on quantum query complexity, we may refer to \cite{Amb13,BW02,BSS03,MJM11}.

{\em Quantum query models} are one of most important computing models in quantum computing. In this complexity models \cite{BW02}, an algorithm is charged for ``queries" to the input bits, while any intermediate computation is considered as free. For many functions one can obtain large quantum speed-ups in the case algorithms are allowed a constant small probability of error (bounded error). As the most famous example, Grover's algorithm \cite{Gro96} computes the $n$-bit $\mbox{OR}$ function with $O(\sqrt {n})$ queries in the bounded-error mode, while any classical algorithm needs $\Omega(n)$ queries. The model of  {\em exact quantum query}, where the algorithms must output the correct answer with certainty for every possible input, seems to be more intriguing \cite{BH97,CEMM98,DJ92}. It is much more difficult to come up with  exact quantum algorithms that outperform  classical deterministic algorithms.

In the exact quantum query complexity, it was recognized that the best quantum
speed-up for computing total functions was by a factor of 2 for many years \cite{FGGS98}. In a breakthrough result, Ambainis  has
presented the first example of a Boolean function $f:\{0,1\}^n\to \{0,1\}$ for which exact quantum algorithms have superlinear advantage over classical deterministic algorithms \cite{Amb13}. The result was improved in 2016 \cite{ABK16,ABA+16}. Based on the results in \cite{MJM11,AIS13}, Ambainis, Gruska, and Zheng \cite{AGZ14} have verified that exact quantum algorithms have certain advantage for most of Boolean functions.

Ambainis  \emph{et al}~\cite{AIS13}
 have developed optimal exact quantum algorithms for computing functions
 $EXACT_n^k$ and $THRESHOLD_n^k$,   which are  to determine
whether an $n$-bit string has Hamming weight exactly $k$ and to determine
whether an $n$-bit string has Hamming weight at least $k$. The complexity is:

\begin{itemize}
\item $Q_E(EXACT_n^k)=\max(k,n-k)$;
\item $Q_E(THRESHOLD_n^k)=\max(k,n-k+1)$.
\end{itemize}

If $f$ is allowed to be a partial function,  the Deutsch-Jozsa  algorithm \cite{DJ92} proved that there can be more than exponential
separation between exact quantum and  classical deterministic  query complexity. Some generalizations \cite{MJM11,CKL00,ZQ14,GQZ15} of the  Deutsch-Jozsa  problem were also investigated, and we will indicate them carefully if there exist relations to our results.

\subsection{Our main results and proof methods}

A general generalization of Deutsch-Jozsa problem is the following partial symmetric  function:
\begin{equation} \label{DJF}
\text{DJ}_n^k(x)=\left\{\begin{array}{ll}
                    1 &\ \text{if}\ |x|=n/2, \\
                    0 &\ \text{if}\ |x|\leq k\ \text{or}\  |x|\geq n-k,
                  \end{array}
 \right.
 \end{equation}
 where  $n$ is even and $0\leq k<n/2$.

Clearly, when $k=0$, it is the Deutsch-Jozsa problem, and when $k=1$, it equals the problem given by  Montanaro \emph{et al}~\cite{MJM11}.

Our first main result is as follows.

\begin{theorem}\label{QE(DJ)}
The exact quantum query complexity of $\text{\em DJ}_n^k$ satisfies:
\begin{equation}
    Q_E(\text{\em DJ}_n^k)=k+1.
\end{equation}

However, the classical deterministic query complexity for $\text{\em DJ}_n^k$ is:
\begin{equation}
    D(\text{\em DJ}_n^k)=n/2+k+1.
\end{equation}
\end{theorem}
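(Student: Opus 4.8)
The plan is to treat the classical and quantum bounds separately, and within each the lower and upper bound.

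\medskip
\noindent\emph{Classical bounds.} For the lower bound I would run an adversary argument. Since $\mathrm{DJ}_n^k$ is symmetric, the information a deterministic algorithm has after some queries is summarized by the pair $(p,q)$ of revealed $1$'s and $0$'s, and the adversary is free to answer each query by its own choice provided a consistent completion in the domain survives. The algorithm cannot commit while both a weight-$n/2$ completion (which needs $p\le n/2$ and $q\le n/2$) and a completion of weight in $\{0,\dots,k\}\cup\{n-k,\dots,n\}$ (which needs $p\le k$ or $q\le k$) remain available. I would show the adversary keeps both alive along the path reaching $(p,q)=(k,n/2)$: with $k$ ones and $n/2$ zeros revealed, filling the remaining $n/2-k\ge 1$ bits with $1$'s gives weight $n/2$, and filling them with $0$'s gives weight $k$. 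Thus after $n/2+k$ answers the algorithm is still undecided, forcing at least $n/2+k+1$ queries. For the matching upper bound I would exhibit the greedy algorithm that queries bits one at a time and outputs $1$ once $p\ge k+1$ and $q\ge k+1$ (the promise then forces $|x|=n/2$) and outputs $0$ once $p\ge n/2+1$ or $q\ge n/2+1$ (the promise then forces $|x|\le k$ or $|x|\ge n-k$). Its undecided region is $\{(p,q):(p\le k\ \text{or}\ q\le k)\ \text{and}\ p\le n/2\ \text{and}\ q\le n/2\}$, on which $p+q\le n/2+k$, so the $(n/2+k+1)$-th query always resolves the output. Hence $D(\mathrm{DJ}_n^k)=n/2+k+1$.

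\medskip
\noindent\emph{Quantum lower bound.} Here I would use the polynomial method. A $t$-query exact algorithm gives a real multilinear polynomial $p$ of degree at most $2t$ equal to its accept probability, so $p(x)=\mathrm{DJ}_n^k(x)$ on the domain and $0\le p\le 1$ on $\{0,1\}^n$; thus $2\,Q_E\ge\widetilde{\deg}_0(\mathrm{DJ}_n^k)$. Minsky--Papert symmetrization replaces $p$ by a univariate polynomial $q$ with $\deg q\le\deg p$ and $q(i)=b_i$ at every defined weight, namely $q(i)=0$ for $i\in\{0,\dots,k\}\cup\{n-k,\dots,n\}$ and $q(n/2)=1$. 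Since $q$ has the $2k+2$ distinct roots $0,\dots,k,n-k,\dots,n$ yet $q(n/2)=1\ne 0$, it is not identically zero and so $\deg q\ge 2k+2$. Therefore $\widetilde{\deg}_0(\mathrm{DJ}_n^k)\ge 2k+2$ and $Q_E(\mathrm{DJ}_n^k)\ge k+1$.

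\medskip
\noindent\emph{Quantum upper bound.} This is where the work lies. Writing $s(x)=1-2|x|/n$, the DJ ``balance'' vanishes exactly at $|x|=n/2$, while each output-$0$ weight has $s^2\in\{s_0^2,\dots,s_k^2\}$ with $s_i=1-2i/n$ (using $1-2(n-j)/n=-s_j$). My target is an algorithm whose accept probability equals $P(s)=\prod_{i=0}^{k}\bigl(1-s^2/s_i^2\bigr)$, an even polynomial of degree $2(k+1)$ that is $1$ at $s=0$ and $0$ at every output-$0$ weight. That $0\le P\le 1$ at all integer weights is easy and I would record it first: $s_k^2$ is the smallest of the roots, every undefined weight has $s^2<s_k^2\le s_i^2$, so each factor lies in $(0,1]$ and the product does too, while the roots give $P=0$; no integer weight falls strictly between two roots.

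\medskip
\noindent The genuine obstacle is to realize $P$ \emph{exactly} by an explicit sequence $U_0,O_x,U_1,\dots,O_x,U_{k+1}$. The accept probability of any $(k+1)$-query algorithm has the form $\sum_j\alpha_j(x)^2$ with each $\alpha_j$ a degree-$(k+1)$ polynomial, so I would seek a sum-of-squares representation $P=\sum_j\alpha_j(s)^2$ with $\deg\alpha_j\le k+1$ and read the input-independent unitaries off that representation, using $k+1$ Deutsch--Jozsa queries to make the monomials $s^m$ available on separate registers and a final $U_{k+1}$ to take the prescribed combination on a designated accept subspace. The delicate point is exactness --- the accept probability must equal precisely $0$ and $1$ at the defined weights, not merely lie in $[0,1]$ --- while every $U_i$ stays genuinely unitary. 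I expect to settle this by induction on $k$, taking the degree-$2$ (i.e.\ $k=1$) construction of Montanaro \emph{et al.}~\cite{MJM11} as the base case and the single-query Deutsch--Jozsa algorithm as the case $k=0$, and lifting the factorization of $P$ into one additional query at each step.
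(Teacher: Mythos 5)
Your classical bounds and your quantum lower bound are correct. The classical argument is essentially the paper's own (Theorem~\ref{Th-classical-bound}): an adversary path with $k$ answers of one kind and $n/2$ of the other, and a greedy algorithm whose undecided region satisfies $p+q\leq n/2+k$. For the quantum lower bound, both you and the paper use the polynomial method with Minsky--Papert symmetrization, but your final step is genuinely cleaner: you observe that the symmetrized univariate polynomial has the $2k+2$ distinct roots $0,\ldots,k,n-k,\ldots,n$ while being $1$ at $n/2$, hence has degree at least $2k+2$ by root counting. The paper instead (Theorem~\ref{Th-lower-bound}) expands in the basis $V_0,\ldots,V_{2k+1}$, extracts a linear system from the weights $n,n-1,\ldots,n-k$, and evaluates a determinant of binomial coefficients (its Appendix~B) to force all coefficients to vanish; your route avoids that computation entirely.

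The quantum upper bound, however, is a genuine gap: what you have written is a plan whose crucial step is exactly the thing that needs proof, and the step as formulated cannot work. You construct a valid degree-$2(k+1)$ representing polynomial $P(s)$ and then propose to ``seek a sum-of-squares representation $P=\sum_j\alpha_j(s)^2$ with $\deg\alpha_j\leq k+1$ and read the input-independent unitaries off that representation.'' But the polynomial/sum-of-squares condition is a \emph{necessary} consequence of a $(k+1)$-query algorithm, not a \emph{sufficient} one: the set of acceptance probabilities of exact $t$-query algorithms is a strict subset of the degree-$2t$ polynomials bounded in $[0,1]$ that represent $f$, because the unitarity of the $U_i$ imposes constraints far beyond the degree bound. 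This very paper supplies a counterexample to the implicit principle you rely on: the function $f^{(2)}_{n,k}$ has exact degree $2$ (Lemma~\ref{SPF121}), hence admits exactly the kind of degree-$2$ representing polynomial your scheme would start from with $k=0$, and yet $Q_E(f^{(2)}_{n,k})\geq 2$ (Theorem~\ref{T-k2}). So no argument that proceeds only through degrees and sum-of-squares decompositions can yield $Q_E(\text{DJ}_n^k)\leq k+1$; your closing appeal to an induction on $k$ ``lifting the factorization into one additional query at each step'' is precisely the unproven crux, and it is left as a conjecture.

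For comparison, the paper's upper bound (Theorem~\ref{Th-upper-bound}) is an explicit iterative algorithm built on the one-query subroutine Xquery of Ambainis \emph{et al.}~\cite{AIS13}: on an $m$-bit string, a single query either certifies $|x|\neq m/2$ or outputs a pair $(i,j)$ with $x_i\neq x_j$. The algorithm runs Xquery up to $k+1$ times, deleting the exhibited unequal pair (one $1$ and one $0$) after each round. A balanced input stays balanced under such deletions, so it never triggers the $|x|\neq m/2$ certificate and the algorithm outputs $1$ after $k+1$ rounds; an input of weight at most $k$ (or at least $n-k$) can supply at most $k$ disjoint unequal pairs, so within $k+1$ calls the certificate must appear and the algorithm outputs $0$. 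If you want to complete your approach, you would in effect need to prove a realization theorem of this strength, and the known degree-versus-$Q_E$ gaps show it cannot be done at the level of generality at which you have posed it.
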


\begin{remark}
When $k=1$, Montanaro \emph{et al.}~ \cite{MJM11} designed an exact quantum 2-query algorithm to compute it, and their method is somewhat complicated for deriving a unitary operator from solving a system of equations, but the optimality with 2-query was not verified. Our result also shows the algorithm by Montanaro {\em et al} \cite{MJM11} is optimal.

\end{remark}

{\em Proof method of Theorem 1:}  Using the exact quantum query algorithms for computing $EXACT_n^k$ and $THRESHOLD_n^k$ due to Ambainis  \emph{et al}~\cite{AIS13}, we can give an exact quantum $(k+1)$-query algorithm for computing $\text{\em DJ}_n^k$.
On the other hand,  we will prove that $\widetilde{\text{deg}}_{0}(\text{DJ}_n^k)\geq 2k+2$, and therefore
 $Q_E(\text{DJ}_n^k)\geq \frac{\widetilde{\text{deg}}_{0}(\text{DJ}_n^k)}{2}=k+1$.

A natural question is what common characters are for the Boolean functions with the same exact quantum query complexity?  Due to the importance and simplicity of symmetric functions,  here we consider the case of exact quantum 1-query complexity for all symmetrical and partial functions.

Therefore, the question is what can be solved with exact quantum 1-query complexity? Notably,  Aaronson, Ambainis, Iraids, and Kokainis \cite{AAIK15} recently proved that a partial Boolean function $f$ is computable by a 1-query quantum
algorithm with error bounded by $\varepsilon<1/2$ if and only if $f$ can be approximated by a {degree-2} polynomial
with error bounded by $\varepsilon'<1/2$.

We can pose the question more precisely: if an exact quantum 1-query algorithm $\cal{A}$ computes a symmetrically  partial function $f$, then, can any symmetrical and partial function $g$ with $Q_E(g)=1$ be computed by $\cal{A}$? Our second main result  answers this question  as follows.

\begin{theorem}
Any symmetric and partial Boolean function $f$ has $Q_E(f)=1$ if and only if $f$ can be computed by the Deutsch-Jozsa algorithm.
\end{theorem}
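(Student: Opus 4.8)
The plan is to prove the two directions separately; the reverse direction is immediate, and the forward direction carries the substance.

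For the reverse direction, if $f$ is computed by the Deutsch--Jozsa algorithm then it is computed by a one-query algorithm, so $Q_E(f)\le 1$; since that algorithm separates at least two Hamming-weight classes, $f$ is non-constant and $Q_E(f)\ge 1$, whence $Q_E(f)=1$.

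For the forward direction, I would assume $Q_E(f)=1$ and fix an exact one-query algorithm $\mathcal{A}$ computing $f$, with prepared state $|\phi\rangle=U_0|\psi_s\rangle$ and acceptance probability $P(x)=\Pr[\mathcal{A}(x)=1]$. Writing $(-1)^{x_i}=1-2x_i$, the query acts as $O_x|\phi\rangle=|\phi\rangle-2|\chi_x\rangle$ with $|\chi_x\rangle$ affine in the bits $x_i$, so $P(x)=\|M_1U_1O_x|\phi\rangle\|^2$ is a real multilinear polynomial of degree at most $2$. Because $f$ is symmetric and $P(x)=f(x)$ on the permutation-closed domain $D$, the symmetrization $\bar P(x)=\frac{1}{n!}\sum_{\pi}P(\pi x)$ is a symmetric polynomial of degree at most $2$, still valued in $[0,1]$ on all of $\{0,1\}^n$ and still equal to $f$ on $D$. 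Hence there is a univariate real polynomial $q$ of degree at most $2$ with $\bar P(x)=q(|x|)$ satisfying $0\le q(k)\le 1$ for all $k\in\{0,\dots,n\}$ and $q(k)=b_k\in\{0,1\}$ for every $k\in D$.

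The crux is to pin down $q$. I would first show that the symmetrized acceptance probability necessarily has the squared-affine form $q(s)=|c_0+c_1(n-2s)|^2$: after symmetrizing the algorithm itself, the post-query state differs from the prepared state only through the scalar $\sum_i(-1)^{x_i}=n-2|x|$, and the outcome probability is the squared modulus of its overlap with a fixed reference vector. This is exactly the acceptance behaviour of a (generalized) Deutsch--Jozsa algorithm, and it shows the $s^2$-coefficient of $q$ equals $4|c_1|^2\ge 0$, vanishing only together with the linear term; thus any non-constant realizable $q$ is a strictly convex parabola in $s$. Combining convexity with $0\le q\le 1$ and the boolean values on $D$ then leaves essentially no freedom: a convex $q\le 1$ can attain the value $1$ at two points of $\{0,\dots,n\}$ only at the endpoints $0$ and $n$ (forcing the vertex at $n/2$), while a convex $q\ge 0$ can attain the value $0$ at a single point only, namely its vertex. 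A short case analysis on the number of defined points carrying value $0$ and value $1$ then forces $D$ and $(b_k)$ into the Deutsch--Jozsa pattern --- value $0$ at $n/2$ and value $1$ at $0$ and $n$, up to the isomorphisms listed after Fact~\ref{fact1}, together with its degenerate two-point specializations --- that is, $f$ is of the form computed by the Deutsch--Jozsa algorithm.

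The main obstacle is precisely this crux: proving that every realizable symmetrized acceptance probability is of the squared-affine (hence convex) form, and then checking that $[0,1]$-boundedness together with exact boolean values on $D$ admits no weight pattern outside the Deutsch--Jozsa family. The convexity claim has to be extracted carefully from the geometry of a single query in the chosen Hilbert space --- in particular from the role of the idle register $|0,j\rangle$ against the queried registers $|i,j\rangle$ --- and one must separately treat the degenerate cases in which $q$ meets its bound at a boundary weight rather than at $n/2$. Finally, translating the surviving patterns back into the statement ``computed by the Deutsch--Jozsa algorithm'' uses the isomorphism-invariance of exact quantum query complexity from Fact~\ref{fact1}.
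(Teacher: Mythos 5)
Your reduction to a bounded, symmetrized polynomial $q$ of degree at most $2$ is sound and matches the paper's first step, and your reverse direction is fine. The genuine gap is the claim you rest everything on: that the symmetrized acceptance probability of an exact $1$-query algorithm must be a convex squared-affine function $q(s)=|c_0+c_1(n-2s)|^2$. This is false, and the Deutsch--Jozsa algorithm itself refutes it: its probability of outputting $1$ is $1-\left(\frac{n-2s}{n}\right)^2$, a \emph{concave} parabola. Even the natural repair (``$q$ or $1-q$ is squared-affine'') fails: the algorithm that queries $x_1$ and outputs it exactly computes the function $(0,*,\dots,*,1)$ and has symmetrized acceptance probability $s/n$, which is neither. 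Your derivation breaks exactly where you assert that, after symmetrizing the algorithm, the post-query state depends on $x$ only through $\sum_i(-1)^{x_i}$ and that the outcome probability is a single squared overlap with a fixed vector: the post-query state is $|\psi^{(0)}\rangle+\sum_{i\geq 1}(-1)^{x_i}|\psi^{(i)}\rangle$, which depends on the individual bits (symmetrizing the algorithm averages probabilities; it does not produce a pure state parametrized by $|x|$), and the output-$1$ projector can have rank larger than one, so $P(x)=\sum_k|\langle v_k|O_x|\psi_0\rangle|^2$ is a \emph{sum} of squared affine forms, whose symmetrization can have a negative coefficient on $(n-2s)^2$ (for instance when the queried amplitudes in some $\langle v_k|$-component sum to zero). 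So no convexity follows.

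This gap cannot be patched by polynomial or convexity reasoning alone, which is why the paper takes a different route. Having degree at most $2$ with a $[0,1]$-bounded representation is strictly weaker than $Q_E(f)=1$ for partial functions: the functions $f^{(2)}_{n,k}$ (value $0$ at weight $0$, value $1$ at weights $k,k+1$) and $f^{(1)}_{2h+1,h}$ have exact degree $2$ yet require two queries. The paper first classifies \emph{all} symmetric partial functions of degree $1$ or $2$ up to isomorphism (Theorem \ref{T-deg-less2}), and then settles each family separately: explicit $1$-query algorithms for $f^{(1)}_{n,k}$ with $k\geq\lceil n/2\rceil$ (Proposition \ref{GQZ15}) and for $f^{(3)}_{n,n/2}$ (the Deutsch--Jozsa problem), and, crucially, \emph{non-polynomial} lower bounds for $f^{(2)}_{n,k}$ and $f^{(1)}_{2h+1,h}$ (Theorems \ref{T-k2} and \ref{T-floor-n/2}): assuming a $1$-query exact algorithm exists, orthogonality of the final states on suitably chosen input pairs forces every amplitude of the prepared state to vanish, a contradiction. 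An argument of this kind (or a correct realizability characterization in the spirit of \cite{AAIK15}) is exactly what your crux needs and does not have; note also that, taken literally, your convexity conclusion would misclassify $f^{(1)}_{n,k}$ with $n/2\leq k<n$, which does have $Q_E=1$ even though its natural acceptance polynomial $1-(s-k)^2/k^2$ is concave.
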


To prove the above theorem, we prove the following three results.

\begin{theorem}
Let $n>1$  and let $f:\{0,1\}^n \to\{0,1\}$  be an $n$-bit symmetric and partial Boolean function. Then:

  (1) $\text{\em deg}(f)=1$ if and only if $f$ is isomorphic to the function $f^{(1)}_{n,n}$;

  (2) $\text{\em deg}(f)=2$ if and only if $f$ is isomorphic to one of the functions
\begin{align}
&f^{(1)}_{n,k}(x)=\left\{\begin{array}{ll}
                    0 &\text{\em if}\ |x|=0, \\
                    1 &\text{\em if}\ |x|= k,
                  \end{array}
 \right.\\
&f^{(2)}_{n,k}(x)=\left\{\begin{array}{ll}
                    0 &\text{\em if}\ |x|=0, \\
                    1 &\text{\em if}\ |x|= k\ \text{\em or } |x|=k+1,
                  \end{array}
 \right.\\
 &f^{(3)}_{n,l}(x)=\left\{\begin{array}{ll}
                    0 &\text{\em if}\ |x|=0 \ \text{\em or } |x|= n,\\
                    1 &\text{\em if}\ |x|= l,
                  \end{array}
 \right.\\
&f^{(4)}_{n}(x)=\left\{\begin{array}{ll}
                    0 &\text{\em if}\ |x|=0 \ \text{\em or } |x|= n,\\
                    1 &\text{\em if}\ |x|=\lfloor n/2\rfloor \ \text{\em or } |x|=\lceil n/2\rceil,
                  \end{array}
 \right.
 \end{align}
where $n-1\geq k\geq \lfloor n/2\rfloor$, and $\lceil n/2\rceil\geq l\geq \lfloor n/2\rfloor$.

\end{theorem}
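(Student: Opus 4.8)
The plan is to reduce the computation of $\widetilde{\text{deg}}_0(f)$ for a symmetric partial function to a univariate polynomial problem. By Minsky--Papert symmetrization, if a multilinear polynomial $p$ of degree $d$ satisfies $p(x)=f(x)$ on the domain of definition and $0\le p(x)\le 1$ on $\{0,1\}^n$, then its symmetrization $P(k)=\mathbb{E}_{|x|=k}[p(x)]$ is a univariate polynomial of degree at most $d$ with $P(k)=b_k$ at every defined weight $k$ and $0\le P(k)\le 1$ for all integers $k\in\{0,\dots,n\}$; conversely $p(x)=P(|x|)$ recovers, after multilinear reduction, a polynomial of the same degree. Hence $\widetilde{\text{deg}}_0(f)$ equals the least degree of a real univariate polynomial $P$ interpolating the values $b_k$ and staying in $[0,1]$ on the integer grid $\{0,\dots,n\}$. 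I would first state and prove this equivalence, since it turns the whole theorem into elementary facts about lines and quadratics on a discrete grid.

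For part (1), a degree-$1$ polynomial $P(t)=at+b$ is monotonic, so on $\{0,\dots,n\}$ it attains its extreme values at $k=0$ and $k=n$. Since a nonconstant $f$ must take both values $0$ and $1$, and these are the boundary of $[0,1]$, they can only occur at $k=0$ and $k=n$; any intermediate defined weight would force a value strictly inside $(0,1)$, impossible for a Boolean entry. Thus $f$ is defined only at weights $0$ and $n$ with opposite values, which is exactly $f^{(1)}_{n,n}$ up to the reversal and output-negation isomorphisms listed before the theorem.

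Part (2) is the substantial case. Writing $Z=\{k:b_k=0\}$ and $N=\{k:b_k=1\}$, I would first note that a quadratic $P$ with $P\equiv 0$ on $Z$ and $P\equiv 1$ on $N$ has $Z$ among the at most two roots of $P$ and $N$ among the at most two roots of $P-1$, so $|Z|\le 2$ and $|N|\le 2$. A convexity/chord argument then locates them: if the leading coefficient is negative ($P$ concave) the chord bound forces every zero to sit at an endpoint, so $Z\subseteq\{0,n\}$; dually a positive leading coefficient forces $N\subseteq\{0,n\}$. Using output negation (which flips the sign of the leading coefficient and swaps $Z$ and $N$) together with reversal, I may assume the concave case with $Z\subseteq\{0,n\}$. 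Finally, if $|N|=2$ then the two integer points where concave $P$ meets $y=1$ admit no integer strictly between them, so they are adjacent, $N=\{l,l+1\}$. This leaves exactly the four configurations indexed by $|Z|\in\{1,2\}$ and $|N|\in\{1,2\}$, matching $f^{(1)},f^{(2)},f^{(3)},f^{(4)}$.

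It remains to determine, for each configuration, the exact parameter range for which a valid quadratic exists and to check that no line fits (so the degree is exactly $2$). For $f^{(1)}$ and $f^{(2)}$ I would take a downward parabola with vertex at the one-set and impose $P(n)\ge 0$, whose inequality reduces to $k\ge\lfloor n/2\rfloor$, with $k\le n-1$ forcing the one-set to be interior; for $f^{(3)}$ and $f^{(4)}$ the polynomial is forced to be $c\,t(n-t)$, symmetric about $n/2$, whose grid-maximum sits at $\lfloor n/2\rfloor,\lceil n/2\rceil$, giving the range $\lfloor n/2\rfloor\le l\le\lceil n/2\rceil$ and an adjacent two-point one-set only for odd $n$. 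Conversely, parameters outside these ranges break the integer bound and no line fits any of these non-monotone patterns, so such functions have degree at least $3$. The main obstacle is pinning down these boundaries exactly, and the key subtlety is that undefined weights carry the constraint $0\le P(k)\le 1$ but no Boolean value: this is precisely what lets the parabola for $f^{(1)}_{n,\lfloor n/2\rfloor}$ with odd $n$ pass through $1$ at the undefined weight $(n+1)/2$ and through $0$ at the undefined endpoint $n$, making $k=\lfloor n/2\rfloor$ admissible even though the naive vertex-at-$k$ parabola dips below $0$. Handling every endpoint case (parity of $n$, the coincidence $f^{(4)}_n=f^{(3)}_{n,n/2}$ for even $n$, and $k=n-1$ where the one-set meets an endpoint) is where the care lies; the rest is routine once the symmetrization equivalence and the chord/convexity lemma are in place.
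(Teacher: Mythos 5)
Your proposal is sound and, at the level of the classification argument, takes a genuinely different route from the paper. The starting reduction is the same in substance: the paper's symmetrization lemma produces a representing polynomial $q=\sum_k c_k V_k$, which at Hamming weight $m$ evaluates to $\sum_k c_k \binom{m}{k}$ --- exactly your univariate polynomial $P$ constrained on the integer grid $\{0,\dots,n\}$ (and your converse direction, that $P(|x|)$ multilinearizes to a polynomial of degree $\deg P$, is implicit in the paper's use of the binomial basis). Where you diverge is in how the degree-$2$ functions are pinned down. The paper proves three ``Results'' by explicit coefficient manipulation (Result 1: an interior $0$-weight and an interior $1$-weight cannot coexist; Result 2: any $1$-weight is at least $\lfloor n/2\rfloor$; Result 3: the $1$-set has no gaps) and then runs a five-case analysis over the least $1$-weight, separately for odd and even $n$. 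You instead count roots of $P$ and $P-1$ (giving $|Z|,|N|\le 2$), use a chord/concavity argument to force $Z\subseteq\{0,n\}$ after normalization, and force adjacency of $N$; this is more conceptual, unifies the two parities, and makes the $f^{(3)}/f^{(4)}$ ranges essentially free, since $Z=\{0,n\}$ forces $P=ct(n-t)$, whose grid maximum sits at $\lfloor n/2\rfloor,\lceil n/2\rceil$, and $|N|=2$ forces $l(n-l)=(l+1)(n-l-1)$, i.e.\ odd $n$ and $l=(n-1)/2$. The paper obtains these facts by longer computations.

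The one step your write-up under-delivers is the necessity of $k\ge\lfloor n/2\rfloor$ for the families with $Z=\{0\}$ and $b_n=*$ (your $f^{(1)}_{n,k}$, $f^{(2)}_{n,k}$ cases). ``Parameters outside these ranges break the integer bound'' has to be established for \emph{every} admissible quadratic, not just for the natural vertex-at-$k$ parabola whose failure you exhibit; this is precisely the paper's Result 2, which is where it invests its longest computation. The gap is closable inside your own framework, so it is a missing argument rather than a wrong approach: a line or a convex quadratic is excluded because $N$ contains an interior weight, so $P$ is concave, say $P(t)=M-a(t-t^*)^2$ with $a>0$; then $P(0)=0\le P(n)$ gives $at^{*2}\ge a(n-t^*)^2$, hence $t^*\ge n/2$; and $P(k)=1$ together with $P\le 1$ at the integer nearest $t^*$ gives $a(k-t^*)^2=M-1\le a/4$, so $|k-t^*|\le 1/2$ and $k\ge n/2-1/2$, which for integer $k$ is exactly $k\ge\lfloor n/2\rfloor$. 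With that (or the paper's Result 2 computation) supplied, your plan yields the full theorem.
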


 With the above theorem we can further prove the two theorems as follows.

\begin{theorem}
Let $n$ be even  and let $f:\{0,1\}^n \to\{0,1\}$  be an $n$-bit symmetric and partial function. Then $Q_E(f)=1$ if and only if $f$ is isomorphic to one of these functions: $f^{(1)}_{n,k}$ and $f^{(3)}_{n,n/2}$,
 where $ k\geq n/2$.
\end{theorem}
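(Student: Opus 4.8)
The plan is to combine the degree classification of Theorem 3 with the polynomial lower bound and a direct analysis of one-query algorithms. First, since $Q_E(f)=1>0$ the function is non-constant, and a non-constant function computed exactly by a $t$-query algorithm obeys $1\le\widetilde{\text{deg}}_{0}(f)\le 2t$; hence $Q_E(f)=1$ forces $\widetilde{\text{deg}}_{0}(f)\in\{1,2\}$. By Theorem 3, $f$ is then isomorphic to one of the listed degree-$1$ or degree-$2$ functions. For even $n$ the family $f^{(3)}_{n,l}$ collapses to $l=n/2$ and $f^{(4)}_{n}$ coincides with $f^{(3)}_{n,n/2}$, so the only surviving candidates are $f^{(1)}_{n,k}$ with $n/2\le k\le n$ (the degree-$1$ function $f^{(1)}_{n,n}$ being the case $k=n$), $f^{(2)}_{n,k}$ with $n/2\le k\le n-1$, and $f^{(3)}_{n,n/2}$. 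Since isomorphism preserves exact query complexity (Fact~\ref{fact1}), the theorem reduces to showing $Q_E(f^{(1)}_{n,k})=Q_E(f^{(3)}_{n,n/2})=1$ and $Q_E(f^{(2)}_{n,k})\ge 2$.

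For the ``if'' direction I would supply one-query algorithms. The function $f^{(3)}_{n,n/2}$ is exactly $\text{DJ}_n^{0}$, computed by the Deutsch--Jozsa algorithm, so $Q_E(f^{(3)}_{n,n/2})=1$ by Theorem~\ref{QE(DJ)} with $k=0$. For $f^{(1)}_{n,k}$ I would build a Deutsch--Jozsa-type algorithm directly: let $U_0$ prepare $\sum_{i=0}^n \sqrt{w_i}\,|i,1\rangle$, take $U_1=I$, and measure the projector $\Pi$ onto $\text{span}\{O_x U_0|\psi_s\rangle:\ |x|=k\}$. Writing the post-query state as $|\phi_x\rangle=\sum_{i=0}^n(-1)^{x_i}\sqrt{w_i}\,|i,1\rangle$ (with $x_0:=0$), orthogonality of the index registers gives $\langle\phi_{0^n}|\phi_x\rangle=\sum_i(-1)^{x_i}w_i=w_0+(n-2|x|)\rho$ once $w_i=\rho$ for $i\ge1$. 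Choosing $\rho=1/(2k)$ and $w_0=(2k-n)/(2k)$, which are nonnegative exactly because $k\ge n/2$ and sum to $1$, makes $|\phi_{0^n}\rangle$ orthogonal to every $|\phi_x\rangle$ of weight $k$; hence $\Pi|\phi_{0^n}\rangle=0$ while $\Pi|\phi_x\rangle=|\phi_x\rangle$ for $|x|=k$, and outputting $1$ on the $\Pi$ outcome computes $f^{(1)}_{n,k}$ exactly.

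The crux is the lower bound $Q_E(f^{(2)}_{n,k})\ge 2$, where the polynomial method is insufficient: the polynomial $q(w)=1-\frac{(w-k)(w-k-1)}{k(k+1)}$ has degree $2$, satisfies $q(0)=0$ and $q(k)=q(k+1)=1$, and lies in $[0,1]$ at all of $0,1,\dots,n$, so $\widetilde{\text{deg}}_{0}(f^{(2)}_{n,k})=2$ yields only $Q_E\ge1$. Instead I would exploit the structure of an exact one-query algorithm. Its acceptance probability equals $\langle\phi_x|\Pi|\phi_x\rangle$ for the post-query state $|\phi_x\rangle=\sum_i(-1)^{x_i}|a_i\rangle$ (the $|a_i\rangle$ mutually orthogonal and supported on distinct query registers) and some operator $0\preceq\Pi\preceq I$. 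Exactness forces this value into $\{0,1\}$, so $\Pi|\phi_x\rangle=|\phi_x\rangle$ when $f=1$ and $\Pi|\phi_x\rangle=0$ when $f=0$; consequently $\langle\phi_x|\phi_{x'}\rangle=0$ whenever $f(x)=1$ and $f(x')=0$. Taking $x'=0^n$ gives $\langle\phi_{0^n}|\phi_x\rangle=\sum_i(-1)^{x_i}\|a_i\|^2=0$ for every $x$ with $|x|\in\{k,k+1\}$.

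Comparing two weight-$k$ inputs differing by a swap of a $0$ and a $1$ (possible since $1\le k\le n-1$) forces all $\|a_i\|^2$ with $i\ge1$ to coincide, say equal to $\rho$; the orthogonality relations then read $\|a_0\|^2+(n-2k)\rho=0$ and $\|a_0\|^2+(n-2k-2)\rho=0$, whose difference gives $\rho=0$. But then $|a_i\rangle=0$ for all $i\ge1$, so $|\phi_x\rangle=|a_0\rangle$ is independent of $x$ and no non-constant function can be computed---a contradiction. This establishes $Q_E(f^{(2)}_{n,k})\ge2$, removes $f^{(2)}_{n,k}$ from the candidate list, and together with the upper bounds completes the characterization. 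The subtle point throughout is that the gap between the admissible $f^{(1)},f^{(3)}$ and the inadmissible $f^{(2)}$ is invisible to the degree/representing-polynomial bound and surfaces only through the projector structure of genuine one-query algorithms.
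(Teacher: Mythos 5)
Your proof is correct, and its skeleton coincides with the paper's: reduce via the bound $\widetilde{\text{deg}}_{0}(f)\le 2Q_E(f)$ and the degree-$\le 2$ classification to the three candidate families, then settle $f^{(1)}_{n,k}$, $f^{(3)}_{n,n/2}$ and $f^{(2)}_{n,k}$ one by one, using Fact~\ref{fact1} to transfer complexity along isomorphisms. The differences lie in how the two key ingredients are established. For $Q_E(f^{(1)}_{n,k})=1$ the paper simply cites Proposition~\ref{GQZ15} from \cite{GQZ15}, whereas you construct the one-query algorithm explicitly (amplitudes $w_0=(2k-n)/(2k)$, $w_i=1/(2k)$ for $i\ge 1$, and the projector onto the span of the weight-$k$ post-query states); the construction is valid, since orthogonality of $|\phi_{0^n}\rangle$ to every spanning vector forces $\Pi|\phi_{0^n}\rangle=0$, and it makes the proof self-contained. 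For the crux $Q_E(f^{(2)}_{n,k})\ge 2$, the paper's Theorem~\ref{T-k2} eliminates the quantities $\beta_i$ sequentially with specially chosen inputs and needs a case split ($0<k<n-1$ versus $k=n-1$, the latter only asserted), while you first apply the swap argument to two weight-$k$ inputs to force $\|a_i\|^2=\rho$ for all $i\ge 1$, and then subtract the orthogonality relations at weights $k$ and $k+1$ to get $\rho=0$; this is essentially the symmetrization trick the paper uses for the odd case in Theorem~\ref{T-floor-n/2}, and deploying it here gives a shorter argument that handles $k=n-1$ uniformly. Finally, your remark that the polynomial method alone cannot exclude $f^{(2)}_{n,k}$ (its exact degree is $2$, as the paper also shows in Lemma~\ref{SPF121}) correctly identifies why a direct analysis of one-query algorithms is unavoidable at this point.
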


\begin{theorem}
Let $n$ be odd and let $f:\{0,1\}^n \to\{0,1\}$  be an $n$-bit symmetric and partial  function. Then $Q_E(f)=1$ if and only if $f$ is isomorphic to one of the functions $f^{(1)}_{n,k}$,
 where $ k\geq \lceil n/2 \rceil$.
\end{theorem}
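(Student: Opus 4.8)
The plan is to combine the general lower bound $Q_E(f)\ge \widetilde{\deg}_0(f)/2$ with the preceding classification of degree-$\le 2$ symmetric partial functions, and then decide, for odd $n$, which of the surviving candidates genuinely admit a one-query algorithm. Since $Q_E(f)=1$ forces $1\le\widetilde{\deg}_0(f)\le 2$, the classification theorem above tells us that $f$ must be isomorphic to one of $f^{(1)}_{n,k}$, $f^{(2)}_{n,k}$, $f^{(3)}_{n,l}$, or $f^{(4)}_{n}$. Thus the whole statement reduces to computing $Q_E$ on these four families when $n$ is odd, and I would organize the argument around an explicit description of what a single query can produce for a symmetric function.

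First I would set up the one-query acceptance-probability machinery. Averaging any one-query algorithm over permutations of the input coordinates preserves the computed symmetric function, so its acceptance probability depends only on $|x|$ and equals $P(|x|)=q(n-2|x|)$ for a real quadratic $q$. The realizability of such a $q$ is governed by a symmetric positive-semidefinite matrix $M$ with $0\preceq M\preceq D$, where $D$ is diagonal with nonnegative trace-one entries and $p(x)=v^\top M v$ for the sign vector $v=((-1)^{x_i})_i$ (with the $i=0$ entry fixed to $1$). Diagonalizing the symmetric $M$ yields clean constraints on the coefficients of $q$: the eigenvalue on the subspace orthogonal to the all-ones vector must lie in $[0,1/n]$, together with two $2\times 2$ positivity conditions coming from the all-ones/index-$0$ block. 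I would record these as the precise feasibility conditions for exact one-query computation.

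For the \emph{if} direction I would exhibit, for $k\ge\lceil n/2\rceil$, the quadratic $q(u)=1-(u-(n-2k))^2/(4k^2)$, which satisfies $q(n)=0$ (weight $0$) and $q(n-2k)=1$ (weight $k$) and lies in $[0,1]$ on the grid; checking the feasibility conditions reduces to the single inequality $4k^2\ge n^2$, i.e. $k\ge n/2$, which for odd $n$ is exactly $k\ge\lceil n/2\rceil$. Since $\widetilde{\deg}_0(f^{(1)}_{n,k})\ge 1$ rules out a zero-query algorithm, this gives $Q_E(f^{(1)}_{n,k})=1$, and by Fact~\ref{fact1} the same holds for all functions isomorphic to it.

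The main obstacle is the \emph{only if} direction, namely ruling out the remaining candidates for odd $n$: $f^{(1)}_{n,(n-1)/2}$, $f^{(2)}_{n,k}$, $f^{(3)}_{n,l}$, and $f^{(4)}_{n}$. The delicate point is that several of these (most strikingly $f^{(4)}_n$) admit a degree-$2$ polynomial lying in $[0,1]$ on the entire grid, so the polynomial-degree bound alone does not exclude them; one must invoke the finer feasibility conditions above. Here oddness of $n$ is essential: the grid values $u=n-2|x|$ are all odd and avoid $0$, which forces the interpolating quadratic into a rigid shape. For $f^{(3)}_{n,l}$ and $f^{(4)}_n$ the two outer zeros at $u=\pm n$ force $q$ to be the even polynomial $(n^2-u^2)/(n^2-1)$, whose realizing data would require the orthogonal-complement eigenvalue to exceed $1/n$; for $f^{(1)}_{n,(n-1)/2}$ the conditions $q(n)=0$ and $q(1)=1$ force the leading coefficient to be negative, after which the feasibility inequality becomes $4k^2\ge n^2$ and fails at $k=(n-1)/2$. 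The case $f^{(2)}_{n,k}$ is the hardest, since its leading-coefficient test can be satisfied at $k=(n+1)/2$; there I would use the full $2\times 2$ positivity conditions (equivalently, the separately established value $Q_E(f^{(2)}_{n,k})=2$) to derive infeasibility. Assembling these exclusions with the positive construction yields the claimed characterization.
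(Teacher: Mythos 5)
Your route is genuinely different from the paper's, and much of it checks out, but it has one genuine hole. For comparison: after the same first step ($Q_E(f)=1$ forces $\mathrm{deg}(f)\le 2$, then the degree-$\le 2$ classification), the paper quotes Proposition \ref{GQZ15} for the positive direction, proves the lower bounds for $f^{(2)}_{n,k}$ and $f^{(1)}_{n,\lfloor n/2\rfloor}$ by direct orthogonality arguments on the states of a hypothetical one-query algorithm (Theorems \ref{T-k2} and \ref{T-floor-n/2}), and removes $f^{(3)}_{n,l}$ and $f^{(4)}_{n}$ by domain-restriction reductions to $f^{(1)}_{n,\lfloor n/2\rfloor}$. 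You instead use a feasibility characterization: a symmetrized exact one-query algorithm amounts to a real symmetric $M$ with $0\preceq M\preceq D$, $D$ diagonal with nonnegative trace-one entries $(\beta_0,\beta,\dots,\beta)$, and $v_x^\top M v_x=f(x)$ on the promise, where $v_x=(1,(-1)^{x_1},\dots,(-1)^{x_n})$. This characterization is correct (necessity is immediate; sufficiency needs a dilation argument you assert rather than prove, but it is standard). Your positive construction is exactly the acceptance probability of the Deutsch--Jozsa algorithm run on the input padded with $2k-n$ zeros, and its feasibility threshold is indeed $2k\ge n$. Your exclusion of $f^{(3)}_{n,l}$ and $f^{(4)}_{n}$ is also correct: writing $M_{00}=a$, $M_{0i}=b$, $M_{ii}=c$, $M_{ij}=d$, the constraints force $q(u)=(n^2-u^2)/(n^2-1)$, hence $b=0$, $d=-1/(n^2-1)$, $a+cn=n/(n+1)$; positivity of $M$ then pins $a=0$, $c=1/(n+1)$, so the $(n-1)$-fold eigenvalue is $c-d=n/(n^2-1)>1/n\ge\beta$, contradicting $M\preceq D$.

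The gap is the case you yourself flag as hardest, $f^{(2)}_{n,k}$: you never derive its infeasibility. Saying you would ``use the full $2\times2$ positivity conditions'' is a plan, not an argument, and the fallback of invoking ``the separately established value $Q_E(f^{(2)}_{n,k})=2$'' is circular --- that one query does not suffice for $f^{(2)}_{n,k}$ is precisely what must be proved at this point (moreover only $2\le Q_E(f^{(2)}_{n,k})\le 4$ is known; the exact value is an open question, so there is nothing to cite). The exclusion of $f^{(1)}_{n,(n-1)/2}$ is also under-argued: negativity of the leading coefficient still leaves a one-parameter family of admissible quadratics, and you assert without derivation that feasibility for the whole family ``becomes $4k^2\ge n^2$.'' Both holes can be closed inside your own framework by exploiting exactness harder. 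Since $M\succeq0$ and $v_{0^n}^\top Mv_{0^n}=0$, positive semidefiniteness gives $M\mathbf{1}=0$; since $D-M\succeq0$ and $v_x^\top(D-M)v_x=0$ on every $1$-input, it gives $(D-M)v_x=0$ there. For $f^{(2)}_{n,k}$ the weight-$k$ and weight-$(k+1)$ sign vectors together span all of $\mathbb{R}^{n+1}$, so $D=M$, and then $D\mathbf{1}=M\mathbf{1}=0$ contradicts $\mathrm{tr}(D)=1$. For $f^{(1)}_{n,k}$ the weight-$k$ sign vectors span an $n$-dimensional space, so $D-M=ww^\top$ with $w=((2k-n)\delta,\delta,\dots,\delta)$; then $M\mathbf{1}=0$ forces $\beta_0=2k(2k-n)\delta^2$ and $\beta=2k\delta^2$, trace one rules out $\delta=0$, and $\beta_0\ge0$ forces $2k\ge n$, which fails at $k=(n-1)/2$. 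With those two computations supplied, your proof is complete and arguably more uniform than the paper's, since all four exclusions and the positive direction then come from a single semidefinite criterion.
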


\vskip 5mm

{\em Proof method of Theorem 2:}
Suppose that  $Q_E(f)=1$. Then:

(1) for $n$ being odd, $f$ is isomorphic to one of the functions $f^{(1)}_{n,k}$, where $ k\geq \lceil n/2 \rceil$;

(2) for $n$ being even, $f$ is isomorphic to one of these functions: $f^{(1)}_{n,k}$ and $f^{(3)}_{n,n/2}$,
 where $ k\geq n/2$.

If $f$ is isomorphic to $f^{(1)}_{n,k}$, then we pad $2k-n$ zeros to the input of the function $f^{(1)}_{n,k}$. As a result, it is equivalently to compute the function $f^{(1)}_{2k,k}$.  Clearly $f^{(1)}_{2k,k}$ is a more special problem than Deutsch-Jozsa  problem, and therefore it can also be computed by the Deutsch-Jozsa algorithm.

If $f$ is isomorphic to $f^{(3)}_{n,n/2}$ ($n$ being even), then it is just the Deutsch-Jozsa  problem.

Consequently, $Q_E(f)=1$ implies that $f$ can always be computed by the Deutsch-Jozsa algorithm.

On the other hand, if $f$ can be computed by the Deutsch-Jozsa algorithm, then $Q_E(f)=1$ (here we omit the ordinary case of $f$ being a constant function, and therefore we always suppose $Q_E(f)> 0$).

\subsection{Problems}

A problem for further study is to characterize the symmetric and partial Boolean functions by exact quantum $(k+1)$-query complexity for $k\geq 0$. It can be described  more precisely in the following.

Let $f:\{0,1\}^n\to \{0,1\}$  be an $n$-bit partial symmetric  Boolean function with domain of definition $D$, and let $0\leq k< \lfloor n/2\rfloor$. Then, for  $2k+1\leq \text{ deg}(f)\leq 2(k+1)$, how to characterize $f$ by giving all functions with  degrees from $2k+1$ to $2k+2$? A possible conjecture by means of generalization is  that $f$ may be isomorphic to one of the following functions:
\begin{align}
&f^{(1),k}_{n,m}(x)=\left\{\begin{array}{ll}
                    0 &\text{if}\ |x|\leq k, \\
                    1 &\text{if}\ |x|= m,
                  \end{array}
 \right.\\
&f^{(2),k}_{n,m}(x)=\left\{\begin{array}{ll}
                    0 &\text{if}\ |x|\leq k, \\
                    1 &\text{if}\ |x|= m\ \text{\em or } |x|=m+1,
                  \end{array}
 \right.\\
&f^{(3),k}_{n,l}(x)=\left\{\begin{array}{ll}
                    0 &\text{if}\ |x|\leq k \ \text{\em or } |x|\geq n-k,\\
                    1 &\text{if}\ |x|=l,
                  \end{array}
 \right.\\
&f^{(4),k}_{n}(x)=\left\{\begin{array}{ll}
                    0 &\text{if}\ |x|\leq k \ \text{\em or } |x|\geq n-k,\\
                    1 &\text{if}\ |x|=\lfloor n/2\rfloor \ \text{\em or } |x|=\lceil n/2\rceil,
                  \end{array}
 \right.
 \end{align}
where $m\geq \lfloor n/2\rfloor$ and $\lfloor n/2\rfloor\leq l\leq \lceil n/2\rceil$.

Moreover, how to characterize the symmetrically  partial Boolean function $f$ in terms of its exact quantum $(k+1)$-query complexity? Can any symmetrically  partial Boolean function $f$ with exact quantum $(k+1)$-query complexity  be computed by the presented exact quantum $(k+1)$-query algorithm for computing the generalized Deutsch-Jozsa problem $DJ_n^k$?

Related are another two questions:
\begin{enumerate}
   \item $2\leq Q_E(f^{(2)}_{n,k})\leq 4$ for $n/4\leq k<n$ and $2\leq Q_E(f_n^{(4)})\leq 5$ will be verified in the article. How to determine $Q_E(f^{(2)}_{n,k})$ and $Q_E(f_n^{(4)})$?

   \item  For the function $\text{DW}_n^{k,l}$, defined as:
$$
\text{DW}_n^{k,l}(x)=\left\{\begin{array}{ll}
                    0 &\text{if}\ |x|=k, \\
                    1 &\text{if}\ |x|=l,
                  \end{array}
 \right.
$$
    we will give some optimal exact quantum query algorithms for some special choices of $k$ and $l$. Can we give optimal exact quantum query algorithms for any $k$ and $l$?

 \end{enumerate}

\subsection{Organization}


The remainder of this article is organized as follows. In Section~2 we study the representation of  symmetrically  partial  Boolean  functions with multilinear polynomials, and give a method for finding the approximate (and exact) degree of symmetrically  partial Boolean  functions. Then in Section~3 we investigate the exact quantum and classical deterministic query complexity of a generalized Deutsch-Jozsa problem, that is, the function $\text{DJ}_n^k$, and we present an optimal exact quantum $(k+1)$-query algorithm to compute $\text{DJ}_n^k$, but its classical deterministic query complexity is $n/2+k+1$. After that, in Section~4 we give all symmetrically  partial Boolean functions with exact degree 1 or 2 in the sense of isomorphism. By combining the results of Section~4, in Section 5 we study the exact quantum query complexity for symmetrically  partial Boolean functions with exact degree 1 or 2, and in particular, we present all symmetrically  partial  Boolean functions with exact quantum 1-query complexity as well as prove that these function can be computed by the Deutsch-Jozsa algorithm. In addition, we in Section~6 study further the exact quantum query complexity for some symmetrically  partial Boolean  functions.

\section{Degree of  polynomials for symmetric and partial  functions}

First we study the exact degree of  symmetrically  partial functions.  We can use the method of symmetrization \cite{MP68} to prove the following lemma.

\begin{lemma}\label{Lemma-degree}
For any symmetrically  partial  function $f$ over $\{0,1\}^n$ with domain of definition $D$, suppose $\widetilde{\text{deg}}_{0}(f)\leq d$. Then  there exists a real multilinear polynomial $q$ representing $f$ and $q$ can be written as
\begin{equation}
    q(x)=c_0+c_1V_1+c_2V_2+\cdots+c_dV_d,
\end{equation}
where $c_i\in \mathbf{R}$, $V_{i}=\Sigma_{j_1j_2...j_i\in \{1,2,...,n\}^{i}}x_{j_1}x_{j_2}...x_{j_i}$ where any $j_1j_2...j_i\in \{1,2,...,n\}^{i}$ is without repeated number, $1\leq i\leq d$, for example,
$V_1=x_1+\cdots+x_n$, $V_2=x_1x_2+x_1x_3+\cdots+ x_{n-1}x_n$, etc.
\end{lemma}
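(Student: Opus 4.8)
The plan is to start from any exact multilinear representing polynomial and symmetrize it with the Minsky--Papert averaging operator \cite{MP68}. Since $\widetilde{\text{deg}}_{0}(f)\leq d$, by definition there is a real multilinear polynomial $p$ of degree at most $d$ with $p(x)=f(x)$ for every $x\in D$ and $0\leq p(x)\leq 1$ for every $x\in\{0,1\}^n$. (That $p$ may be taken multilinear is no loss: on the Boolean cube $x_i^2=x_i$ lets us reduce any representing polynomial to a multilinear one of no larger degree.) I would then define
\begin{equation}
q(x)=\frac{1}{n!}\sum_{\sigma}p\big(x_{\sigma(1)},x_{\sigma(2)},\ldots,x_{\sigma(n)}\big),
\end{equation}
the average of $p$ over all coordinate permutations $\sigma$.

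First I would check that $q$ has the required structural properties. Permuting variable names neither raises the degree nor destroys multilinearity, and an average of multilinear polynomials of degree at most $d$ is again multilinear of degree at most $d$; moreover $q$ is by construction invariant under every permutation, hence symmetric. Since each $p(\sigma(x))\in[0,1]$ on the cube, so is their average, and thus $q$ inherits $0\leq q(x)\leq 1$.

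Next I would verify that $q$ still represents $f$ on $D$. This is the step that uses the hypothesis that $f$ is \emph{symmetrically} partial: for $x\in D$ and any permutation $\sigma$, the permuted string has the same Hamming weight, so by definition it again lies in $D$ and takes the same value $f(\sigma(x))=f(x)$. Therefore $p(\sigma(x))=f(\sigma(x))=f(x)$ for every $\sigma$, and averaging yields $q(x)=f(x)$. I expect this to be the main (though short) obstacle: the argument collapses unless the whole orbit of each $x\in D$ stays inside $D$ with a constant value, which is exactly the closure property built into the definition of a symmetrically partial function.

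Finally I would exploit the structure of symmetric multilinear polynomials. Writing $q=\sum_{S\subseteq[n]}c_S\prod_{j\in S}x_j$, symmetry forces $c_{\sigma(S)}=c_S$ for all $\sigma$; as the permutations act transitively on the $i$-element subsets of $[n]$, all coefficients $c_S$ with $|S|=i$ coincide, say equal to a common value. Collecting the degree-$i$ terms gives $\sum_{|S|=i}\prod_{j\in S}x_j=V_i$ (matching the example $V_2=x_1x_2+x_1x_3+\cdots+x_{n-1}x_n$), so the homogeneous part of $q$ of degree $i$ is a scalar multiple $c_iV_i$. Since $q$ has degree at most $d$, all parts with $i>d$ vanish, and we obtain $q(x)=c_0+c_1V_1+c_2V_2+\cdots+c_dV_d$ with $c_i\in\mathbf{R}$, which is precisely the claimed form.
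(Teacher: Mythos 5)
Your proof is correct and follows essentially the same route as the paper's: symmetrize an exact representing polynomial over all coordinate permutations, use the closure property of the symmetrically partial domain $D$ to preserve the representation, and read off the form $c_0+c_1V_1+\cdots+c_dV_d$. The only difference is that where the paper cites Minsky and Papert \cite{MP68} (Lemma 2 in \cite{BW02}) for the final structural step, you prove it directly from transitivity of the permutation action on equal-size subsets, which makes the argument self-contained but is otherwise the same idea.
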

\begin{proof}
Let $p$ be a multilinear polynomial representing $f$ and let $\text{deg}(p)=\widetilde{\text{deg}}_{0}(f)=d$.  If $\pi$ is some permutation and $x=(x_1,\ldots, x_n)$, then $\pi(x)=(x_{\pi(1)},\ldots, x_{\pi(n)})$.  Let $S_n$ be the set of all $n!$ permutations. For any $x\in\{0,1\}^n$, the symmetrization of $p$ is,
 \begin{equation}
    p^{\text{sym}}(x)=\frac{\sum_{\pi\in S_n} p(\pi(x))}{n!}.
\end{equation}
Clearly, $0\leq p(x)\leq 1$ implies $0\leq p^{\text{sym}}(x)\leq 1$ for $x\in\{0,1\}^n$.
Since $f$ is symmetric,  $x\in D$ implies $\pi(x)\in D$. For all $x\in D$, we have $f(\pi(x))=f(x)$.  Since $p$ represents $f$, for any $x\in D$, we have $p(\pi(x))=f(\pi(x))=f(x)=p(x)$. Therefore, for any $x\in D$,
 \begin{equation}
    p^{\text{sym}}(x)=\frac{\sum_{\pi\in S_n} p(\pi(x))}{n!}=\frac{\sum_{\pi\in S_n} p(x)}{n!}=p(x)=f(x).
\end{equation}
So $p^{\text{sym}}$ can represent  $f$. Let the multilinear polynomial $q=p^{\text{sym}}$. According to Minsky and Papert's result \cite{MP68} (also Lemma 2 in \cite{BW02}), $q$ can be written as
\begin{equation}
    q(x)=c_0+c_1V_1+c_2V_2+\cdots+c_dV_d.
\end{equation}
Therefore, the lemma has been proved.
\end{proof}

\begin{example}
Let us  give an example to find out $\widetilde{\text{deg}}_{0}(f)$  for $f=\text{DJ}_n^{0}$, which is the   Deutsch-Jozsa problem. We prove that $\widetilde{\text{deg}}_0(\text{DJ}_n^{0})\leq 2$.
Therefore, we assume that there is a multilinear polynomial $q(x)=c_0+c_1V_1+c_2V_2$ representing $\text{DJ}_n^{0}$.
For $|x|=0$, we have $q(x)=c_0=f(x)=0$.
For $|x|=n$, we have $q(x)={n\choose 0}c_0+{n\choose 1}c_1+{n\choose 2}c_2={n\choose 1}c_1+{n\choose 2}c_2=0$.
For $|x|=\frac{n}{2}$, we have $q(x)={n/2\choose 0}c_0+{n/2\choose 1}c_1+{n/2\choose 2}c_2={n/2\choose 1}c_1+{n/2\choose 2}c_2=1$.
Therefore, we need to find out the solution of the following linear system of equations:
 \begin{equation}
\left\{\begin{array}{ll}
                    c_0=0,\\
                    {n\choose 0}c_0+{n\choose 1}c_1+{n\choose 2}c_2=0,\\
                    {n/2\choose 0}c_0+{n/2\choose 1}c_1+{n/2\choose 2}c_2=1.
                  \end{array}
 \right.
 \end{equation}
It is easy to obtain that $c_0=0$, $c_1=\frac{4(n-1)}{n^2}$, $c_2=-\frac{8}{n^2}$ and
$
 q(x)=\frac{4(n-1)}{n^2}V_1-\frac{8}{n^2}V_2
$
 representing $\text{DJ}_n^{0}$.  Therefore, $\widetilde{\text{deg}}_0(\text{DJ}_n^{0})\leq 2$.

 Suppose that $\widetilde{\text{deg}}_0(\text{DJ}_n^{0})\leq 1$. Then there exists a multilinear polynomial $q(x)=c_0+c_1V_1$ representing $\text{DJ}_n^{0}$. We need to get the solution for the following linear group of equations:
  \begin{equation}
\left\{\begin{array}{ll}
                    c_0=0,\\
                    {n\choose 0}c_0+{n\choose 1}c_1=0,\\
                    {n/2\choose 0}c_0+{n/2\choose 1}c_1=1.
                  \end{array}
 \right.
 \end{equation}
It is easy to deduce that there is no solution. Therefore, $\widetilde{\text{deg}}_0(\text{DJ}_n^{0})> 1$, and  consequently $\widetilde{\text{deg}}_0(\text{DJ}_n^{0})=2$.  The example ends.
 \end{example}

For any total function $f$, with the next lemma it has been proved \cite{BBC+98} (or see \cite{BW02}) that  $Q_E(f)\geq \frac{1}{2}\text{deg}(f)$.

\begin{lemma}{\em \cite{BBC+98,BW02}}\label{lm-bw-1}
Let ${\cal A}$ be a quantum query algorithm that makes $t$ queries. Then there exist complex-valued $n$-variate multilinear polynomials $\alpha_i$ of degree at most $t$, such that the final state of ${\cal A}$  is
\begin{equation}
    \sum_{i\in \{0,1\}^m} \alpha_i(x)|i\rangle
\end{equation}
for every input $x\in\{0,1\}^n$.
\end{lemma}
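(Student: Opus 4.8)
The plan is to prove this by induction on the number of queries, tracking how the amplitudes evolve as polynomials in the input bits $x_1,\ldots,x_n$. I would formulate the induction hypothesis so that, after the partial sequence $U_k O_x U_{k-1}\cdots O_x U_0$ has been applied to the starting state $|\psi_s\rangle$, the intermediate state can be written as $\sum_{i,j}\beta_{i,j}^{(k)}(x)\,|i,j\rangle$, where each coefficient $\beta_{i,j}^{(k)}$ is a complex-valued multilinear polynomial in $x_1,\ldots,x_n$ of degree at most $k$. The lemma is then exactly the case $k=t$, after relabeling $\beta_{i,j}^{(t)}$ as the claimed $\alpha_i$.

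For the base case $k=0$, the state $U_0|\psi_s\rangle$ does not depend on $x$ at all, since $U_0$ is input-independent; hence every amplitude is a constant, i.e.\ a polynomial of degree $0$. For the inductive step I would treat the query and the subsequent unitary separately. The key observation is that, in the model of the Preliminaries, the query $O_x$ acts diagonally in the computational basis: it fixes every amplitude attached to a basis state $|0,j\rangle$, and multiplies the amplitude attached to $|i,j\rangle$ for $i\geq 1$ by the factor $(-1)^{x_i}$. Because $x_i\in\{0,1\}$, we have the identity $(-1)^{x_i}=1-2x_i$, an affine polynomial in the single variable $x_i$. Thus applying $O_x$ to $\sum_{i,j}\beta_{i,j}^{(k)}(x)\,|i,j\rangle$ multiplies each amplitude by at most one such factor, raising the degree by at most $1$, to at most $k+1$. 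The next unitary $U_{k+1}$, being independent of $x$, forms fixed complex linear combinations of these amplitudes, and linear combinations of polynomials of degree at most $k+1$ again have degree at most $k+1$; this closes the induction.

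The single point requiring care, and the step I expect to be the only genuine obstacle, is preserving \emph{multilinearity} after multiplying by $(-1)^{x_i}=1-2x_i$. If $\beta_{i,j}^{(k)}$ already contains $x_i$ in some monomial, the naive product generates a term in $x_i^2$, which is not multilinear. Here I would invoke $x_i^2=x_i$, valid since $x_i\in\{0,1\}$: for a monomial $x_i M$ of $\beta_{i,j}^{(k)}$ with $M$ not involving $x_i$, the product $(1-2x_i)\,x_i M = x_i M - 2x_i^2 M = -x_i M$ collapses back to a multilinear monomial of the same degree, while for a monomial $N$ not involving $x_i$, the product $(1-2x_i)N = N - 2x_i N$ stays multilinear and has degree at most $k+1$. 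After this reduction each $\beta_{i,j}^{(k+1)}$ is a genuine multilinear polynomial of degree at most $k+1$, so the induction goes through. Taking $k=t$ and reading off the amplitudes of the final state $|\psi_f\rangle$ then yields the desired representation $\sum_i \alpha_i(x)\,|i\rangle$ with $\deg \alpha_i \leq t$.
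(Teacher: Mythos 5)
Your proposal is correct, and it is essentially the standard argument: the paper itself states this lemma without proof, citing \cite{BBC+98,BW02}, and the proof given in those references is exactly your induction on the number of queries (amplitudes start as constants, each phase query multiplies an amplitude by $(-1)^{x_i}=1-2x_i$ raising the degree by at most one, the input-independent unitaries take fixed linear combinations, and multilinearity is restored via $x_i^2=x_i$). Your adaptation to the phase-oracle formulation of the Preliminaries, including the explicit check that $(1-2x_i)x_iM=-x_iM$ keeps the representation multilinear without increasing degree, is exactly the step that makes the standard proof go through in this setting.
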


Indeed, according to the proof of Theorem 17 in \cite{BW02} (also refer to \cite{BBC+98}), the following result still holds, and we also give a similar proof.

\begin{lemma}\label{Lemma-lower-bound2}
For any partial Boolean function  $f$,  $Q_{\varepsilon}(f)\geq \frac{1}{2}\widetilde{\text{deg}}_{\varepsilon}(f)$, where $Q_{\varepsilon}(f)$ denotes the quantum query complexity for $f$ with bounded-error $\varepsilon$.
\end{lemma}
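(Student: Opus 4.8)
The plan is to follow the polynomial method of Beals \emph{et al.}~\cite{BBC+98}, specializing the argument to the bounded-error regime and to partial functions. Fix a quantum query algorithm $\mathcal{A}$ that computes $f$ within error $\varepsilon$ using $t=Q_{\varepsilon}(f)$ queries, and let $\{M_0,M_1\}$ be its final measurement, which we may assume to be a measurement in the computational basis. First I would invoke Lemma~\ref{lm-bw-1} to write the final state as $|\psi_f\rangle=\sum_{i\in\{0,1\}^m}\alpha_i(x)|i\rangle$, where each amplitude $\alpha_i$ is a complex-valued multilinear polynomial in $x_1,\dots,x_n$ of degree at most $t$.

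Next I would express the acceptance probability $P(x):=Pr[\mathcal{A}(x)=1]=\|M_1|\psi_f\rangle\|^2$ as a sum $\sum_{i\in B}|\alpha_i(x)|^2$ over the set $B$ of basis states accepted by $M_1$. Writing $\alpha_i=\mathrm{Re}(\alpha_i)+\sqrt{-1}\,\mathrm{Im}(\alpha_i)$, both the real and imaginary parts are \emph{real} multilinear polynomials of degree at most $t$, so $|\alpha_i(x)|^2=\mathrm{Re}(\alpha_i)^2+\mathrm{Im}(\alpha_i)^2$ is a real polynomial of degree at most $2t$. Summing over $i\in B$ and using $x_j^2=x_j$ on the Boolean cube to reduce to multilinear form (which never raises the degree), I obtain that $P$ is a real multilinear polynomial with $\mathrm{deg}\,P\leq 2t$.

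Then I would verify that $P$ satisfies the two requirements of the approximation definition. Condition (2) is immediate: $P(x)$ is a probability, hence $0\leq P(x)\leq 1$ for every $x\in\{0,1\}^n$. For condition (1), take $x\in D$. If $f(x)=1$ then $P(x)=Pr[\mathcal{A}(x)=1]\geq 1-\varepsilon$, giving $|P(x)-f(x)|\leq\varepsilon$; if $f(x)=0$ then $1-P(x)=Pr[\mathcal{A}(x)=0]\geq 1-\varepsilon$, so again $|P(x)-f(x)|=P(x)\leq\varepsilon$. Thus $P$ approximates $f$ with error $\varepsilon$, whence $\widetilde{\text{deg}}_{\varepsilon}(f)\leq\mathrm{deg}\,P\leq 2t=2Q_{\varepsilon}(f)$, which rearranges to the claimed bound.

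The argument is essentially routine once Lemma~\ref{lm-bw-1} is in hand; the only points that require care are the bookkeeping that squaring a degree-$t$ complex amplitude yields a real polynomial of degree at most $2t$ and that the multilinear reduction does not increase the degree, together with the restriction of the approximation condition~(1) to the promised set $D$ (so that the undefined values of the partial function $f$ play no role). I do not anticipate a genuine obstacle: the passage from total to partial $f$ amounts precisely to this restriction of the error condition from all of $\{0,1\}^n$ to $D$, while the requirement $0\leq P(x)\leq 1$ continues to hold on the whole cube since $P$ is always a probability.
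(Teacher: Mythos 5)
Your proposal is correct and follows essentially the same route as the paper's own proof: both invoke Lemma~\ref{lm-bw-1} to obtain degree-$\le t$ amplitude polynomials, form the acceptance probability $p(x)=\sum_{i\in S}|\alpha_i(x)|^2$ as a degree-$\le 2t$ real polynomial, check the approximation condition only on the promised set $D$ and the boundedness condition $0\le p(x)\le 1$ on all of $\{0,1\}^n$, and conclude $\widetilde{\text{deg}}_{\varepsilon}(f)\le 2Q_{\varepsilon}(f)$. Your added detail on splitting amplitudes into real and imaginary parts and on the multilinear reduction is exactly the bookkeeping the paper leaves implicit.
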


\begin{proof}
Consider a  $Q_{\varepsilon}(f)$-query quantum algorithm for $f$ with error $\varepsilon$.  Let $S$ be the set of basis states corresponding to a 1-output. Consider the polynomial $p(x)=\sum_{i\in S}|\alpha_i(x)|^2$, which is the probability that the algorithm outputs 1. If $x\in D$ and $f(x)=1$, then $p(x)\geq 1-\varepsilon$.  If $x\in D$ and $f(x)=0$, then $p(x) \leq \varepsilon$.
Therefore, $|p(x)-f(x)|\leq \varepsilon$ for all $x\in D$. Since the algorithm procedure to get the last state for any input $x$ is the implementation of a sequence of unitary operators, it is clear that $0\leq p(x) \leq 1$ for all $x\in\{0,1\}^n$.  So polynomial $p(x)$ approximates $f$ with error  $\varepsilon$.  According to Lemma \ref{lm-bw-1}, the $\alpha_i$ are polynomials of degree no more than $Q_{\varepsilon}(f)$, therefore $p(x)$ is a polynomial of degree no more than  $2Q_{\varepsilon}(f)$. Consequently, we have
\begin{equation}
    \widetilde{\text{deg}}_{\varepsilon}(f)\leq {\text{deg}}(p)\leq 2Q_{\varepsilon}(f),
\end{equation}
and the lemma has been proved.
\end{proof}

In particular, when $\varepsilon=0$ we have the following special case.

\begin{lemma}\label{Lemma-lower-bound1}
For any symmetrically  partial function  $f$,  $Q_E(f)\geq \frac{1}{2} \widetilde{\text{deg}}_{0}(f)$.
\end{lemma}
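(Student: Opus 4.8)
The plan is to obtain this lemma as an immediate specialization of Lemma \ref{Lemma-lower-bound2} to the case $\varepsilon = 0$. First I would observe that every symmetrically partial function $f$ over $\{0,1\}^n$ is, in particular, a partial Boolean function in the sense required by Lemma \ref{Lemma-lower-bound2}: its symmetry plays no role in that bound, only the fact that it is defined on some domain $D \subseteq \{0,1\}^n$ and takes values in $\{0,1\}$. Hence the hypotheses of Lemma \ref{Lemma-lower-bound2} are satisfied.

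Next I would set $\varepsilon = 0$. By definition, a quantum query algorithm computes $f$ with error $\varepsilon = 0$ exactly when it is exact, so $Q_0(f) = Q_E(f)$; likewise $\widetilde{\text{deg}}_0(f)$ is precisely the exact degree appearing in the statement. Substituting $\varepsilon = 0$ into the inequality $Q_\varepsilon(f) \geq \frac{1}{2}\widetilde{\text{deg}}_\varepsilon(f)$ then yields $Q_E(f) \geq \frac{1}{2}\widetilde{\text{deg}}_0(f)$, as claimed.

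There is essentially no obstacle here, since all the content is already carried by Lemma \ref{Lemma-lower-bound2}. If a self-contained argument were preferred, I would simply replay that proof at $\varepsilon = 0$: take any optimal exact $Q_E(f)$-query algorithm, and by Lemma \ref{lm-bw-1} write its final amplitudes $\alpha_i(x)$ as multilinear polynomials of degree at most $Q_E(f)$. The acceptance probability $p(x) = \sum_{i \in S}|\alpha_i(x)|^2$, with $S$ the set of basis states corresponding to a $1$-output, is then a real multilinear polynomial of degree at most $2Q_E(f)$ satisfying $0 \leq p(x) \leq 1$ for all $x \in \{0,1\}^n$. Because the algorithm is exact, we have $p(x) = f(x)$ for every $x \in D$, so $p$ represents $f$ with error $0$, whence $\widetilde{\text{deg}}_0(f) \leq \text{deg}(p) \leq 2Q_E(f)$, which rearranges to the desired bound. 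The only point worth checking carefully is that exactness forces $p(x) = f(x)$ on all of $D$ (and not merely $|p(x) - f(x)| \leq \varepsilon$), which is precisely the $\varepsilon = 0$ instance of the first half of the argument for Lemma \ref{Lemma-lower-bound2}.
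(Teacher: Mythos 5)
Your proposal is correct and follows exactly the paper's route: the paper introduces this lemma with the words ``In particular, when $\varepsilon=0$ we have the following special case,'' i.e., it is obtained precisely as the $\varepsilon=0$ instance of Lemma \ref{Lemma-lower-bound2}, with no separate proof given. Your optional self-contained replay is also just the $\varepsilon=0$ version of the paper's proof of that lemma, so there is nothing substantively different.
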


We have proved that $\widetilde{\text{deg}}_{0}(\text{DJ}_n^{0})=2$. According to the above lemma, $Q_E(\text{DJ}_n^{0})\geq \frac{1}{2} \widetilde{\text{deg}}_{0}(\text{DJ}_n^{0})=1$. It is known  that $Q_E(\text{DJ}_n^{0})\leq 1$ \cite{DJ92}. Therefore, we can use the above lemma to conclude $Q_E(\text{DJ}_n^{0})=1$.

Now we deal with the case of approximating representation.

\begin{lemma}\label{Lemma-degree-II}
For any symmetrically  partial  Boolean  function $f$ over $\{0,1\}^n$ with domain of definition $D$,
 suppose $\widetilde{\text{deg}}_{\varepsilon}(f)= d$. Then  there exists a real multilinear polynomial $q$ approximates $f$ with error $\varepsilon$ and $q$ can be written as
\begin{equation}
    q(x)=c_0+c_1V_1+c_2V_2+\cdots+c_dV_d
\end{equation}
where $c_i\in \mathbb{R}$, $V_1=x_1+\cdots+x_n$, $V_2=x_1x_2+x_1x_3+\cdots+ x_{n-1}x_n$, etc.
\end{lemma}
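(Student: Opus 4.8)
The plan is to adapt the symmetrization argument used for Lemma~\ref{Lemma-degree} almost verbatim, replacing the exact identity $p(x)=f(x)$ on $D$ by the $\varepsilon$-approximation bound $|p(x)-f(x)|\le\varepsilon$. First I would invoke the definition of $\widetilde{\text{deg}}_{\varepsilon}(f)=d$ to fix a real multilinear polynomial $p$ that approximates $f$ with error $\varepsilon$ and has $\text{deg}(p)=d$. Then I form its symmetrization
\begin{equation}
p^{\text{sym}}(x)=\frac{\sum_{\pi\in S_n}p(\pi(x))}{n!},
\end{equation}
and the goal is to show that $q:=p^{\text{sym}}$ still approximates $f$ with error $\varepsilon$ and, being symmetric, admits the claimed expansion in the elementary symmetric polynomials $V_1,\ldots,V_d$.

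The verification splits into the two conditions in the definition of approximate degree, plus a degree bound. Condition (2), namely $0\le q(x)\le 1$ for all $x\in\{0,1\}^n$, is immediate: averaging the values $p(\pi(x))$, each lying in $[0,1]$, keeps the result in $[0,1]$. Condition (1) is the only step requiring care, and it is where I expect the mild obstacle to lie. Fix $x\in D$. Since $f$ is symmetrically partial, for every permutation $\pi$ we have $\pi(x)\in D$ and $f(\pi(x))=f(x)$, so the approximation guarantee gives $|p(\pi(x))-f(x)|=|p(\pi(x))-f(\pi(x))|\le\varepsilon$. Applying the triangle inequality to the average then yields
\begin{equation}
|q(x)-f(x)|=\left|\frac{\sum_{\pi\in S_n}\big(p(\pi(x))-f(x)\big)}{n!}\right|\le\frac{\sum_{\pi\in S_n}|p(\pi(x))-f(x)|}{n!}\le\varepsilon,
\end{equation}
so $q$ approximates $f$ with error $\varepsilon$ on all of $D$. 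This triangle-inequality passage, together with the symmetry of $D$ and of $f$ on $D$, is the whole content of the adaptation; there is no genuinely hard step beyond it.

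Finally I would observe that permuting variables does not raise degree, so each $p(\pi(x))$ has degree at most $d$ and hence $\text{deg}(q)\le d$. Because $q$ is invariant under $S_n$, Minsky and Papert's result \cite{MP68} (Lemma~2 in \cite{BW02}) applies exactly as in the proof of Lemma~\ref{Lemma-degree}, giving $q(x)=c_0+c_1V_1+c_2V_2+\cdots+c_dV_d$ with $c_i\in\mathbb{R}$, which is the desired form and completes the argument.
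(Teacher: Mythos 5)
Your proof is correct and takes essentially the same route as the paper: symmetrize $p$, verify both conditions in the definition of $\varepsilon$-approximation, and invoke Minsky--Papert's result to obtain the expansion in $V_1,\ldots,V_d$. In fact your triangle-inequality treatment of condition (1) is slightly more careful than the paper's, which asserts the equality $|p^{\text{sym}}(x)-f(x)|=|p(x)-f(x)|$ (not literally valid once $p(\pi(x))$ may differ from $p(x)$), whereas your averaging argument $|p^{\text{sym}}(x)-f(x)|\leq \frac{1}{n!}\sum_{\pi\in S_n}|p(\pi(x))-f(\pi(x))|\leq\varepsilon$ is exactly what is needed.
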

\begin{proof}
The proof is similar to that of Lemma \ref{Lemma-degree}. For the readability, we outline it again.
Let $p$ be a multilinear polynomial with degree $d$ that approximates $f$ with error $\varepsilon$.
The  symmetrization of $p$ is
\begin{equation}
    p^{\text{sym}}(x)=\frac{\sum_{\pi\in S_n} p(\pi(x))}{n!}.
\end{equation}
If $x\in D$, then $|p(x)-f(x)|\leq \varepsilon$. Since $f$ is symmetric, we have $|p^{\text{sym}}(x)-f(x)|=|p(x)-f(x)|\leq \varepsilon$.
Since $0\leq  p(\pi(x)) \leq 1$ for all $x\in\{0,1\}^n$, we have $0\leq p^{\text{sym}}(x)\leq 1$ for all $x\in\{0,1\}^n$.
 According to Minsky and Papert's result \cite{MP68} (also Lemma 2 in \cite{BW02}), $p^{\text{sym}}$ can be written as
\begin{equation}
   p^{\text{sym}}(x)=c_0+c_1V_1+c_2V_2+\cdots+c_dV_d.
\end{equation}
Therefore, $p^{\text{sym}}$ is the polynomial required.
\end{proof}

It is important to determine the approximate degree of symmetrically  partial functions. The following lemma shows if or not a symmetrically  partial function has degree $d$.

\begin{lemma}\label{LP}
For any symmetrically  partial function $f$ over $\{0,1\}^n$ with domain of definition $D$, and for the fixed $d$ and $0\leq \varepsilon<1/2$, there is a linear programming algorithm to discover whether or not there exists
\begin{equation}
    q(x)=c_0+c_1V_1+c_2V_2+\cdots+c_dV_d=\sum_{k=0}^d c_k V_k
\end{equation}
 approximating $f$ with error  $\varepsilon$.
\end{lemma}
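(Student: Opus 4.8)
The plan is to translate the existence question for the symmetric approximating polynomial into a finite system of linear inequalities whose unknowns are the coefficients $c_0, c_1, \ldots, c_d$, and then to invoke standard linear programming. The whole proof is a reduction; there is essentially no analytic difficulty, only a bookkeeping step to verify.

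First I would exploit the fact that any $q$ of the stated form is itself a symmetric function, so its value depends only on the Hamming weight of the input. Indeed, on an input $x$ with $|x| = w$, a monomial appearing in $V_k$ evaluates to $1$ precisely when all $k$ of its variables are set to $1$, and there are exactly $\binom{w}{k}$ such monomials; hence $V_k(x) = \binom{w}{k}$ and
\[
q(x) = \sum_{k=0}^{d} c_k \binom{w}{k} =: P(w).
\]
For each fixed $w \in \{0, 1, \ldots, n\}$ the quantity $P(w)$ is a fixed linear function of the unknowns $c_0, \ldots, c_d$, with known integer coefficients $\binom{w}{k}$.

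Next I would rewrite the two defining conditions of ``approximating $f$ with error $\varepsilon$'' as linear constraints on the $c_k$. Writing $f$ in the vector form $(b_0, \ldots, b_n) \in \{0, 1, *\}^{n+1}$, condition (1) demands $|P(w) - b_w| \le \varepsilon$ for every weight $w$ with $b_w \neq *$, which splits into the two linear inequalities $b_w - \varepsilon \le P(w) \le b_w + \varepsilon$; condition (2) demands $0 \le P(w) \le 1$ for every $w \in \{0, \ldots, n\}$. Altogether this is a system of at most $O(n)$ linear inequalities in the $d+1$ real variables $c_0, \ldots, c_d$, so the existence of a valid $q$ is exactly a linear feasibility problem, decidable by any LP routine (e.g. the simplex method or a polynomial-time interior-point method). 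Reporting feasibility answers the lemma.

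The one point that deserves care, rather than being a genuine obstacle, is that restricting the search to the symmetric form $\sum_{k=0}^{d} c_k V_k$ costs nothing: by Lemma \ref{Lemma-degree-II}, whenever $f$ admits \emph{any} degree-$d$ polynomial approximating it with error $\varepsilon$, the symmetrization of that polynomial is again an $\varepsilon$-approximation of the same degree and of precisely the above form. Consequently, running this linear program for $d = 0, 1, 2, \ldots$ and taking the least $d$ for which it is feasible computes $\widetilde{\text{deg}}_{\varepsilon}(f)$, which is the method for determining the degree promised in the introduction.
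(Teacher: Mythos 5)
Your proposal is correct and follows essentially the same route as the paper: evaluate $V_k$ on a weight-$w$ input as $\binom{w}{k}$, encode the approximation conditions as linear inequalities in $c_0,\ldots,c_d$ (your constraints are equivalent to the paper's three cases for $b_w\in\{0,1,*\}$), and decide feasibility by linear programming. The only cosmetic difference is that the paper spells out an explicit phase-I formulation (maximize $Z$ subject to $Ac+eZ\leq h$, $Z\leq 0$, with feasibility iff $Z^*=0$) where you simply invoke a standard LP feasibility routine, and your closing symmetrization remark (via Lemma~\ref{Lemma-degree-II}) mirrors the paper's discussion immediately following the lemma.
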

\begin{proof}
Suppose that $f$ is fully described by the vector $(b_0,b_1,\ldots,b_n)\in\{0,1,*\}^{n+1}$, where  $f(x)=b_i$ for $|x|=i$.  For input $x$, $V_k={|x| \choose k}$. If there exists a polynomial $q$ with degree $d$  approximating $f$ with error  $\varepsilon$,   then for $0\leq i\leq n$,  $q(x)$ satisfies the following  inequalities and equalities:
\begin{enumerate}
  \item $0\leq q(x)= \sum_{k=0}^d c_k {i\choose k}\leq \varepsilon$ if $b_i=0$;
  \item $1-\varepsilon\leq q(x)= \sum_{k=0}^d c_k {i\choose k}\leq 1$ if $b_i=1$;
  \item  $0\leq q(x)= \sum_{k=0}^d c_k {i\choose k}\leq 1$ if $b_i=*$.
\end{enumerate}
 Therefore, it suffices to verify whether the polyhedra has solution or not. It is easy to transfer the above polyhedra to the normal form, i.e., $P=\{c|Ac\leq h\}$, where  matrix $A\in \mathbb{R}^{2(n+1)\times (d+1)}$ and vector $h\in \mathbb{R}^{2(n+1)}$.  We now consider the following linear programming problem:
\begin{align}
    \text{LP: } \text{Max }&Z,\\
\text{s.t. }&Ac+eZ\leq h,\\
&Z\leq 0,
\end{align}
where $e\in \mathbb{R}^{2(n+1)}$ and $e^T=(1,1,\ldots,1)$. It is clear that $S\neq \emptyset$ if and only if the maximal value $Z^*=0$.
\end{proof}

According to Lemma \ref{Lemma-degree-II}, determining $\widetilde{\text{deg}}_{\varepsilon}(f)$ is equivalent to find out the minimal $d$ such that  $q(x)=\sum_{k=0}^d c_k V_k$ approximates $f$ with error $\varepsilon$.

\begin{theorem}
For any symmetrically  partial Boolean  function $f$ over $\{0,1\}^n$ with domain of definition $D$, and for the fixed $0\leq \varepsilon<1/2$, there exists an algorithm  to find out $\widetilde{\text{deg}}_{\varepsilon}(f)$ with time complexity $O(\log n)\cdot t(LP)$, where $t(LP)$ is the time complexity to use linear programming algorithm to  find the maximal value $Z^*$ in Lemma \ref{LP}.
\end{theorem}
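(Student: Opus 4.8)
\emph{Proof method.} The plan is to turn the determination of $\widetilde{\text{deg}}_{\varepsilon}(f)$ into a search for a threshold degree, where each probe of the search is a single call to the linear-programming feasibility test of Lemma~\ref{LP}. By Lemma~\ref{Lemma-degree-II}, whenever $f$ is approximated within error $\varepsilon$ by some multilinear polynomial of degree $d$, it is also approximated within the same error by a symmetric polynomial of the form $q(x)=\sum_{k=0}^{d} c_k V_k$; hence $\widetilde{\text{deg}}_{\varepsilon}(f)$ equals the smallest $d$ for which such a symmetric $q$ exists. Lemma~\ref{LP} shows that, for each fixed $d$, the existence of such a $q$ is decided exactly by whether the associated linear program attains optimal value $Z^{*}=0$, at cost $t(LP)$.

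First I would record the key structural fact that makes a search efficient: feasibility is \emph{monotone} in $d$. Indeed, if a symmetric polynomial $q(x)=\sum_{k=0}^{d}c_k V_k$ approximates $f$ with error $\varepsilon$, then so does $\sum_{k=0}^{d+1}c_k V_k$ with $c_{d+1}=0$, so feasibility at degree $d$ implies feasibility at every degree $d'\geq d$. Moreover, every Boolean function on $\{0,1\}^n$ has an exact multilinear representation of degree at most $n$, which after symmetrization still has degree at most $n$, so the feasible set of degrees is a nonempty up-set $\{d^{*},d^{*}+1,\ldots,n\}\subseteq\{0,1,\ldots,n\}$ with $d^{*}=\widetilde{\text{deg}}_{\varepsilon}(f)$.

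Given monotonicity, I would locate the threshold $d^{*}$ by binary search over the integer range $\{0,1,\ldots,n\}$: at each step we pick the midpoint $d$, run the linear program of Lemma~\ref{LP}, and move into the lower or upper half according to whether $Z^{*}=0$ (feasible) or $Z^{*}<0$ (infeasible). The search halts at the least feasible $d$, which is precisely $\widetilde{\text{deg}}_{\varepsilon}(f)$. Since the range has $n+1$ values, binary search uses $O(\log n)$ probes, each costing one linear-programming solve of time $t(LP)$, for a total of $O(\log n)\cdot t(LP)$, as claimed. The only point requiring care --- and the nearest thing to an obstacle --- is the justification that binary search is valid, namely the monotonicity argument above together with the observation that the candidate degrees range exactly over $\{0,\ldots,n\}$; once these are in place the complexity bound is immediate.
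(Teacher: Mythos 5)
Your proposal is correct and follows essentially the same route as the paper: a binary search over candidate degrees $d\in\{0,1,\ldots,n\}$, with each probe resolved by the linear-programming feasibility test of Lemma~\ref{LP}, giving $O(\log n)\cdot t(LP)$ total time. Your explicit justification of monotonicity (feasibility at degree $d$ implies feasibility at $d+1$ by padding with $c_{d+1}=0$, and degree at most $n$ always suffices) is left implicit in the paper's loop-invariant argument, so it is a welcome clarification but not a different method.
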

\begin{proof}
 Let $\mathbf{b}=(b_0,b_1,\ldots,b_n)$ be the vector describing $f$.  Let subroutine $\text{LP}( n, \mathbf{b}, \varepsilon,d)=1 ~(0)$ if there does (not) exist polynomial $q$ with degree $d$ approximating $f$ with error $\varepsilon$. The subroutine $\text{LP}( n, \mathbf{b}, \varepsilon,d)$ can be done with a linear programming algorithm according to Lemma \ref{LP}.
We  give a binary search algorithm to find out $\widetilde{\text{deg}}_{\varepsilon}(f)$ as following:

\begin{algorithm}
\caption{Algorithm for finding out $\widetilde{\text{deg}}_{\varepsilon}(f)$}
\begin{algorithmic}[1]
\Procedure{Degree} {{\bf integer} $n$, {\bf array} $\mathbf{b}$,  {\bf real} $\varepsilon$}  \Comment{$\mathbf{b}\in\{0,1,*\}^{n+1}$}
\State {\bf integer} $l:=0, r:=n$;
\While{$l\leq r$}
\State $d=\lfloor(l+r)/2\rfloor$;
\If{$\text{LP}( n, \mathbf{b}, \varepsilon,d)$=0} {$l=d+1$};
\Else \ \  ${r=d-1}$;
\EndIf
\EndWhile\\
\Return $l$;
\EndProcedure
\end{algorithmic}
\end{algorithm}
In each iteration of the `while' loop,  it holds that $\text{LP}( n, \mathbf{b}, \varepsilon,r+1)=1$ and  $\text{LP}( n, \mathbf{b}, \varepsilon,l-1)=0$. We have $\widetilde{\text{deg}}_{\varepsilon}(f)\leq r+1$ and $\widetilde{\text{deg}}_{\varepsilon}(f)> l-1$. When the `while' loop is finished,  we have that $l=r+1$ and    $\widetilde{\text{deg}}_{\varepsilon}(f)\leq r+1=l$. Therefore,   $\widetilde{\text{deg}}_{\varepsilon}(f)= l$. The time complexity is $O(\log n)\cdot t(LP)$.
\end{proof}

\section{Generalized Deutsch-Jozsa problem}

In this section we  consider a generalized Deutsch-Jozsa problem $\text{ DJ}_n^k$ that was described by Eq. (\ref{DJF}), that is, the problem of distinguishing between the inputs of Hamming weight $n/2$ and Hamming weights in the set $\{0, 1,\ldots, k, n-k, n-k+1,\ldots,n\}$ for all even $n$ with $0\leq k<n/2$.

\subsection{Exact quantum algorithm }\label{Exact-Algorithm}

By combining the exact quantum query algorithms for the functions EXACT and THRESHOLD by Ambainis {\em et al} \cite{AIS13}, in this subsection we give an exact quantum query algorithm for computing $\text{ DJ}_n^k$.

\begin{theorem}\label{Th-upper-bound}
The exact quantum query complexity of $\text{\em DJ}_n^k$ satisfies:
\begin{equation}
    Q_E(\text{\em DJ}_n^k)\leq k+1.
\end{equation}
\end{theorem}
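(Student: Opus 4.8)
The plan is to construct an explicit exact quantum algorithm for $\text{DJ}_n^k$ that uses exactly $k+1$ queries, by reducing the problem to the functions $EXACT_n^j$ and $THRESHOLD_n^j$ whose optimal exact algorithms are already available from Ambainis \emph{et al.}~\cite{AIS13}. The key observation is that on the promised domain of $\text{DJ}_n^k$, the input Hamming weight $|x|$ is restricted to $\{0,1,\ldots,k\}\cup\{n/2\}\cup\{n-k,\ldots,n\}$, and the only input weight that must be mapped to output $1$ is $|x|=n/2$. So the task is really to decide whether $|x|=n/2$, under the promise that $|x|$ is either $n/2$ or lies in one of the two ``end'' intervals of width $k+1$.

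First I would recall that the algorithm of \cite{AIS13} computes $THRESHOLD_n^j$ with $\max(j,n-j+1)$ queries exactly, and that by isomorphism (Fact~\ref{fact1}) and symmetry one can build, with $k+1$ queries, an exact test that distinguishes ``small'' weights from ``large'' weights when the gap around the threshold is sufficiently large. Concretely, I would use a threshold-type subroutine tuned at level $k+1$ on one side and at level $n-k$ on the other: one subroutine flags whether $|x|\ge k+1$ (separating the low block $\{0,\ldots,k\}$ from the rest) and another flags whether $|x|\le n-k-1$ (separating the high block $\{n-k,\ldots,n\}$ from the rest). On the promised inputs these two flags are simultaneously true precisely when $|x|=n/2$, since $k<n/2$ guarantees $k+1\le n/2\le n-k-1$. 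The final output is the logical AND of the two flags, which is correct with certainty on every promised input.

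The technical work is to realize this within a \emph{single} quantum algorithm of total query cost $k+1$, rather than running two independent $(k+1)$-query subroutines (which would cost too much). Here I would exploit the structure of the \cite{AIS13} construction: their exact algorithm for a threshold at level roughly $k+1$ already extracts, in $k+1$ queries, enough information about $|x|$ in the relevant regime. Because the promise confines $|x|$ to the three widely separated blocks, a single $(k+1)$-query procedure of the \cite{AIS13} type, followed by a suitable measurement/post-processing that reads off which block $|x|$ falls into, suffices to identify the weight exactly. I would state precisely which of their algorithms to invoke and verify that its output, restricted to the promised weights, separates $\{n/2\}$ from the two end blocks with zero error.

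The main obstacle I anticipate is the bookkeeping needed to show that one $(k+1)$-query routine genuinely distinguishes all three blocks simultaneously, rather than merely handling one threshold. The danger is a hidden factor-of-two blow-up from treating the low and high regions separately. I expect to resolve this by appealing to the symmetry $|x|\leftrightarrow n-|x|$ (the functions $(b_0,\ldots,b_n)$ and $(b_n,\ldots,b_0)$ are isomorphic) so that the two end blocks are handled symmetrically by the same unitaries, and by checking that the level-$(k+1)$ \cite{AIS13} routine already resolves $|x|$ finely enough near both ends within its $k+1$ queries. Once the reduction is pinned down, correctness on promised inputs is a direct verification that $k+1$ queries distinguish the weight $n/2$ from every weight in $\{0,\ldots,k\}\cup\{n-k,\ldots,n\}$, giving $Q_E(\text{DJ}_n^k)\le k+1$.
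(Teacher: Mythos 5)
Your reformulation of the problem is fine: on the promise, $\mathrm{DJ}_n^k(x)=1$ exactly when $|x|\geq k+1$ and $|x|\leq n-k-1$ both hold. The genuine gap is in how you propose to compute these two flags. You invoke the THRESHOLD algorithm of \cite{AIS13} as a black box, but its exact query complexity is $Q_E(THRESHOLD_n^j)=\max(j,n-j+1)$; for the threshold $j=k+1$ with $k<n/2$ this equals $n-k$, not $k+1$ (for $k=0$ it is the $OR_n$ function, costing $n$ queries, whereas Deutsch-Jozsa needs one). So neither one run nor two runs of that algorithm fits the budget, and your claim that a single $(k+1)$-query routine ``of the \cite{AIS13} type'' already reads off which block $|x|$ falls into is precisely the statement that needs proof: the wide gap in the promise does not, by itself, make their total-function algorithm terminate early, and the appeal to the symmetry $|x|\leftrightarrow n-|x|$ only says the two end blocks play symmetric roles; it does not produce an algorithm.

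What is missing is the actual primitive that both \cite{AIS13} and the paper build on: a one-query exact subroutine (called Xquery in the paper, detailed in its Appendix A) which, on an $m$-bit string, either outputs $(0,0)$, certifying $|x|\neq m/2$, or outputs a pair $(i,j)$ with $x_i\neq x_j$. The paper's algorithm simply iterates this: whenever a pair of unequal bits is found, both are discarded (removing exactly one 1 and one 0), so a balanced string stays balanced, while a string with $|x|\leq k$ or $|x|\geq n-k$ becomes constant (all zeros or all ones) within at most $k$ rounds and is then certified unbalanced by the next call. Hence $k$ rounds plus one final call, i.e.\ $k+1$ queries, decide $\mathrm{DJ}_n^k$ exactly, and both end blocks are handled simultaneously with no factor-of-two loss. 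Your outline cannot be completed without introducing this (or an equivalent) pair-elimination mechanism; once it is available, the threshold-flag framing becomes unnecessary.
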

\begin{proof}
 We will give an exact quantum algorithm using $k+1$ queries for $\text{DJ}_n^k$.
One of the important subroutines that we will use in this paper is as following.
\begin{itemize}
  \item Input: $x=x_1,x_2,\ldots,x_m$.
  \item Output: If the output is $(0,0)$ then $|x|\neq m/2$. Otherwise, it will output $(i,j)$ such that $x_i\neq x_j$.
\end{itemize}
We call this subroutine $\mbox{Xquery}$. Let $x\in\{0,1\}^m$.  If $\mbox{Xquery}(m,x)=(0,0)$, then $|x|\neq m/2$. If  $\mbox{Xquery}(m,x)=(i,j)$, then $x_i\neq x_j$.

Indeed, according to \cite{AIS13} by Ambainis {\em et al}, the subroutine $\mbox{Xquery}$  can be implemented in one exact quantum query algorithm, and we put the details concerning the subroutine $\mbox{Xquery}$ in Appendix A.

Based on the subroutine $\mbox{Xquery}$, now we give an algorithm (Algorithm \ref{DJA}) for $\text{DJ}_n^k$. It is clear that Algorithm \ref{DJA} uses at most $k+1$ queries.

\begin{algorithm}
\caption{Algorithm for $\text{DJ}_n^k$}
\label{DJA}
\begin{algorithmic}[1]
 \Procedure{DJ} {{\bf integer} $n$, {\bf integer} $k$, {\bf array} $x$}\Comment{$x\in\{0,1\}^n$}
\State {\bf integer}  $l$:=1
\While{$l\leq k$}
\State Output $\gets$$\mbox{Xquery}(n,x)$
\If{Output=(0,0)} \Return 0
\EndIf
\If{Output=(i, j)}
\State $x\gets x\setminus\{x_i,x_j\}$
\State $l\gets l+1$
\State $n\gets n-2$
\EndIf
\EndWhile
\State Output $\gets$$\mbox{Xquery}(n,x)$
\If{Output=(0,0)} \Return 0
\EndIf
\If{Output=(i, j)}
\Return 1
\EndIf
\EndProcedure
\end{algorithmic}
\end{algorithm}

\end{proof}

\vskip 10mm

\subsection{Lower bound of exact quantum query complexity}\label{low-bound-MI}

 The purpose of this subsection is to prove that the exact quantum query complexity of $\text{DJ}_n^k$ is no less than $k+1$, i.e., $Q_E(\text{ DJ}_n^k)\geq k+1$.

\begin{theorem}\label{Th-lower-bound}
The exact quantum query complexity of $\text{\em DJ}_n^k$ satisfies:
\begin{equation}
    Q_E(\text{\em DJ}_n^k)\geq k+1.
\end{equation}
\end{theorem}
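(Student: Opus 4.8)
The plan is to lower-bound the exact degree $\widetilde{\text{deg}}_{0}(\text{DJ}_n^k)$ and then invoke Lemma \ref{Lemma-lower-bound1}, which guarantees $Q_E(f)\geq \frac{1}{2}\widetilde{\text{deg}}_{0}(f)$ for every symmetrically partial function. Concretely, I would show $\widetilde{\text{deg}}_{0}(\text{DJ}_n^k)\geq 2k+2$, whence $Q_E(\text{DJ}_n^k)\geq \frac{1}{2}(2k+2)=k+1$.

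First I would reduce the multivariate representing polynomial to a univariate one. Suppose $\widetilde{\text{deg}}_{0}(\text{DJ}_n^k)=d$. By Lemma \ref{Lemma-degree} there is a real multilinear polynomial $q(x)=\sum_{i=0}^{d}c_iV_i$ representing $\text{DJ}_n^k$, where $V_i$ ranges over the $\binom{n}{i}$ products of $i$ distinct variables. On any input $x$ with Hamming weight $|x|=w$ one has $V_i=\binom{w}{i}$, a polynomial of degree $i$ in $w$. Hence, setting $p(w):=\sum_{i=0}^{d}c_i\binom{w}{i}$, I obtain a single-variable polynomial of degree at most $d$ that agrees with $\text{DJ}_n^k$ at every admissible Hamming weight.

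Next I would count roots. Because $0\leq k<n/2$, the weight sets $\{0,1,\ldots,k\}$ and $\{n-k,n-k+1,\ldots,n\}$ are disjoint and together contain exactly $2k+2$ distinct integers, and on all of them $\text{DJ}_n^k=0$, so $p$ vanishes at these $2k+2$ points. Moreover $k<n/2<n-k$, so $w=n/2$ lies strictly between the two blocks and is not one of these roots; there $p(n/2)=\text{DJ}_n^k=1\neq 0$. Thus $p$ is a nonzero polynomial with at least $2k+2$ distinct roots, forcing $\deg p\geq 2k+2$. Since $\deg p\leq d$, I conclude $d\geq 2k+2$, i.e.\ $\widetilde{\text{deg}}_{0}(\text{DJ}_n^k)\geq 2k+2$.

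The remaining step is routine: combining with Lemma \ref{Lemma-lower-bound1} yields $Q_E(\text{DJ}_n^k)\geq \frac{1}{2}\widetilde{\text{deg}}_{0}(\text{DJ}_n^k)\geq k+1$. The point requiring care---and the one I would flag as the main obstacle---is the passage from the multilinear polynomial $q$ to the univariate $p(w)$ together with the verification that its $2k+2$ prescribed roots are genuinely distinct and separate from the evaluation point $n/2$. This is exactly where the hypothesis $0\leq k<n/2$ is essential: without it the two root blocks could overlap or $n/2$ could coincide with a root, and the degree count would collapse.
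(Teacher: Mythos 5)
Your proposal is correct, and it reaches the key inequality $\widetilde{\text{deg}}_{0}(\text{DJ}_n^k)\geq 2k+2$ by a genuinely different route than the paper. Both arguments share the same skeleton: invoke Lemma~\ref{Lemma-degree} to get a representing polynomial of the symmetrized form $q(x)=\sum_{i=0}^{d}c_iV_i$, and finish with Lemma~\ref{Lemma-lower-bound1}. But at the core step the paper argues by contradiction through linear algebra: assuming $d\leq 2k+1$, it first kills $c_0,\dots,c_k$ using the constraints at weights $0,1,\dots,k$ (a triangular system), then forms a $(k+1)\times(k+1)$ linear system from the constraints at weights $n,n-1,\dots,n-k$ and proves its determinant is nonzero via a lengthy explicit computation (Appendix~B, Equality~(\ref{eq-det})), forcing $c_{k+1}=\cdots=c_{2k+1}=0$ and contradicting $q=1$ at weight $n/2$. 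You instead pass to the univariate polynomial $p(w)=\sum_{i=0}^{d}c_i\binom{w}{i}$ (legitimate, since $V_i=\binom{|x|}{i}$, exactly as the paper itself notes in Lemma~\ref{LP}) and count roots: $p$ vanishes at the $2k+2$ distinct integers in $\{0,\dots,k\}\cup\{n-k,\dots,n\}$ yet $p(n/2)=1$, so $p\not\equiv 0$ and $\deg p\geq 2k+2$. Your root-counting argument is shorter and more elementary --- it is the standard symmetrization-plus-fundamental-theorem-of-algebra technique from the polynomial method --- and it eliminates the need for Appendix~B entirely; the paper's approach, at the cost of considerably more computation, yields an explicit closed-form value for the determinant, which has some independent combinatorial interest but is not needed for the lower bound itself. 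Your attention to the disjointness of the two root blocks under the hypothesis $k<n/2$ is exactly the right point to flag.
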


\begin{proof}

We will prove that $\widetilde{\text{deg}}_{0}(\text{DJ}_n^k)\geq 2k+2$. Let us consider a simple case $k=1$ and $n\geq 6$ first. Suppose that $\widetilde{\text{deg}}_{0}(\text{DJ}_n^{1})\leq 3$, according to Lemma \ref{Lemma-degree}, there exists a multilinear polynomial $q(x)=\sum_{i=0}^{3}c_iV_i$ representing $\text{DJ}_n^{1}$. For $|x|=0$, we have $q(x)=c_0=f(x)=0$. For $|x|=1$, we have $q(x)=c_0+{1 \choose 1}c_1=f(x)=0$ and therefore $c_1=0$. For $|x|=n, n-1, n/2$, we have the following equations:
  \begin{equation}
\left\{\begin{array}{ll}
                    {n \choose 2}c_2+{n \choose 3}c_3=0,\\
                    {n-1\choose 2}c_2+{n-1\choose 3}c_3=0,\\
                    {n/2\choose 2}c_2+{n/2\choose 3}c_3=1.
                  \end{array}
 \right.
 \end{equation}
Let us consider the determinant
\begin{align}
   \left|
      \begin{array}{ccc}
        {n \choose 2} & {n \choose 3} \\
       {n-1\choose 2} & {n-1\choose 3} \\
      \end{array}
    \right|& =   \left|
      \begin{array}{ccc}
        {n \choose n-2} & {n \choose n-3} \\
       {n-1\choose n-3} & {n-1\choose n-4} \\
      \end{array}
    \right|\\
    &=\frac{1}{n}\left|
      \begin{array}{ccc}
        {n \choose n-2} & {n \choose n-3} \\
       n{n-1\choose n-3} & n{n-1\choose n-4} \\
      \end{array}
    \right|\\
    &=\frac{1}{n}\left|
      \begin{array}{ccc}
        {n \choose n-2} & {n \choose n-3} \\
       (n-2){n\choose n-2} & (n-3){n\choose n-3} \\
      \end{array}
    \right| \\
    &=\frac{1}{n}\left|
      \begin{array}{ccc}
        {n \choose n-2} & {n \choose n-3} \\
       {n\choose n-2} & 0 \\
      \end{array}
    \right|\neq 0.
\end{align}
Therefore, in order to satisfy the first two equations, we have $c_2=c_3=0$. The last equation will not hold, which means that such $q$ does not exist. Thus, $\widetilde{\text{deg}}_{0}(\text{DJ}_n^{1})\geq 4$. According the Lemma \ref{Lemma-lower-bound1}, we have $Q_E(\text{DJ}_n^{1})\geq \frac{1}{2}\widetilde{\text{deg}}_{0}(\text{DJ}_n^{1})\geq 2$. That is to say, the algorithm in Theorem \ref{Th-upper-bound} for  $\text{DJ}_n^{1}$ is optimal. The algorithm in \cite{MJM11} for $\text{DJ}_n^{1}$ is also optimal.

Now we consider for the general case.  Suppose that $\widetilde{\text{deg}}_{0}(\text{DJ}_n^{k})\leq 2k+1$, according to Lemma \ref{Lemma-degree}, there exists a multilinear polynomial $q(x)=\sum_{i=0}^{2k+1}c_iV_i$ representing $\text{DJ}_n^{k}$. For $0\leq |x|\leq k$, $f(x)=0$. Therefore, we have $c_0=c_1=\cdots=c_{k}=0$. For $|x|=n, n-1,\ldots, n-k$, we have the following equations:
  \begin{equation}
\left\{\begin{array}{ll}
                    {n \choose k+1}c_{k+1}+ {n \choose k+2}c_{k+2}+\cdots+{n \choose 2k+1}c_{2k+1}=0,\\
                    {n-1 \choose k+1}c_{k+1}+ {n-1 \choose k+2}c_{k+2}+\cdots+{n-1 \choose 2k+1}c_{2k+1}=0,\\
                    \cdots\\
                    {n-k \choose k+1}c_{k+1}+ {n-k \choose k+2}c_{k+2}+\cdots+{n-k \choose 2k+1}c_{2k+1}=0.\\
                  \end{array}
 \right.
 \end{equation}

Let us consider the determinant (see Appendix B for the detailed proof):
\begin{equation}\label{eq-det}
   \left|
      \begin{array}{cccc}
        {n \choose k+1} & {n \choose k+2}  & \cdots &  {n \choose 2k+1}\\
       {n-1 \choose k+1}& {n-1 \choose k+2}& \cdots & {n-1 \choose 2k+1}\\
       \vdots& \vdots& \ddots & \vdots\\
       {n-k \choose k+1} & {n-k \choose k+2} & \cdots &{n-k \choose 2k+1}\\
      \end{array}
    \right|=(-1)^{\frac{k(k+5)}{2}}\cdot\frac{\prod_{i=k+1}^{2k+1}{n \choose i}}{\prod_{i=1}^{k}{n \choose i}}\neq 0.
\end{equation}
Therefore, we have $c_{k+1}=\cdots=c_{2k+1}=0$.  Then for $|x|=n/2$, $f(x)=q(x)=0$, which is a contradiction. Therefore, $\widetilde{\text{deg}}_{0}(\text{DJ}_n^{k})\geq 2k+2$ and $Q_E(\text{DJ}_n^{k})\geq \frac{1}{2}\widetilde{\text{deg}}_{0}(\text{DJ}_n^{k})\geq k+1$.
\end{proof}



\subsection{Exact classical  query complexity}
\begin{theorem} \label{Th-classical-bound}
The classical deterministic  query complexity of $\text{\em DJ}_n^k$ satisfies:
\begin{equation}
    D(\text{\em DJ}_n^k)=n/2+k+1.
\end{equation}
\end{theorem}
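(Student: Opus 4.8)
The plan is to establish the claimed value $D(\text{DJ}_n^k)=n/2+k+1$ by proving the two matching inequalities separately: an explicit adaptive decision tree for the upper bound and an adversary argument for the lower bound. Throughout I would write $a$ for the number of queried bits revealed to be $1$ and $b$ for the number revealed to be $0$ after some prefix of queries, and I would use that the promised Hamming weights split into three blocks, $L=\{0,\dots,k\}$, the singleton $\{n/2\}$, and $H=\{n-k,\dots,n\}$, on which $\text{DJ}_n^k$ takes the values $0$, $1$, $0$ respectively.

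For the upper bound $D(\text{DJ}_n^k)\le n/2+k+1$ I would exhibit the algorithm that queries bits one at a time in any order, updating $a$ and $b$, and stops as soon as one of the following holds: (i) $a\ge k+1$ and $b\ge k+1$, outputting $1$; (ii) $a\ge n/2+1$, outputting $0$; or (iii) $b\ge n/2+1$, outputting $0$. Correctness follows because $a\ge k+1$ forces $|x|\ge k+1$ and $b\ge k+1$ forces $|x|\le n-k-1$, so in case (i) the only admissible weight is $n/2$; whereas $a\ge n/2+1$ forces $|x|>n/2$ (hence $|x|\in H$) and $b\ge n/2+1$ forces $|x|<n/2$ (hence $|x|\in L$). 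For the query count I would observe that while the algorithm has not yet stopped we must have $\min(a,b)\le k$ and $\max(a,b)\le n/2$, so the number of queries made is $a+b\le n/2+k$; since each query increases $a+b$ by exactly one, a decision is forced on or before the $(n/2+k+1)$-th query.

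For the lower bound $D(\text{DJ}_n^k)\ge n/2+k+1$ I would run an adversary that keeps alive simultaneously one completion of weight $n/2$ (answer $1$) and one completion of weight $n-k$ (answer $0$). The key elementary computation is that, given a partial assignment exposing $a$ ones and $b$ zeros, a weight-$n/2$ completion exists iff $a\le n/2$ and $b\le n/2$, while a weight-$(n-k)$ completion exists iff $a\le n-k$ and $b\le k$; since $k<n/2$, both survive exactly when $a\le n/2$ and $b\le k$. The adversary therefore answers each query by $1$ whenever $a<n/2$ and by $0$ otherwise, which maintains $a\le n/2$ and $b\le k$ throughout the first $n/2+k$ answers (total capacity $n/2+k$). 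Hence after any $n/2+k$ queries both a value-$0$ and a value-$1$ promised input remain consistent with the transcript, so no deterministic algorithm can have committed to an output, and at least one further query is required. Combining the two inequalities yields the theorem.

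The routine parts are the correctness checks of the three stopping rules and the arithmetic $n/2+k+1\le n$ (guaranteeing the algorithm halts within $n$ queries), both of which use only $0\le k<n/2$ and the evenness of $n$. The part needing the most care, and the main obstacle, is pinning the count at exactly $n/2+k+1$ rather than a looser bound: on the upper-bound side this is the observation that ``not yet decided'' is equivalent to $\min(a,b)\le k$ together with $\max(a,b)\le n/2$, and on the lower-bound side it is verifying that the invariant $a\le n/2,\ b\le k$ can be preserved for a full $n/2+k$ steps and that the two surviving completions are genuine promised inputs with opposite output values (weight $n/2$ versus weight $n-k$). Matching these two extremal configurations is precisely what makes the bound tight, so I would present that alignment explicitly; a brief check that weight $n-k$ is the optimal value-$0$ target (as opposed to any other element of $L\cup H$) confirms that $n/2+k$ is the largest ambiguity window the adversary can sustain.
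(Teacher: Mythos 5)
Your proposal is correct and follows essentially the same route as the paper: the upper bound comes from the same counting argument (more than $k$ ones and more than $k$ zeros forces weight $n/2$, while at least $n/2+1$ of either value forces the $0$-blocks), and your adversary answering $1$ until $a=n/2$ and then $0$ is exactly the paper's hard answer sequence of $n/2$ ones followed by $k$ zeros. The only cosmetic difference is that you phrase the upper bound adaptively with early stopping and the lower bound with an explicit invariant, whereas the paper simply makes $n/2+k+1$ queries outright and states the adversary sequence directly.
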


\begin{proof}
 If the first $n/2$ queries return $x_i=1$ and the next $k$ queries return $x_i=0$, then we will need to make another query as well. Therefore,   $D(\text{DJ}_n^k)\geq n/2+k+1$.

Now suppose that we have made $n/2+k+1$ queries. If no more than $k$ queries return $x_i=0$,  then there are more than $n/2+1$ queries returning $x_i=1$ and $\text{DJ}_n^k(x)=0$.
 If no more than $k$ queries return $x_i=1$,  then there are more than $n/2+1$ queries returning $x_i=0$ and $\text{DJ}_n^k(x)=0$.  If there are  more than $k$ queries return $x_i=0$ and also more than $k$ queries return $x_i=1$, then it must be balanced and $\text{DJ}_n^k(x)=1$. Therefore,    $D(\text{DJ}_n^k)\leq n/2+k+1$ and the theorem has been proved.
\end{proof}

\begin{remark}

Again, we make some comparisons to the previous results. When $k=0$, this is the Deutsch-Jozsa problem; when $k=1$, this problem was considered by Montanaro {\em et al} \cite{MJM11} and an exact quantum 2-query algorithm was given to solve it, but the optimality was not verified. Also, the method in \cite{MJM11} is different (the unitary operator in their query algorithm was derived from distinguishing two orthogonal subsets of states).

\end{remark}

So far, according to  $\mathbf{Theorem ~\ref{Th-upper-bound} }$, $\mathbf{Theorem ~\ref{Th-lower-bound} }$, $\mathbf{Theorem ~\ref{Th-classical-bound} }$, our first main result, $\mathbf{Theorem ~\ref{QE(DJ)} }$  has been proved.

\section{symmetrically  partial functions with degree 1 or 2} \label{section-degree}

This section is to give all symmetrically  partial functions with degree 1 or 2 in the isomorphic sense.
  From now on, we just identity $\text{deg}(f)$ with $\widetilde{\text{deg}}_{0}(f)$  for any partial Boolean function $f$. 

\begin{lemma} \label{SPF121}
For $n>1$, then
\begin{equation}
\text{deg}(f^{(1)}_{n,n})=1,
\end{equation}
where $f^{(1)}_{n,n}$ is defined as Eq. (\ref{d1}) with $k=n$,
and the following symmetrically  partial  Boolean functions have degree  2:
\begin{align}
&f^{(1)}_{n,k}(x)=\left\{\begin{array}{ll}
                    0 &\text{\em if}\ |x|=0, \\
                    1 &\text{\em if}\ |x|= k,
                  \end{array}
 \right. \label{d1}   \\
&f^{(2)}_{n,k}(x)=\left\{\begin{array}{ll}
                    0 &\text{\em if}\ |x|=0, \\
                    1 &\text{\em if}\ |x|= k\ \text{\em or } |x|=k+1,
                  \end{array}
 \right.\\
&f^{(3)}_{n,l}(x)=\left\{\begin{array}{ll}
                    0 &\text{\em if}\ |x|=0 \ \text{\em or } |x|= n,\\
                    1 &\text{\em if}\ |x|= l,
                  \end{array}
 \right.\\
&f^{(4)}_{n}(x)=\left\{\begin{array}{ll}
                    0 &\text{\em if}\ |x|=0 \ \text{\em or } |x|= n,\\
                    1 &\text{\em if}\ |x|=\lfloor n/2\rfloor \ \text{\em or } |x|=\lceil n/2\rceil,
                  \end{array}
 \right.
 \end{align}
where $n-1\geq k\geq \lfloor n/2\rfloor$, and $\lceil n/2\rceil \geq l \geq \lfloor n/2\rfloor$. As usual, for odd $n$, $\lfloor n/2\rfloor=(n-1)/2$ and $\lceil n/2\rceil= (n+1)/2$.

\end{lemma}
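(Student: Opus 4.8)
The plan is to reduce every claim to a statement about a single univariate polynomial in the Hamming weight. By Lemma \ref{Lemma-degree}, a symmetric partial function $f$ has $\text{deg}(f)\le d$ if and only if it is represented by some $q=\sum_{j=0}^{d}c_jV_j$ that satisfies $0\le q\le 1$ on all of $\{0,1\}^n$. Since $V_j$ takes the value $\binom{i}{j}$ on every input of weight $i$, and the polynomials $\binom{i}{0},\dots,\binom{i}{d}$ form a (triangular) basis of the univariate polynomials of degree at most $d$, this is equivalent to the existence of a real polynomial $g$ of degree $\le d$ with $g(i)$ equal to the value of $f$ at weight $i$ on each defined weight, and $0\le g(i)\le 1$ for every integer $i\in\{0,1,\dots,n\}$. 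Hence for each function it suffices to (i) exhibit a suitable quadratic (resp.\ linear) $g$, and (ii) rule out a linear (resp.\ constant) one.

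I would dispatch the lower bounds first, as they are short linear-algebra checks. For $f^{(1)}_{n,n}$ a constant cannot separate weights $0$ and $n$, so $\text{deg}\ge 1$. For $f^{(1)}_{n,k}$ with $k\le n-1$, the conditions $g(0)=0,\ g(k)=1$ force the unique linear candidate $g(i)=i/k$, whence $g(n)=n/k>1$ violates boundedness, so $\text{deg}\ge 2$. For $f^{(2)}_{n,k}$, a linear $g$ cannot meet $g(k)=g(k+1)=1$ together with $g(0)=0$. For $f^{(3)}_{n,l}$ and $f^{(4)}_{n}$, a linear $g$ with $g(0)=g(n)=0$ must be identically zero and cannot equal $1$ at an interior weight. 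Thus all four families have degree $\ge 2$.

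For the upper bounds I would exhibit two explicit quadratics. For $f^{(1)}_{n,k}$ and $f^{(2)}_{n,k}$ take
\[
g(i)=\frac{i\,(2k+1-i)}{k(k+1)},
\]
which vanishes at $i=0$, equals $1$ at both $i=k$ and $i=k+1$, and has roots $0$ and $2k+1$; the hypothesis $k\ge\lfloor n/2\rfloor$ gives $2k+1\ge n$, so $g\ge 0$ on $\{0,\dots,n\}$. For $f^{(3)}_{n,l}$ and $f^{(4)}_{n}$ take
\[
g(i)=\frac{i\,(n-i)}{\lfloor n/2\rfloor\,\lceil n/2\rceil},
\]
which vanishes at $0$ and $n$ and, using $n-\lfloor n/2\rfloor=\lceil n/2\rceil$, equals $1$ at both $\lfloor n/2\rfloor$ and $\lceil n/2\rceil$. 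Finally $f^{(1)}_{n,n}$ is represented by the linear $g(i)=i/n\in[0,1]$. Together with the lower bounds this yields the stated degrees.

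The step I expect to be the main obstacle is verifying the upper constraint $g(i)\le 1$ for the two quadratics. Each is a downward parabola whose continuous maximum (at $i=k+\tfrac12$ and at $i=n/2$, respectively) strictly exceeds $1$; the key point is that this vertex lies between two consecutive integers at which $g$ already equals $1$, so that monotonicity of $g$ on either side of the vertex forces $g(i)\le 1$ at every integer $i$. Making this half-integer-vertex argument precise, and separating the even and odd cases where $\lfloor n/2\rfloor\ne\lceil n/2\rceil$, is the only delicate part; the remaining verifications are routine evaluations of binomial expressions.
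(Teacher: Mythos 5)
Your proof is correct and follows essentially the same route as the paper: symmetrization (Lemma \ref{Lemma-degree}) reduces everything to a univariate polynomial in the Hamming weight, the degree-$2$ witnesses you exhibit are exactly the paper's polynomials rewritten in the weight variable (e.g.\ $\frac{2}{k+1}V_1-\frac{2}{k(k+1)}V_2$ evaluates to $\frac{i(2k+1-i)}{k(k+1)}$ on weight $i$), and the half-integer-vertex bound $-(i-k-\tfrac12)^2+\tfrac14\le 0$ is the paper's own argument for $g\le 1$. The only divergence is minor: for $f^{(4)}_n$ with $n$ even the paper invokes the Deutsch--Jozsa algorithm together with $\deg(f)\le 2Q_E(f)$, whereas you use the explicit quadratic $\frac{i(n-i)}{(n/2)^2}$ (whose continuous maximum there equals, rather than exceeds, $1$), which is slightly more self-contained; you also make the $\deg>1$ claims rigorous via uniqueness of the interpolating linear polynomial, where the paper only sketches them.
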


\begin{proof}  Since the polynomial $q(x)=\frac{x_1+x_2+\ldots+x_n}{n}=\frac{1}{n}V_1$ can approximate $f^{(1)}_{n,n}$ with error $0$, $\text{deg}(f^{(1)}_{n,n})=1$ is verified. Next, we prove that the rest functions have degree 2 exactly.

Case 1. For $n-1\geq k\geq \lfloor n/2\rfloor$, it is easy to check that $\text{deg}(f^{(1)}_{n,k})> 1$, since the polynomial $q(x)=\frac{1}{k}V_1>1$ for $|x|\geq k$. As for the proof of $\text{deg}(f^{(1)}_{n,k})=2$, it follows from the following Case 2.

Case 2. For $n-1\geq k\geq \lfloor n/2\rfloor$, it is easy to verify that $\text{deg}(f^{(2)}_{n,k})>1$. Indeed, we can further verify that $\text{deg}(f^{(2)}_{n,k})=2$ in terms of the polynomial $q(x)=\frac{2}{k+1}V_1-\frac{2}{k(k+1)}V_2$.

 (1) First, we have $q(x)=f^{(2)}_{n,k}(x)$ for $|x|=0,k,k+1$.

 (2) Second, it follows $0\leq q(x)\leq 1$ from $q(x)=\frac{2}{k+1}$ for $|x|=1$ and $q(x)=\frac{2}{k+1}{i\choose 1}-\frac{2}{k(k+1)}{i \choose 2}=\frac{i(2k+1-i)}{k(k+1)}$ for $|x|=i\geq 2$. Indeed, since $k\geq \lfloor n/2\rfloor$, we have $q(x)\geq 0$; on the other hand, $i(2k+1-i)-k(k+1)=-(i-k-1/2)^2+1/4\leq -(1/2)^2+1/4\leq 0$, consequently we have $0\leq q(x)\leq 1$.

 So, this $q(x)$ can approximate $f^{(2)}_{n,k}$ with error $0$, and therefore $\text{deg}(f^{(2)}_{n,k})=2$.  Since $\text{deg}(f^{(1)}_{n,k})\leq \text{deg}(f^{(2)}_{n,k})$, we have also $\text{deg}(f^{(1)}_{n,k})=2$ for $n-1\geq k\geq \lfloor n/2\rfloor$.

Case 3. We now verify $\text{deg}(f^{(4)}_{n})=2$.
It is easy to verify that $\text{deg}(f^{(4)}_{n})>1$.
If $n$ is even, the function $f^{(4)}_{n}$ is
the well-known Deutsch-Jozsa problem with $Q_E(f^{(4)}_{n})=1$, and with Lemma \ref{Lemma-lower-bound1} we have $\text{deg}(f^{(4)}_{n})\leq 2$.   Therefore $\text{deg}(f^{(4)}_{n})=2$. If $n$ is odd with $n=2m+1$, then we consider the polynomial $p(x)=\frac{2}{m+1}V_1+\frac{2}{m(m+1)}V_2$. When $|x|=0,n,m,m+1$, we have $p(x)=f^{(4)}_{n}(x)$. If $|x|=1$, then $p(x)=\frac{2}{m+1}$ and $0\leq p(x)\leq 1$. For $2\leq |x|=i\leq n$, then  $p(x)=\frac{2}{m+1}{i\choose 1}-\frac{2}{m(m+1)}{i \choose 2}=\frac{i(2m+1-i)}{m(m+1)}$. According to Case 2 (2) above we also have $0\leq p(x)\leq 1$. Hence $p(x)$ can approximate $f^{(4)}_{n}(x)$ with error 0 and $\text{deg}(f^{(4)}_{n})= 2$.

Case 4.
It is easy to verify that $\text{deg}(f^{(3)}_{n,l})>1$. On the other hand, we always have  $\text{deg}(f^{(3)}_{n,l})\leq \text{deg}(f^{(4)}_{n})=2$. Therefore, we have $\text{deg}(f^{(3)}_{n,l})=2$.

First we note $\text{deg}(f^{(2)}_{n,l})\leq \text{deg}(f^{(3)}_{n,l})$ with $l=\lfloor n/2\rfloor$. Then we only need to give a polynomial $q(x)$ of degree 2 to approximate $f^{(3)}_{n,l}$ with error $0$. Consider this polynomial $q(x)=\frac{4}{n+1}V_1-\frac{8}{(n-1)(n+1)}V_2$. Omitting the details, we can check that $0\leq q(x)\leq 1$ and $q(x)=f^{(3)}_{n,l}(x)$ for $|x|=0,n,\lfloor n/2\rfloor,\lceil n/2\rceil$.

Summarily, the functions above have degree 2 except for the degree of $f^{(1)}_{n,n}$ being 1.

\end{proof}

Indeed, the following lemma shows that those functions in Lemma \ref{SPF121} contain all symmetrically  partial  functions with degree 1 or 2. First, we consider the case of $n$ being odd. Indeed, the case of $n$ being even is similar.

\begin{lemma}\label{SPF122}
Let $n>1$ be odd, and let $f:\{0,1\}^n \to\{0,1\}$  be an $n$-bit symmetrically  partial  Boolean function. Then:

  (1) if $\text{\em deg}(f)=1$, then  $f$ is isomorphic to the function $f^{(1)}_{n,n}$;

  (2) if $\text{\em deg}(f)=2$, then  $f$ is isomorphic to one of the functions
\begin{align}
&f^{(1)}_{n,k}(x)=\left\{\begin{array}{ll}
                    0 &\text{\em if}\ |x|=0, \\
                    1 &\text{\em if}\ |x|= k,
                  \end{array}
 \right.\\
&f^{(2)}_{n,k}(x)=\left\{\begin{array}{ll}
                    0 &\text{\em if}\ |x|=0, \\
                    1 &\text{\em if}\ |x|= k\ \text{\em or } |x|=k+1,
                  \end{array}
 \right.\\
 &f^{(3)}_{n,l}(x)=\left\{\begin{array}{ll}
                    0 &\text{\em if}\ |x|=0 \ \text{\em or } |x|= n,\\
                    1 &\text{\em if}\ |x|= l,
                  \end{array}
 \right.\\
&f^{(4)}_{n,1}(x)=\left\{\begin{array}{ll}
                    0 &\text{\em if}\ |x|=0 \ \text{\em or } |x|= n,\\
                    1 &\text{\em if}\ |x|=\lfloor n/2\rfloor \ \text{\em or } |x|=\lceil n/2\rceil,
                  \end{array}
 \right.
 \end{align}
where $n-1\geq k\geq \lfloor n/2\rfloor$, and $\lceil n/2\rceil \geq l\geq \lfloor n/2\rfloor$.

\end{lemma}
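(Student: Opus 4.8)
The plan is to recast the whole question as one about a single–variable real polynomial and then read off the admissible configurations from elementary facts about quadratics. By Lemma~\ref{Lemma-degree}, if $\text{deg}(f)=d$ then $f$ is represented by some $q=\sum_{j=0}^{d}c_jV_j$; since $V_j$ takes the value $\binom{i}{j}$ on every input of Hamming weight $i$, the one–variable function $P(i):=\sum_{j=0}^{d}c_j\binom{i}{j}$ is a real polynomial in $i$ of degree at most $d$ with $P(i)=b_i$ for every $i$ in the domain and $0\le P(i)\le 1$ for all $i\in\{0,1,\dots,n\}$. Writing $Z=\{i:b_i=0\}$ and $U=\{i:b_i=1\}$ for the defined $0$– and $1$–weights, the elements of $Z$ are integer roots of $P$ and those of $U$ are integer roots of $P-1$; since $f$ is nonconstant, both sets are nonempty.

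For $d=1$ the polynomial $P$ is affine and bounded in $[0,1]$ on $\{0,\dots,n\}$, so it can attain the extreme values $0$ and $1$ only at the endpoints $0$ and $n$; hence $Z,U\subseteq\{0,n\}$ are nonempty and distinct, which is exactly $f^{(1)}_{n,n}$ up to the isomorphisms $i\mapsto n-i$ and $b\mapsto 1-b$. For $d=2$ the polynomial $P$ is a genuine quadratic, so $P$ and $P-1$ each have at most two real roots, giving $|Z|\le 2$ and $|U|\le 2$. Because a function and its output–negation are isomorphic, I may assume the leading coefficient of $P$ is negative, i.e. $P$ opens downward. Then $P\ge 0$ on $\{0,\dots,n\}$ forces the two real roots of $P$ to lie in $(-\infty,0]$ and $[n,+\infty)$, so the only integer roots inside the range are endpoints and $Z\subseteq\{0,n\}$; while $P\le 1$ forbids any integer strictly between the two roots of $P-1$, so $U$ is either a single weight or a pair of consecutive weights $\{u,u+1\}$.

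It then remains to run through the four possibilities for $(|Z|,|U|)$, normalizing left–right by $i\mapsto n-i$. When $|Z|=1$ I place the zero at $0$; the single case $U=\{k\}$ yields $f^{(1)}_{n,k}$ and the consecutive case $U=\{k,k+1\}$ yields $f^{(2)}_{n,k}$. When $|Z|=2$ the two zeros must be $0$ and $n$, so $P$ is symmetric about $n/2$, and the one or two $1$–weights are forced to sit at the integer(s) nearest $n/2$; this gives $f^{(3)}_{n,l}$ with $l\in\{\lfloor n/2\rfloor,\lceil n/2\rceil\}$, and for a consecutive pair it gives $f^{(4)}_{n}$ with the pair equal to $\{\lfloor n/2\rfloor,\lceil n/2\rceil\}$ (these being integers precisely because $n$ is odd). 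The parameter ranges emerge from boundedness: the second root of the downward parabola must be at least $n$, which forces $k\ge\lfloor n/2\rfloor$, while the upper bound $k\le n-1$ just records that $k=n$ would make $P$ affine, i.e. degree $1$.

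The main obstacle is the sharp extraction of the lower endpoint of the ranges in the $(1,1)$ and $(1,2)$ cases. Putting the vertex of the downward parabola exactly at the top $1$–weight $k$ only yields $k\ge\lceil n/2\rceil$; the correct bound $k\ge\lfloor n/2\rfloor$ requires allowing the vertex to lie up to $1/2$ to the right of $k$, so that $k$ remains the maximizing integer while the second root is pushed past $n$, and one must verify simultaneously that $0\le P(i)\le 1$ holds at every integer $i\in\{0,\dots,n\}$ (a short interval analysis on the admissible leading coefficient). The other delicate point is keeping the isomorphism bookkeeping consistent—using $b\mapsto 1-b$ to fix the orientation of the parabola and $i\mapsto n-i$ to fix the side of the zeros—so that the four configurations are matched without omission or double counting. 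Since the forward direction is already supplied by Lemma~\ref{SPF121}, completing this converse establishes the stated characterization.
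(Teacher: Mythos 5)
Your proof is correct, and it takes a recognizably different route from the paper's. Both arguments rest on Lemma \ref{Lemma-degree}, so both ultimately analyze the univariate polynomial $P(i)=\sum_{j}c_j\binom{i}{j}$; but where you argue geometrically about the quadratic --- normalizing by output negation to a downward parabola, noting that $P\ge 0$ on $\{0,\dots,n\}$ forces the roots of $P$ to straddle $[0,n]$ (hence $Z\subseteq\{0,n\}$), that $P\le 1$ leaves no integer strictly between the roots of $P-1$ (hence $U$ is a single weight or a consecutive pair), and extracting the sharp range $k\ge\lfloor n/2\rfloor$ from the vertex window $v\in[k-\tfrac{1}{2},k+\tfrac{1}{2}]$ combined with $v\ge n/2$ --- the paper derives the same three structural facts (its Results 1--3) by explicitly solving the linear constraints for $c_0,c_1,c_2$ and evaluating $q$ at judiciously chosen weights ($n$, $\lceil n/2\rceil$, $l+1$) to reach contradictions, and then closes with a five-way case analysis on the smallest $1$-weight $l$ (tracking the values $b_{l+1}$ and $b_n$) rather than your four-way split on $(|Z|,|U|)$. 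Your version buys brevity and conceptual transparency: the parameter ranges are read off directly from root and vertex positions. The paper's version buys self-contained elementary algebra whose computations carry over nearly verbatim to the even case (Lemma \ref{SPF123}). Two cosmetic points in your write-up deserve tightening: the exclusion of $k=n$ should be phrased as ``the configuration $Z=\{0\}$, $U=\{n\}$ is the function $(0,*,\dots,*,1)$, which admits a degree-$1$ representation, contradicting $\deg(f)=2$'' (the particular quadratic $P$ representing it need not itself be affine); and the simultaneous verification of $0\le P(i)\le 1$ at every integer, which you flag as the main obstacle, is only needed for the achievability direction --- that is Lemma \ref{SPF121}'s job, not this lemma's --- so for the present implication your monotonicity/vertex argument already suffices.
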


\begin{proof}

The lemma can be easily verified  for $n\leq 3$, so we now prove the case of $n>3$. Since the degree of function $f$ to be considered is not $0$, $f$ is not a constant function. Let $(b_0,b_1,\ldots,b_n)\in\{0,1,*\}^{n+1}$ be the vector describing $f$. Then there exist $0\leq i<j\leq n$ such that $b_i=0$ and $b_j=1$,  otherwise, we can consider its isomorphic function $(\bar{b}_0,\bar{b}_1,\ldots,\bar{b}_n)$, instead of $f$. Also, we note that if a polynomial $q$ approximates  function $(b_0,b_1,\ldots,b_n)$ with error 0, then the polynomial $1-q$ can approximate  function $(\bar{b}_0,\bar{b}_1,\ldots,\bar{b}_n)$ with error $0$, and the polynomial $q(\bar{x})$  can approximate  function $(b_n, b_{n-1}, \ldots, b_0)$  with error $0$, as well as the polynomial $1-q(\bar{x})$  can approximate  function $(\bar{b}_n, \bar{b}_{n-1}, \ldots, \bar{b}_0)$  with error $0$.

(1)  If $ \text{ deg}(f) =1$, according to Lemma \ref{Lemma-degree-II}, there is a polynomial $q(x)=c_0+c_1V_1$ that approximates $f$ with error 0. Therefore, we have:

 (a) $q(x)=c_0+c_1\cdot i=0$ for $|x|=i$, and $q(x)=c_0+c_1\cdot j=1$ for $|x|=j$;

  (b)  $0\leq q(x)=c_0+c_1\cdot m\leq 1$ for $0\leq |x|=m\leq n$.

  From (a), we have $c_0=-i/(j-i)$ and $c_1=1/(j-i)$.  However,  (b) implies $0\leq c_0\leq 1$ by taking  $|x|=0$. Therefore $i=0$ and $c_0=0$. In addition, when $|x|=n$, it follows from (b) that $c_1\leq 1/n$. Therefore $j=n$ and $c_1= 1/n$. Moreover, $b_k$ is undefined (i.e., $*$) for $k\not\in \{0,n\}$, otherwise, $q(x)=k/n\neq b_k$.

  Therefore, the symmetrically  partial  Boolean function $f$ must isomorphic to $(0,*,\ldots,*,1)$, i.e. the function $f^{(1)}_{n,n}$.

(2) If $ \text{deg}(f) =2$, with Lemma \ref{Lemma-degree-II}, there is a polynomial $q(x)=c_0+c_1V_1+c_2V_2$ approximating $f$ with error $0$. Suppose that $b_i=0$ and $b_j=1$ for some $0<i<j<n$.
Then we have:

(a) $q(x)=c_0+c_1{i \choose 1}+c_2{i \choose 2}=0$ for $|x|=i$, and $q(x)=c_0+c_1{j \choose 1}+c_2{j \choose 2}=1$ for $|x|=j$;

  (b) $0\leq q(x)=c_0+c_1{m \choose 1}+c_2{m\choose 2}\leq 1$ for $0\leq |x|=m\leq n$.

 By virtue of (a), we have:

  $$c_1=-\frac{i+j-1}{ij}c_0-\frac{i-1}{j(j-i)},$$

  $$c_2=\frac{2}{ij}c_0+\frac{2}{j(j-i)}.$$

  On the other hand, by taking $|x|=0$, (b) implies $0\leq c_0\leq 1$. Now for $|x|=n$ we have
  \begin{align}
q(x)&=c_0+c_1{n \choose 1}+c_2{n \choose 2}\nonumber\\
&=c_0+n(-\frac{i+j-1}{ij}c_0-\frac{i-1}{j(j-i)})+\frac{1}{2}n(n-1)(\frac{2}{ij}c_0+\frac{2}{j(j-i)})\\
&=\frac{(n-i)(n-j)}{ij}c_0+\frac{n(n-i)}{j(j-i)}\\
&\geq \frac{n(n-i)}{j(j-i)}\\
&>1,
 \end{align}
but this contradicts to  $q(x)\leq 1$. This contradiction is derived from the assumption of $b_i=0$ and $b_j=1$ for some $0<i<j<n$. So, by combining with the isomorphic property, we have obtained the following result:

{\it Result 1: If $b_i\in\{0,1\}$ for $0<i<n$, then $b_j\neq \bar{b}_i$ for  $0<j<n$.   }

So, furthermore it suffices to consider $b_0=0$ or $b_n=1$.
We now consider the case $b_0=0$.

Note that $b_0=0$ implies  $c_0=0$.
Suppose that  $b_1=1$. Then we have (1) $q(x)=c_1{1 \choose 1}+c_2{1 \choose 2}=c_1=1$ for $|x|=1$ and, (2) $0\leq q(x)=c_1{m \choose 1}+c_2{m\choose 2}\leq 1$ for $0\leq |x|=m\leq n$.

With (2), by taking $|x|=n$, we have $0\leq c_1{n \choose 1}+c_2{n \choose 2}\leq 1$ and therefore $\frac{-2}{n-1}\leq c_2\leq \frac{-2}{n}$.
 When $|x|=n-2$, with (2) we have
 \begin{align}
q(x)=&c_1{n-2 \choose 1}+c_2{n-2 \choose 2}\\
=& n-2+\frac{1}{2}(n-2)(n-3)c_2\\
\geq & n-2+\frac{1}{2}(n-2)(n-3)\frac{-2}{n-1}\\
>&1,
\end{align}
which is a contradiction to $q(x)\leq 1$. Therefore $b_1\neq 1$.

Indeed, for $1<j< \lfloor n/2\rfloor  $ we also have $b_j\neq 1$:  Suppose that $b_j=1$ for some $1<j<\lfloor n/2\rfloor$. Then we  have

(a) $q(x)=c_1{j \choose 1}+c_2{j\choose 2}=1$ for $|x|=j$;

 (b) $0\leq q(x)=c_1{m \choose 1}+c_2{m\choose 2}\leq 1$ for $0\leq |x|=m\leq n$.

 With (a) we have
 \begin{equation}\label{c2}
 c_2=\frac{2}{j(j-1)}-\frac{2}{j-1}c_1.
  \end{equation}
  With (b), by taking $|x|=n$ and combining with Eq. (\ref{c2}) we have
  \begin{align}
0&\leq nc_1+\frac{1}{2}n(n-1)c_2\\
&= nc_1+\frac{1}{2}n(n-1)(\frac{2}{j(j-1)}-\frac{2}{j-1}c_1)\\
&\leq 1,
\end{align}
which follows
  \begin{equation}
 \frac{n+j-1}{nj}\leq c_1\leq \frac{n-1}{j(n-j)}.
  \end{equation}

  Similarly, with (b), by taking $|x|=\lceil n/2\rceil$ and combining with Eq. (\ref{c2}) we have
  \begin{align}
q(x)&=\frac{n+1}{2} c_1+\frac{1}{2} \frac{n+1}{2} ( \frac{n+1}{2}-1)c_2\\
&=\frac{n+1}{2} c_1+\frac{1}{2}\frac{n+1}{2} (\frac{n+1}{2}-1)(\frac{2}{j(j-1)}-\frac{2}{j-1}c_1)\\
&=\frac{\frac{n+1}{2} (\frac{n+1}{2}-1)}{j(j-1)}+(\frac{n+1}{2}-\frac{\frac{n+1}{2} (\frac{n+1}{2}-1)}{j-1})c_1.
\end{align}
 Therefore, with  $j<\lceil n/2\rceil$, we have
   \begin{align}
q(x)&\geq \frac{\frac{n+1}{2} (\frac{n+1}{2}-1)}{j(j-1)}+(\frac{n+1}{2}-\frac{\frac{n+1}{2} (\frac{n+1}{2}-1)}{j-1})\frac{n-1}{j(n-j)}\\
&=\frac{\frac{n+1}{2}(n-\frac{n+1}{2})}{j(n-j)}.
\end{align}
 Furthermore, with  $j<\lfloor n/2\rfloor$, we have
 \begin{align}
&\frac{n-1}{2}(n-\frac{n-1}{2})-j(n-j)\\
&=(\frac{n-1}{2}-j)(n-\frac{n-1}{2}-j)>0.
\end{align}
 Together with the above equations we obtain $q(x)\geq 1$ for $|x|=\lceil n/2\rceil$, a contradiction. So, we conclude the following result.

  {\it Result 2: If $b_j=1$, then $j\geq \lfloor n/2\rfloor$.}

Let $l$ be the smallest integer satisfying $b_l=1$ (of course, $l\geq \lfloor n/2\rfloor$). Next we will prove that $b_k\neq 1$ for any $k>l+1$.

Assume that $b_k=1$. Then we have $c_0=0$, $c_0+c_1{l \choose 1}+c_2{l \choose 2}=1$ and $c_0+c_1{k \choose 1}+c_2{k \choose 2}=1$. Therefore, $c_0=0$, $c_1=\frac{l+k-1}{lk}$ and $c_2=\frac{-2}{lk}$. For $|x|=l+1<k$, we have
 \begin{align}
q(x)&=c_0+c_1{l+1 \choose 1}+c_2{l+1 \choose 2}\\
&=(l+1)\frac{l+k-1}{lk}+\frac{1}{2}l(l+1)\cdot \frac{-2}{lk}\\
&=\frac{(l+1)(k-1)}{lk}\\
&>1,
\end{align}
 which is a contradiction. Therefore, we have proved this result:

 {\it Result 3: If $l$ is the smallest integer satisfying $b_l=1$, then $b_k\neq 1$ for any $k>l+1$.}

Now let $l$ be the smallest integer satisfying $b_l=1$. Next we complete the proof by considering $l$ with five cases.

(I) $l=n$, i.e., $b_i\neq 1$ for $1\leq i\leq n-1$. Since the isomorphic function $(\bar{b}_n,\bar{b}_{n-1},\ldots,\bar{b}_0)$ also has degree 2 and $\bar{b}_n=0, \bar{b}_0=1$, according to the above Result 3, we have $\bar{b}_i\neq 1$ for $2\leq i\leq n-1$, that is $b_i\neq 0$ for $2\leq i\leq n-1$. Therefore $f$ can be described by $\mathbf{b}=(0,b_1,*,\ldots,*,1)$. If $b_1=*$, then $\text{deg}(f)=1$. So, we conclude $f$ must be the formulation $\mathbf{b}=(0,0,*,\ldots,*,1)$, and it is isomorphic to $f^{(2)}_{n,n-1}$.

(II) $l=n-1$, i.e., $b_i\neq 1$ for $1\leq i<n-1$. From Result 1 we know that $b_j\neq \bar{b}_l$ for $0<j<n$, that is $b_j\neq 0$ for $0<j<n$. As a result, $f$ has this formulation $\mathbf{b}=(0,*,\ldots,*,1,b_n)$. However, if $b_n=0$, then it is isomorphic to $(0,1,*,\ldots,0)$, which results in $\text{deg}(f)>2$. Therefore, $b_n=1$ and $f$ is described by $\mathbf{b}=(0,*,\ldots,*,1,1)$, exactly the function $f^{(2)}_{n,n-1}$ as well.

 (III) $\lceil n/2\rceil<l<n-1$, that is, $b_i\neq 1$ for $1\leq i<l$ and $l+1<i\leq n$. From Result 1 it follows that $b_j\neq 0$ for $0<j<n$. As a consequence, $f$ has the formulation $\mathbf{b}=(0,*,\ldots,*,1,b_{l+1},*,\ldots,*,b_n)$ where $b_n\neq 1$. If $b_n=0$, then $f$ is isomorphic to $(0,*,\ldots,*,b_{l+1},1,*,\ldots,*,0)$, which from Result 2 follows $\text{deg}(f)>2$, a contradiction. So, it holds that $b_n=*$, and $f$ thus has the form $\mathbf{b}=(0,*,\ldots,*,1,b_{l+1},*,\ldots,*)$. In this representation, $b_{l+1}=*$ implies the function $f=f^{(1)}_{n,l}$; and $b_{l+1}=1$ results in $f=f^{(2)}_{n,l}$.

 (IV) $l=\lceil n/2\rceil$. Then $b_i\neq 1$ for $1\leq i<(n+1)/2$ and $(n+1)/2+1<i\leq n$.  With Result 1 we have also $b_j\neq 0$ for $0<j<n$. Now it concludes that $f$ has this representation $\mathbf{b}=(0,*,\ldots,*,1,b_{l+1},*,\ldots,b_n)$. Furthermore, if $b_{l+1}=*$ and $b_n=*$, then $f=f^{(1)}_{n,l}$; if $b_{l+1}=*$ and $b_n=0$, then  $f=f^{(3)}_{n,l}$;
if $b_{l+1}=1$ and $b_n=*$, then $f=f^{(2)}_{n,l}$.  Finally, both $b_{l+1}=1$ and $b_n=0$ result in $\text{deg}(f)>2$, which is an impossible case.

(V) $l=\lfloor n/2\rfloor$. Then  $b_i\neq 1$ for $1\leq i<(n-1)/2$ and $(n-1)/2+1<i\leq n$.  Result 1 also implies $b_j\neq 0$ for $0<j<n$.  Therefore $f$ has this representation $\mathbf{b}=(0,*,\ldots,*,1,b_{l+1},*,\ldots,b_n)$. In addition, if $b_{l+1}=*$ and $b_n=*$, then $f=f^{(1)}_{n,l}$; if $b_{l+1}=*$ and $b_n=0$, then  $f=f^{(3)}_{n,l}$;
if $b_{l+1}=1$ and $b_n=*$, then  $f=f^{(2)}_{n,l}$;  if $b_{l+1}=1$ and $b_n=0$, then  $f=f^{(4)}_{n,1}$.

So, the case of $b_0=0$ has been proved. Finally, we consider the case $b_n=1$ with an isomorphic method. Because $\mathbf{b}=(b_0,b_1,\ldots,b_n)$ is isomorphic to $\mathbf{b}'=(\bar{b}_n,\bar{b}_{n-1},\ldots,\bar{b}_0)$, we have $\text{deg}(\mathbf{b}')=\text{deg}(\mathbf{b})=2$. Since $\bar{b}_n=0$, as we have proved above,  the function $\mathbf{b}'$ must be isomorphic to the one of the following functions: $f^{(1)}_{n,k}$, $f^{(2)}_{n,k}$, $f^{(3)}_{n,l}$ and $f^{(4)}_{n,1}(x)$, where $n\geq k\geq \lfloor n/2\rfloor $ and $\lfloor n/2\rfloor \leq l\leq \lceil n/2\rceil $. Therefore, $f$  must be isomorphic to the one of the following functions: $f^{(1)}_{n,k}$, $f^{(2)}_{n,k}$, $f^{(3)}_{n,l}$ and $f^{(4)}_{n,1}(x)$, where $n\geq k\geq \lfloor n/2\rfloor $ and $\lfloor n/2\rfloor \leq l\leq \lceil n/2\rceil $.

\end{proof}

If $n$ is an even, then with a similar process of proof to the case of $n$ being odd we have the following result.

\begin{lemma}\label{SPF123}
Let $n>1$ be even, and let $f:\{0,1\}^n \to\{0,1\}$  be an $n$-bit symmetrically  partial Boolean function. Then:

  (1) if $\text{\em deg}(f)=1$,  then $f$ is isomorphic to the function $f^{(1)}_{n,n}$;

  (2) if $\text{\em deg}(f)=2$,  then is isomorphic to one of the functions
\begin{align}
&f^{(1)}_{n,k}(x)=\left\{\begin{array}{ll}
                    0 &\text{\em if}\ |x|=0, \\
                    1 &\text{\em if}\ |x|= k,
                  \end{array}
 \right.\\
&f^{(2)}_{n,k}(x)=\left\{\begin{array}{ll}
                    0 &\text{\em if}\ |x|=0, \\
                    1 &\text{\em if}\ |x|= k\ \text{\em or } |x|=k+1,
                  \end{array}
 \right.\\
 &f^{(3)}_{n,n/2}(x)=\left\{\begin{array}{ll}
                    0 &\text{\em if}\ |x|=0 \ \text{\em or } |x|= n,\\
                    1 &\text{\em if}\ |x|= n/2,
                  \end{array}
 \right.
 \end{align}
where $n-1\geq k\geq n/2$.

\end{lemma}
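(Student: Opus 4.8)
The plan is to follow the same symmetrization-and-case-analysis strategy used for the odd case in Lemma~\ref{SPF122}, adapting each step to the fact that for even $n$ one has $\lfloor n/2\rfloor=\lceil n/2\rceil=n/2$. As before, small values of $n$ are checked by hand, the constant function is discarded, and by passing to an isomorphic copy among $(b_0,\ldots,b_n)$, $(\bar b_0,\ldots,\bar b_n)$, $(b_n,\ldots,b_0)$, $(\bar b_n,\ldots,\bar b_0)$ we may assume $b_i=0$ and $b_j=1$ for some $i<j$. For $\text{deg}(f)=1$ the argument is verbatim that of part~(1) of Lemma~\ref{SPF122}: writing $q=c_0+c_1V_1$ and evaluating at $|x|=0$ and $|x|=n$ forces $i=0$, $j=n$, $c_0=0$, $c_1=1/n$, and every intermediate weight undefined, so $f$ is isomorphic to $f^{(1)}_{n,n}$.

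For $\text{deg}(f)=2$ I would reuse the three auxiliary facts established in the odd proof, whose derivations are parity-independent. First, Result~1 (if $b_i=0$ and $b_j=1$ with $0<i<j<n$, then evaluating $q=c_0+c_1V_1+c_2V_2$ at $|x|=n$ yields $q(x)\ge \frac{n(n-i)}{j(j-i)}>1$, a contradiction) again reduces us to the situation $b_0=0$ or, dually, $b_n=1$; take $b_0=0$, forcing $c_0=0$. Result~3 (if $l$ is the least index with $b_l=1$, then $b_k\neq1$ for all $k>l+1$) carries over unchanged, since its proof only uses $q(l+1)=\frac{(l+1)(k-1)}{lk}>1$. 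The one step that must be re-run rather than quoted is Result~2 ($b_l=1\Rightarrow l\ge n/2$): here I would evaluate the approximating polynomial at the single central weight $|x|=n/2$ (in place of $\lceil n/2\rceil$) and combine it with the value at $|x|=n$ to rule out $l<n/2$, so that the least $1$-index satisfies $l\ge n/2$.

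With $l\ge n/2$ fixed, I would finish by a case analysis on $l$ exactly parallel to cases (I)--(V) of Lemma~\ref{SPF122}, except that the two central cases (IV) $l=\lceil n/2\rceil$ and (V) $l=\lfloor n/2\rfloor$ now coalesce into the single case $l=n/2$. The cases $l=n$, $l=n-1$, and $n/2<l<n-1$ produce $f^{(2)}_{n,n-1}$, $f^{(2)}_{n,n-1}$, and ($f^{(1)}_{n,l}$ or $f^{(2)}_{n,l}$) respectively, just as before. In the merged case $l=n/2$, Result~1 forces $b_j\neq0$ for $0<j<n$, so $\mathbf{b}=(0,*,\ldots,*,1,b_{n/2+1},*,\ldots,b_n)$, and the subcases $(b_{n/2+1},b_n)=(*,*),(*,0),(1,*)$ give $f^{(1)}_{n,n/2}$, $f^{(3)}_{n,n/2}$, $f^{(2)}_{n,n/2}$, respectively.

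The main obstacle, and the essential point distinguishing even from odd $n$, is the remaining subcase $(b_{n/2+1},b_n)=(1,0)$, i.e.\ the configuration with $1$'s at both $n/2$ and $n/2+1$ and $0$'s at both endpoints. For odd $n$ this configuration was the genuine degree-$2$ function $f^{(4)}_{n,1}$, but for even $n$ it must instead be shown to have degree strictly greater than $2$, which is precisely why no analogue of $f^{(4)}$ appears in the statement. I would establish this by imposing $q(n/2)=q(n/2+1)=1$ and $q(n)=0$ on $q=c_1V_1+c_2V_2$: subtracting the first two equations gives $c_1+\tfrac{n}{2}c_2=0$, while the third gives $c_1+\tfrac{n-1}{2}c_2=0$, forcing $c_2=0$ and then $c_1=0$, which contradicts $q(n/2)=1$. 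Hence this configuration is impossible, and the enumeration closes. Finally, the dual case $b_n=1$ is handled, as in the odd proof, by applying the result to the isomorphic function $(\bar b_n,\ldots,\bar b_0)$.
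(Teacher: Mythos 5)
Your proposal is correct and takes exactly the route the paper intends: the paper's own ``proof'' of this lemma consists solely of the remark that it follows ``with a similar process of proof to the case of $n$ being odd,'' and your write-up is a faithful working-out of that adaptation, correctly noting that Results 1 and 3 are parity-independent while Result 2 must be re-derived at the central weight $n/2$ (your bound $q(n/2)\geq (n/2)^2/\bigl(j(n-j)\bigr)>1$ for $j<n/2$ does go through). In fact you supply the one genuinely substantive step the paper leaves implicit — that for even $n$ the configuration $(b_0,b_{n/2},b_{n/2+1},b_n)=(0,1,1,0)$ admits no degree-2 representation, your elimination $c_1+\tfrac{n}{2}c_2=0$, $c_1+\tfrac{n-1}{2}c_2=0$ forcing $c_1=c_2=0$ being correct — which is precisely why no analogue of $f^{(4)}$ appears in the even-$n$ statement.
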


Therefore, combining Lemmas \ref{SPF121} with \ref{SPF122} and \ref{SPF123}, we have the following two lemmas.

\begin{lemma} \label{odd}
Let $n>1$ be odd, and let $f:\{0,1\}^n \to\{0,1\}$  be an $n$-bit symmetrically  partial  Boolean function. Then:

  (1) $\text{\em deg}(f)=1$ if and only if $f$ is isomorphic to the function $f^{(1)}_{n,n}$;

  (2) $\text{\em deg}(f)=2$ if and only if $f$ is isomorphic to one of the functions
\begin{align}
&f^{(1)}_{n,k}(x)=\left\{\begin{array}{ll}
                    0 &\text{\em if}\ |x|=0, \\
                    1 &\text{\em if}\ |x|= k,
                  \end{array}
 \right.\\
&f^{(2)}_{n,k}(x)=\left\{\begin{array}{ll}
                    0 &\text{\em if}\ |x|=0, \\
                    1 &\text{\em if}\ |x|= k\ \text{\em or } |x|=k+1,
                  \end{array}
 \right.\\
 &f^{(3)}_{n,l}(x)=\left\{\begin{array}{ll}
                    0 &\text{\em if}\ |x|=0 \ \text{\em or } |x|= n,\\
                    1 &\text{\em if}\ |x|= l,
                  \end{array}
 \right.\\
&f^{(4)}_{n,1}(x)=\left\{\begin{array}{ll}
                    0 &\text{\em if}\ |x|=0 \ \text{\em or } |x|= n,\\
                    1 &\text{\em if}\ |x|=\lfloor n/2\rfloor \ \text{\em or } |x|=\lceil n/2\rceil,
                  \end{array}
 \right.
 \end{align}
where $n-1\geq k\geq \lfloor n/2\rfloor$, and $\lceil n/2\rceil \geq l\geq \lfloor n/2\rfloor$.

\end{lemma}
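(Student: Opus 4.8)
The plan is to derive this lemma as an immediate consequence of Lemma \ref{SPF121} and Lemma \ref{SPF122}, the only additional ingredient being that the exact degree $\widetilde{\text{deg}}_0$ is invariant under isomorphism. Each biconditional splits into an ``only if'' and an ``if'' direction. The ``only if'' directions are precisely the content of Lemma \ref{SPF122}: if $\text{deg}(f)=1$ then $f$ is isomorphic to $f^{(1)}_{n,n}$, and if $\text{deg}(f)=2$ then $f$ is isomorphic to one of $f^{(1)}_{n,k}$, $f^{(2)}_{n,k}$, $f^{(3)}_{n,l}$, $f^{(4)}_{n,1}$, so nothing new is required there. The ``if'' directions require that each listed representative actually realizes the claimed degree, which is exactly Lemma \ref{SPF121}, together with transfer of degree along an isomorphism.

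Accordingly, the first step I would carry out is to record that exact degree is an isomorphism invariant. An isomorphism is built from three operations: permuting the input variables, negating some input variables $x_i\mapsto 1-x_i$, and negating the output $f\mapsto 1-f$. If a multilinear polynomial $q$ represents a symmetrically partial function $g$ with error $0$, then permuting the variables leaves $\text{deg}(q)$ unchanged; substituting $1-x_i$ for $x_i$ keeps the polynomial multilinear and cannot raise its degree, and, being an involution, cannot lower it either; and $1-q$ has the same degree as $q$ whenever $\text{deg}(q)\geq 1$. These are exactly the observations already used inside the proof of Lemma \ref{SPF122}, where $1-q$, $q(\bar x)$, and $1-q(\bar x)$ were noted to represent the three isomorphic variants. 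Hence isomorphic functions share the same exact degree.

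With invariance in hand I would assemble the two biconditionals. For part (1): if $\text{deg}(f)=1$, then Lemma \ref{SPF122}(1) gives that $f$ is isomorphic to $f^{(1)}_{n,n}$; conversely, if $f$ is isomorphic to $f^{(1)}_{n,n}$, then by invariance and Lemma \ref{SPF121} (which gives $\text{deg}(f^{(1)}_{n,n})=1$) we get $\text{deg}(f)=1$. For part (2): if $\text{deg}(f)=2$, then Lemma \ref{SPF122}(2) places $f$ in the isomorphism class of one of the four listed families; conversely, each of $f^{(1)}_{n,k}$, $f^{(2)}_{n,k}$, $f^{(3)}_{n,l}$, $f^{(4)}_{n,1}$ has degree $2$ by Lemma \ref{SPF121}, so invariance forces $\text{deg}(f)=2$. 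This settles both statements.

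There is essentially no hard technical obstacle at this stage, since the genuine work---both the constructive degree-$2$ approximating polynomials and the structural classification via case analysis over the value vector $(b_0,\ldots,b_n)$---has already been discharged in Lemmas \ref{SPF121} and \ref{SPF122}. The one point deserving care is to state the isomorphism-invariance of degree as a clean, self-contained fact rather than leaving it implicit, so that the ``if'' directions are fully justified; with that fact spelled out, the present lemma reduces to a one-line corollary of the two preceding ones.
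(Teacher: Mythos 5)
Your proposal is correct and matches the paper's own route exactly: the paper obtains Lemma~\ref{odd} by combining Lemma~\ref{SPF121} (each representative realizes the claimed degree) with Lemma~\ref{SPF122} (the classification, i.e.\ the ``only if'' directions), using the isomorphism-invariance of exact degree implicitly. Your only addition is to spell out that invariance as an explicit fact, which is a harmless (indeed welcome) sharpening of the same argument.
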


\begin{lemma}\label{even}
Let $n>1$ be even, and let $f:\{0,1\}^n \to\{0,1\}$  be an $n$-bit symmetrically  partial Boolean function. Then:

  (1) $\text{\em deg}(f)=1$ if and only if $f$ is isomorphic to the function $f^{(1)}_{n,n}$;

  (2) $\text{\em deg}(f)=2$ if and only if $f$ is isomorphic to one of the functions
\begin{align}
&f^{(1)}_{n,k}(x)=\left\{\begin{array}{ll}
                    0 &\text{\em if}\ |x|=0, \\
                    1 &\text{\em if}\ |x|= k,
                  \end{array}
 \right.\\
&f^{(2)}_{n,k}(x)=\left\{\begin{array}{ll}
                    0 &\text{\em if}\ |x|=0, \\
                    1 &\text{\em if}\ |x|= k\ \text{\em or } |x|=k+1,
                  \end{array}
 \right.\\
 &f^{(3)}_{n,n/2}(x)=\left\{\begin{array}{ll}
                    0 &\text{\em if}\ |x|=0 \ \text{\em or } |x|= n,\\
                    1 &\text{\em if}\ |x|= n/2,
                  \end{array}
 \right.
 \end{align}
where $n-1\geq k\geq n/2$.

\end{lemma}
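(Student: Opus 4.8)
The plan is to obtain this even-$n$ characterization as a direct corollary of the two preceding one-directional lemmas: Lemma \ref{SPF121}, which says the listed functions have exactly the stated degrees, and Lemma \ref{SPF123}, which says every symmetrically partial $f$ with $\text{deg}(f)\in\{1,2\}$ is isomorphic to one of them. Each of the two biconditionals in the statement splits into an ``only if'' part, which is precisely the content of Lemma \ref{SPF123}, and an ``if'' part, which requires knowing that the exact degree is an isomorphism invariant together with the degree computations already carried out in Lemma \ref{SPF121}.

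First I would record the one auxiliary fact that makes the combination work: $\widetilde{\text{deg}}_0$ is preserved under isomorphism. It suffices to check the three generating operations. Permuting the input variables clearly leaves the degree of a representing multilinear polynomial unchanged; substituting $x_i\mapsto 1-x_i$ keeps the polynomial multilinear and, being its own inverse, cannot alter the degree; and negating the output replaces a representing polynomial $q$ by $1-q$, which has the same degree. This invariance is in fact already used implicitly in the proof of Lemma \ref{SPF122}, where one passes freely among $q$, $1-q$, $q(\bar{x})$ and $1-q(\bar{x})$.

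With this in hand the assembly is routine. For part (1), the ``only if'' direction is exactly Lemma \ref{SPF123}(1); conversely, $\text{deg}(f^{(1)}_{n,n})=1$ by Lemma \ref{SPF121}, so any $f$ isomorphic to $f^{(1)}_{n,n}$ also has degree $1$ by the invariance just established. For part (2), the ``only if'' direction is Lemma \ref{SPF123}(2), while the ``if'' direction follows because $f^{(1)}_{n,k}$, $f^{(2)}_{n,k}$ and $f^{(3)}_{n,n/2}$ all have degree $2$ by Lemma \ref{SPF121}, hence so does any function isomorphic to one of them, with $n-1\geq k\geq n/2$.

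Because Lemmas \ref{SPF121} and \ref{SPF123} are available, the only genuinely new step here is the isomorphism invariance of degree; the rest is bookkeeping. The point worth flagging is structural rather than computational: for even $n$ one has $\lfloor n/2\rfloor=\lceil n/2\rceil=n/2$, so the two middle Hamming weights coincide and the would-be function $f^{(4)}_n$ collapses into $f^{(3)}_{n,n/2}$, while the parameter $l$ in $f^{(3)}$ is forced to the single value $n/2$. This explains why the even-$n$ list contains no separate $f^{(4)}$ term and why $f^{(3)}$ carries only the index $n/2$, in contrast to the odd case of Lemma \ref{odd}. If one instead wished to prove the statement without invoking Lemma \ref{SPF123}, the real obstacle would be re-running the even-$n$ analogue of the ``Result 1--3'' case analysis from the proof of Lemma \ref{SPF122}, and it is precisely this coincidence of the middle weights that shortens that case split.
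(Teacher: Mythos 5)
Your proposal is correct and matches the paper's own route: the paper obtains Lemma \ref{even} precisely by combining Lemma \ref{SPF121} (the listed functions have the stated degrees) with Lemma \ref{SPF123} (any even-$n$ function of degree 1 or 2 is isomorphic to one of them), exactly as you do. Your explicit verification that $\widetilde{\text{deg}}_0$ is invariant under isomorphism, and your remark that $f^{(4)}_n$ collapses into $f^{(3)}_{n,n/2}$ when $n$ is even, only make explicit what the paper leaves implicit.
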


Combining Lemmas \ref{odd} and \ref{even} we obtain the following result concerning the characterizations of all symmetrically  partial Boolean functions with degree 1 or 2.

\begin{theorem}\label{T-deg-less2}

Let $n>1$  and let $f:\{0,1\}^n \to\{0,1\}$  be an $n$-bit symmetrically  partial  Boolean function. Then:

  (1) $\text{\em deg}(f)=1$ if and only if $f$ is isomorphic to the function $f^{(1)}_{n,n}$;

  (2) $\text{\em deg}(f)=2$ if and only if $f$ is isomorphic to one of the functions
\begin{align}
&f^{(1)}_{n,k}(x)=\left\{\begin{array}{ll}
                    0 &\text{\em if}\ |x|=0, \\
                    1 &\text{\em if}\ |x|= k,
                  \end{array}
 \right.\\
&f^{(2)}_{n,k}(x)=\left\{\begin{array}{ll}
                    0 &\text{\em if}\ |x|=0, \\
                    1 &\text{\em if}\ |x|= k\ \text{\em or } |x|=k+1,
                  \end{array}
 \right.\\
 &f^{(3)}_{n,l}(x)=\left\{\begin{array}{ll}
                    0 &\text{\em if}\ |x|=0 \ \text{\em or } |x|= n,\\
                    1 &\text{\em if}\ |x|= l,
                  \end{array}
 \right.\\
&f^{(4)}_{n}(x)=\left\{\begin{array}{ll}
                    0 &\text{\em if}\ |x|=0 \ \text{\em or } |x|= n,\\
                    1 &\text{\em if}\ |x|=\lfloor n/2\rfloor \ \text{\em or } |x|=\lceil n/2\rceil,
                  \end{array}
 \right.
 \end{align}
where $n-1\geq k\geq \lfloor n/2\rfloor$, and $\lceil n/2\rceil\geq l\geq \lfloor n/2\rfloor$.

\end{theorem}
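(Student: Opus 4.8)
The plan is to derive Theorem~\ref{T-deg-less2} by merging the two parity-specific characterizations already established, namely Lemma~\ref{odd} (for odd $n$) and Lemma~\ref{even} (for even $n$). First I would record the one ingredient both need but which is not stated as a separate lemma: the degree of a symmetrically partial function is invariant under isomorphism. Indeed the three elementary operations generating isomorphism—permutation of input variables, negation of individual inputs, and negation of the output—each send a representing (or approximating) polynomial $q$ to one of $1-q$, $q(\bar{x})$, or $1-q(\bar{x})$, all of the same degree, exactly as already noted at the beginning of the proof of Lemma~\ref{SPF122}. Hence ``being isomorphic to'' a fixed function is a sound equivalence for degree statements, and the biconditionals in the target theorem are well posed.

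For the forward direction, suppose $\text{deg}(f)\in\{1,2\}$, and split on the parity of $n$. If $n$ is odd, Lemma~\ref{odd} directly yields that $f$ is isomorphic to $f^{(1)}_{n,n}$ when the degree is $1$, and to one of $f^{(1)}_{n,k}, f^{(2)}_{n,k}, f^{(3)}_{n,l}, f^{(4)}_{n,1}$ when the degree is $2$, with $n-1\geq k\geq \lfloor n/2\rfloor$ and $\lceil n/2\rceil \geq l\geq \lfloor n/2\rfloor$. If $n$ is even, Lemma~\ref{even} gives the same conclusion but with the shorter list $f^{(1)}_{n,k}, f^{(2)}_{n,k}, f^{(3)}_{n,n/2}$. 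The key reconciliation step is to check that both lists are subsumed by the single list displayed in the theorem: for even $n$ the range $\lceil n/2\rceil \geq l\geq \lfloor n/2\rfloor$ collapses to $l=n/2$, so $f^{(3)}_{n,l}$ reduces to $f^{(3)}_{n,n/2}$, and simultaneously $f^{(4)}_{n}$ degenerates to the very same function because $\lfloor n/2\rfloor=\lceil n/2\rceil=n/2$ forces its two $1$-inputs to coincide. Thus no even-$n$ function is lost, while the odd-$n$ list already matches the theorem verbatim once $f^{(4)}_{n,1}$ is identified with $f^{(4)}_{n}$.

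For the backward direction, I would invoke Lemma~\ref{SPF121}, which shows that $f^{(1)}_{n,n}$ has degree $1$ and that each of $f^{(1)}_{n,k}, f^{(2)}_{n,k}, f^{(3)}_{n,l}, f^{(4)}_{n}$ has degree exactly $2$ under the stated ranges of $k$ and $l$. Composing this with the isomorphism-invariance of degree from the first step closes both biconditionals at once.

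The only genuinely delicate point is the degeneracy bookkeeping in the even case: one must verify that the apparently ``extra'' function $f^{(4)}_{n}$ in the theorem's list does not produce a spurious degree-$2$ function when $n$ is even but instead coincides with $f^{(3)}_{n,n/2}$, and conversely that the single range $\lceil n/2\rceil\geq l\geq \lfloor n/2\rfloor$ yields two distinct functions for odd $n$ and exactly one for even $n$. Everything else is a direct quotation of the two preceding lemmas, so I do not anticipate any substantial new obstacle beyond this careful case-merging.
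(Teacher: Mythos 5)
Your proposal is correct and takes essentially the same route as the paper: the paper obtains Theorem~\ref{T-deg-less2} precisely by merging the parity-specific biconditionals of Lemma~\ref{odd} and Lemma~\ref{even} (which themselves rest on Lemmas~\ref{SPF121}, \ref{SPF122} and \ref{SPF123}). The points you flag as delicate---isomorphism-invariance of the degree and the even-$n$ degeneracy $f^{(4)}_{n}=f^{(3)}_{n,n/2}$---are exactly the unstated bookkeeping behind the paper's one-line combination, so your write-up is a faithful (indeed slightly more explicit) version of the paper's argument.
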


\begin{remark}
According to Lemma \ref{Lemma-lower-bound1}, symmetric Boolean functions that can be computed by exact quantum 1-query  algorithm must have degree not more than 2. Therefore, Theorem \ref{T-deg-less2} describes all possible symmetric Boolean functions that can be computed by exact quantum 1-query algorithms.
\end{remark}

\section{symmetrically partial functions with exact quantum 1-query complexity}

In this section, we try to find out all symmetrically  partial  functions that can be computed with exact quantum 1-query algorithms. More precisely, we will obtain that any partial symmetric  function has exact quantum 1-query complexity if and only if it can be computed by Deutsch-Jozsa algorithm.

First, we have the following proposition that was proved in \cite{GQZ15}.
\begin{proposition}\cite{GQZ15} \label{GQZ15}
Let $n>1$. Then for any $n\geq k\geq \lceil n/2\rceil$,  $Q_E(f^{(1)}_{n,k})=1$. 
\end{proposition}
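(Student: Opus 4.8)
The plan is to prove the two inequalities $Q_E(f^{(1)}_{n,k})\geq 1$ and $Q_E(f^{(1)}_{n,k})\leq 1$ separately. The lower bound is immediate: since $f^{(1)}_{n,k}$ takes the value $0$ on $|x|=0$ and the value $1$ on $|x|=k$, it is non-constant on its domain, so no $0$-query algorithm — whose output distribution is independent of $x$ — can compute it exactly. Equivalently, $\text{deg}(f^{(1)}_{n,k})\geq 1$ together with Lemma \ref{Lemma-lower-bound1} forces $Q_E(f^{(1)}_{n,k})\geq \frac{1}{2}>0$, hence $Q_E(f^{(1)}_{n,k})\geq 1$.

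For the upper bound I would reduce $f^{(1)}_{n,k}$ to the ordinary Deutsch-Jozsa problem by \emph{padding}. Given an input $x\in\{0,1\}^n$, form the $2k$-bit string $x'=x\,0^{2k-n}$, which is legitimate because $k\geq\lceil n/2\rceil$ guarantees $2k-n\geq 0$. Since the appended bits are all $0$, we have $|x'|=|x|$, so on the promise of $f^{(1)}_{n,k}$ the padded weight lies in $\{0,k\}$, and $k=(2k)/2$ is exactly half of the new length. Thus the domain of definition of the padded problem is contained in that of $\text{DJ}_{2k}^0$ (which distinguishes weight $2k/2$ from weights $\{0,2k\}$), and the two functions agree wherever both are defined: value $0$ on $|x'|=0$ and value $1$ on $|x'|=k$. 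Consequently any exact algorithm for $\text{DJ}_{2k}^0$ also computes $f^{(1)}_{n,k}(x)=\text{DJ}_{2k}^0(x')$, and $Q_E(\text{DJ}_{2k}^0)=1$ is already available to us (the Deutsch-Jozsa algorithm, with optimality coming from $\widetilde{\text{deg}}_0(\text{DJ}_{2k}^0)=2$ and Lemma \ref{Lemma-lower-bound1}).

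It then remains to verify that the reduction costs only a single query to the true input $x$. In the paper's phase-query model the oracle for the $2k$-bit string satisfies $O_{x'}|i,j\rangle=(-1)^{x'_i}|i,j\rangle$; for $i\leq n$ this equals $(-1)^{x_i}$, while for $n<i\leq 2k$ it equals $(-1)^0=+1$. Hence $O_{x'}$ is precisely the query $O_x$ to the genuine $n$-bit input, extended to act as the identity on the padding coordinates, so one call to $O_{x'}$ is realized by exactly one call to $O_x$. Therefore the one-query Deutsch-Jozsa algorithm for $\text{DJ}_{2k}^0$, run on $x'$, is a one-query exact algorithm for $f^{(1)}_{n,k}$, giving $Q_E(f^{(1)}_{n,k})\leq 1$, and combined with the lower bound this yields $Q_E(f^{(1)}_{n,k})=1$.

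The subtle point — and the reason padding is genuinely needed rather than running a naive one-query algorithm on $x$ directly — is that after one phase query the uniform superposition over the $n$ input indices acquires amplitude $(n-2|x|)/n$ on the uniform state, which equals $1$ for $|x|=0$ but is the nonzero value $(n-2k)/n$ for $|x|=k$ whenever $k\neq n/2$; the two post-query states are then non-orthogonal and cannot be discriminated with certainty. Padding to length $2k$ forces the weight-$k$ case to be exactly balanced, making that amplitude vanish and the two states orthogonal, which is precisely what zero-error discrimination demands. The hypothesis $k\geq\lceil n/2\rceil$ is exactly what makes $2k-n\geq 0$, so the required balancing is achieved by \emph{adding} zeros rather than deleting bits.
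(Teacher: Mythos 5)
Your proposal is correct and takes essentially the same approach the paper relies on: the paper itself defers the proof of this proposition to the cited reference \cite{GQZ15}, but the argument you give --- padding the input with $2k-n$ zeros (legitimate since $k\geq\lceil n/2\rceil$) so that $f^{(1)}_{n,k}$ becomes a sub-promise of the Deutsch--Jozsa problem $\text{DJ}_{2k}^0$, with the padded oracle realized by one genuine query because the extra coordinates contribute only trivial phases --- is exactly the padding reduction the paper sketches in its proof method for Theorem 2 and in the remark following Theorem \ref{odd-1-query}. The additional details you supply (the non-constancy lower bound and the query-cost accounting for the padded oracle) are correct and complete the sketch.
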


The exact quantum query complexity of $f^{(2)}_{n,k}$ is beyond 1, and this is the following result.

\begin{theorem}\label{T-k2}
Let $n>1$. Then for any $0<k<n$, $Q_E(f^{(2)}_{n,k})\geq 2$.

\end{theorem}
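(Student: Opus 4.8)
The plan is to rule out $Q_E(f^{(2)}_{n,k})=1$ directly: since $f^{(2)}_{n,k}$ takes both values $0$ and $1$ it is non-constant, so $Q_E\ge 1$, and it remains to show that a single query never suffices. The engine is the observation that an \emph{exact} algorithm must send inputs with different function values to orthogonal final states. Indeed, if $f(x)=1$ and $f(y)=0$, then $|\psi_f(x)\rangle$ lies in the range of $M_1$ and $|\psi_f(y)\rangle$ in the range of $M_0$, two orthogonal subspaces, so $\langle\psi_f(y)|\psi_f(x)\rangle=0$. Since the final unitary $U_1$ preserves inner products, writing $|v_x\rangle=O_xU_0|\psi_s\rangle$ for the post-query state of a one-query algorithm, this is equivalent to $\langle v_y|v_x\rangle=0$ whenever $f(x)\ne f(y)$.

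First I would compute these inner products. Put $|\phi\rangle=U_0|\psi_s\rangle$ and $w_i=\sum_j|\langle i,j|\phi\rangle|^2$ for $i=0,1,\dots,n$, so that $w_i\ge 0$ and $\sum_{i=0}^n w_i=1$. Because $O_yO_x$ multiplies $|i,j\rangle$ by $(-1)^{(x\oplus y)_i}$ for $i\ge 1$ and fixes the $i=0$ block, a direct expansion of $\langle v_y|v_x\rangle=\langle\phi|O_yO_x|\phi\rangle$ gives
\[
\langle v_y | v_x\rangle = 1 - 2\sum_{i\in \mathrm{supp}(x\oplus y)} w_i .
\]
Hence the orthogonality requirement becomes the purely combinatorial condition $\sum_{i\in \mathrm{supp}(x\oplus y)} w_i=\tfrac12$ for every pair $x,y$ in the domain with $f(x)\ne f(y)$.

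Now I would specialize to $f^{(2)}_{n,k}$. Its only input of value $0$ is $y=0^n$, and $x\oplus 0^n=x$, so the condition reads $\sum_{i\in \mathrm{supp}(x)} w_i=\tfrac12$ for every $x$ with $|x|=k$ and for every $x$ with $|x|=k+1$ (both weight classes are nonempty since $0<k<n$). Using the $|x|=k$ equalities and swapping a single coordinate between two $k$-subsets, which is possible precisely because $1\le k\le n-1$, forces $w_1=\cdots=w_n=\tfrac{1}{2k}$. But then every $(k+1)$-subset has total weight $\tfrac{k+1}{2k}\ne\tfrac12$, contradicting the $|x|=k+1$ equalities. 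So no admissible weight vector exists, no exact one-query algorithm exists, and therefore $Q_E(f^{(2)}_{n,k})\ge 2$ uniformly in $0<k<n$.

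The main obstacle is the first step — pinning down exactly what a single query can produce — rather than the final combinatorics. Concretely I must justify the orthogonality of final states for a general measurement $\{M_0,M_1\}$, and verify the inner-product formula carefully, including the contribution of the non-queried $i=0$ register; once the identity $\langle v_y|v_x\rangle=1-2\sum_{i\in \mathrm{supp}(x\oplus y)}w_i$ is established, the contradiction for $f^{(2)}_{n,k}$ is immediate. As a sanity check, the cruder bound $Q_E\ge\tfrac12\widetilde{\mathrm{deg}}_0$ from Lemma~\ref{Lemma-lower-bound1} already settles the small-$k$ range $k<\lfloor n/2\rfloor$, where the interpolating polynomial $\tfrac{w(2k+1-w)}{k(k+1)}$ turns negative at $w=n$ and hence $\widetilde{\mathrm{deg}}_0(f^{(2)}_{n,k})\ge 3$ (consistent with Lemma~\ref{SPF121}); it is exactly the degree-$2$ range $k\ge\lfloor n/2\rfloor$ that the degree bound cannot reach, which is why the orthogonality argument is needed.
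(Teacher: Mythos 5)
Your proof is correct, and its engine is the same as the paper's: assume a one-query exact algorithm, use the fact that inputs with different function values must have orthogonal post-query states $O_xU_0|\psi_s\rangle$ (the final unitary $U_1$ being irrelevant since it preserves inner products), and reduce everything to the weights $w_i=\sum_j|\alpha_{ij}|^2$, which the paper calls $\beta_i$. Where you genuinely diverge is the endgame. The paper plugs in a specific sequence of inputs (weight $k$ on $\{1,\dots,k\}$, then weight $k+1$ supported on $\{1,\dots,k,l\}$ for each $l>k$, then weight $k+1$ omitting one index $l\le k$) and drives every $\beta_i$ to $0$, contradicting the normalization of $|\psi_s\rangle$; this forces a case split, with $k=n-1$ handled separately because the construction needs $k+2\le n$. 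Your version first packages the orthogonality condition into the identity $\langle v_y|v_x\rangle=1-2\sum_{i\in\mathrm{supp}(x\oplus y)}w_i$, so the whole problem becomes purely combinatorial: every $k$-subset and every $(k+1)$-subset of $\{1,\dots,n\}$ must have $w$-weight exactly $1/2$. The single-coordinate swap (valid precisely when $1\le k\le n-1$) then forces $w_1=\cdots=w_n=1/(2k)$, and $(k+1)/(2k)\ne 1/2$ is an immediate arithmetic contradiction. This buys you two things the paper's write-up lacks: a uniform treatment of all $0<k<n$, with no separate $k=n-1$ case (which the paper merely asserts is ``easy to verify''), and a contradiction that does not even invoke normalization or positivity of the weights. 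The one point you flagged as an obstacle, orthogonality of final states under a general two-outcome measurement $\{M_0,M_1\}$, goes through without assuming projectivity: exactness gives $M_0|\psi_f(x)\rangle=0$ and $M_1|\psi_f(y)\rangle=0$, hence $\langle\psi_f(y)|\psi_f(x)\rangle=\langle\psi_f(y)|\bigl(M_0^\dagger M_0+M_1^\dagger M_1\bigr)|\psi_f(x)\rangle=0$.
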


\begin{proof} The proof is divided into two cases in terms of $0<k<n-1$ and $k=n-1$.

{\bf Case 1: $0<k<n-1$.}
Assume that there is an exact quantum 1-query algorithm with $U_0$, $O_x$ and $U_1$ being the sequence of unitary operators for $f^{(2)}_{n,k}$, and $|\psi_s\rangle$ being its starting state.
Let $U_0|\psi_s\rangle=\sum_{i=0,j=1}^{n,m}\alpha_{ij}|i\rangle|j\rangle$.
When $|x|=0$, we have
\begin{equation}
|\psi_0\rangle= O_xU_0|\psi_s\rangle=\sum_{i=0,j=1}^{n,m}\alpha_{ij}|i\rangle|j\rangle. \nonumber
\end{equation}
Denote $\beta_i=\sum_{j=1}^m |\alpha_{ij}|^2$.

 When $|x|=k$, let $x_1=\cdots=x_{k}=1$ and $x_{k+1}=\cdots=x_n=0$,  and then we have
\begin{align}
 |\psi_k\rangle &= O_xU_0|\psi_s\rangle \nonumber\\
 &=\sum_{i=1,j=1}^{k,m}-\alpha_{ij}|i\rangle|j\rangle+\sum_{j=1}^{m}\alpha_{0j}|0\rangle|j\rangle+\sum_{i=k+1,j=1}^{n,m}\alpha_{ij}|i\rangle|j\rangle. \nonumber \end{align}
 Since the algorithm is exact, the quantum state $U_1|\psi_0\rangle$ must be orthogonal to the quantum state $U_1|\psi_k\rangle$. Therefore, we have
\begin{align}
0=&(U_1|\psi_0\rangle)^{\dag}U_1|\psi_k\rangle = \langle \psi_0|\psi_k\rangle\\
=&\left(\sum_{i=0,j=1}^{n,m}\overline{\alpha}_{ij}\langle i|\langle j|\right) \times\\ &\left(\sum_{j=1}^{m}\alpha_{0j}|0\rangle|j\rangle+\sum_{i=1,j=1}^{k,m}-\alpha_{ij}|i\rangle|j\rangle+\sum_{i=k+1,j=1}^{n,m}\alpha_{ij}|i\rangle|j\rangle\right)\\
=&\sum_{j=1}^{m}|\alpha_{0j}|^2+\sum_{i=1,j=1}^{k,m}-|\alpha_{ij}|^2+\sum_{i=k+1,j=1}^{n,m}|\alpha_{ij}|^2\\
=&\beta_0-\sum_{i=1}^k \beta_i+\sum_{i=k+1}^n \beta_i. \label{k0}
\end{align}

When $|x|=k+1$, let $x_1=\cdots=x_{k+1}=1$ and $x_{k+2}=\cdots=x_n=0$, and then we have
\begin{align}
|\psi_{(k+1)^0}\rangle &=O_xU_0|\psi_s\rangle  \nonumber \\
&=
\sum_{i=1,j=1}^{k+1,m}-\alpha_{ij}|i\rangle|j\rangle+\sum_{j=1}^{m}\alpha_{0j}|0\rangle|j\rangle+
\sum_{i=k+2,j=1}^{n,m}\alpha_{ij}|i\rangle|j\rangle. \nonumber
\end{align}
We also have
\begin{align}
0=&(U_1|\psi_0\rangle)^{\dag}U_1|\psi_{(k+1)^0}\rangle=\langle \psi_0|\psi_{(k+1)^0}\rangle\\
=&\left(\sum_{i=0,j=1}^{n,m}\overline{\alpha}_{ij}\langle i|\langle j|\right)\times \\ &\left(\sum_{j=1}^{m}\alpha_{0j}|0\rangle|j\rangle+\sum_{i=1,j=1}^{k+1,m}-\alpha_{ij}|i\rangle|j\rangle+\sum_{i=k+2,j=1}^{n,m}\alpha_{ij}|i\rangle|j\rangle\right)\\
=&\beta_0 -\sum_{i=1}^{k+1}\beta_i+ \sum_{i=k+2}^{n} \beta_i \\
=&-2\beta_{k+1},
\end{align}
where the last equality is according to $\beta_0-\sum_{i=1}^k \beta_i+\sum_{i=k+1}^n \beta_i=0$ from Eq. (\ref{k0}).
So, we have $\beta_{k+1}=0$.

Let $x_1=\cdots= x_k=1$, $x_l=1$ for an $l>k+1$ and let the others be 0. With such an input $x$ then we can similarly  obtain $\beta_{l}=0$.

As a result, we have obtained that $\beta_{i}=0$ for $k+1\leq i\leq n$, and therefore
\begin{equation}
\beta_0-\sum_{i=1}^{k}\beta_i=0. \label{beta0}
\end{equation}

Let $x_l=0$ for an $l\leq k$, let $x_j=1$ for $j\leq k+2$ with $j\neq l$, and the others $x_i=0$ for $n\geq i\geq k+3$. For such  an input $x$, denote $|\psi_{(k+1)^l}\rangle=O_xU_0|\psi_s\rangle$. Then similarly we have
 \begin{align}
0=&(U_1|\psi_0\rangle)^{\dag}U_1|\psi_{(k+1)^l}\rangle=\langle \psi_0|\psi_{(k+1)^l}\rangle\\
=&\left(\sum_{i=0,j=1}^{n,m}\overline{\alpha}_{ij}\langle i|\langle j|\right) \times\\ &\left(\sum_{j=1}^{m}\alpha_{0j}|0\rangle|j\rangle+\sum_{i=1,j=1}^{l-1,m}-\alpha_{ij}|i\rangle|j\rangle + \sum_{j=1}^{m}\alpha_{lj}|l\rangle|j\rangle +\sum_{i=l+1,j=1}^{k+2,m}-\alpha_{ij}|i\rangle|j\rangle\right)\\
=&\beta_0-\sum_{i=1}^{l-1}\beta_i+\beta_l-\sum_{i=l+1}^{k+2}\beta_i\\
=&2\beta_l,
\end{align}
where the last equality follows from  Eq. (\ref{beta0}).

Therefore, now we  have $\beta_l=0$  for $1\leq l\leq n$. From  Eq. (\ref{beta0}) it follows that
 $\beta_0=0$. So far, we have concluded that $\beta_l=0$ for $0\leq l\leq n$, which result in
 $\alpha_{ij}=0$ for $0\leq i\leq n$ and $1\leq j \leq m$.

Consequently,  $U_0|\psi_s\rangle=\sum_{i=0,j=1}^{n,m}\alpha_{ij}|i\rangle|j\rangle=\mathbf{0}$ and $|\psi_s\rangle=\mathbf{0}$,  a contradiction.

{\bf Case 2: $k=n-1$.}
By using the above method as Case 1, it is easy to verify that 1-query is not enough.
Therefore, we have $Q_E(f^{(2)}_{n,k})\geq 2$ for $0<k<n$.
\end{proof}

Combining Lemma \ref{even} and Theorem \ref{T-k2} as well as Proposition \ref{GQZ15}, we have the following result.
\begin{theorem}\label{even-1-query}
Let $n$ be even  and let $f:\{0,1\}^n \to\{0,1\}$  be an $n$-bit symmetrically  partial function. Then $Q_E(f)=1$ if and only if $f$ is isomorphic to one of these functions: $f^{(1)}_{n,k}$ and $f^{(3)}_{n,n/2}$,
 where $ k\geq n/2$.
\end{theorem}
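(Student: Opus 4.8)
The plan is to prove this statement as an assembly of three results already in hand, rather than through fresh analysis. The opening move is to convert the hypothesis $Q_E(f)=1$ into a degree constraint: since Lemma~\ref{Lemma-lower-bound1} gives $Q_E(f)\geq \frac{1}{2}\widetilde{\text{deg}}_0(f)$, the equality $Q_E(f)=1$ forces $\widetilde{\text{deg}}_0(f)\leq 2$, and because $Q_E(f)>0$ excludes constant functions we also have $\text{deg}(f)\geq 1$. Hence $\text{deg}(f)\in\{1,2\}$, and Lemma~\ref{even} (the even case) applies directly, telling us that $f$ is isomorphic to one of the finitely many candidates: $f^{(1)}_{n,n}$ of degree~$1$, or $f^{(1)}_{n,k}$, $f^{(2)}_{n,k}$ (for $n-1\geq k\geq n/2$), $f^{(3)}_{n,n/2}$ of degree~$2$. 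Since $Q_E$ is invariant under isomorphism by Fact~\ref{fact1}, it suffices to reason about these representative functions.

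For the necessity (``only if'') direction I would prune this candidate list. Theorem~\ref{T-k2} states $Q_E(f^{(2)}_{n,k})\geq 2$ for every $0<k<n$, so the whole family $f^{(2)}_{n,k}$ is eliminated. Observing that $f^{(1)}_{n,n}$ is simply $f^{(1)}_{n,k}$ evaluated at $k=n$, the survivors are exactly $f^{(1)}_{n,k}$ with $n/2\leq k\leq n$ and $f^{(3)}_{n,n/2}$. Thus any $f$ with $Q_E(f)=1$ is isomorphic to one of the functions claimed, with $k\geq n/2$.

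For the sufficiency (``if'') direction I would check that each surviving function actually achieves $Q_E=1$. For $f^{(1)}_{n,k}$ the crucial point is that $n$ being even gives $\lceil n/2\rceil=n/2$, so the range $k\geq n/2$ coincides with the hypothesis $k\geq\lceil n/2\rceil$ of Proposition~\ref{GQZ15}, which yields $Q_E(f^{(1)}_{n,k})=1$. For $f^{(3)}_{n,n/2}$ I would note that it is literally the Deutsch-Jozsa function $\text{DJ}_n^0$ (value $0$ at weights $0$ and $n$, value $1$ at weight $n/2$), so the Deutsch-Jozsa algorithm computes it exactly with one query; again Fact~\ref{fact1} transfers the bound to every isomorphic copy.

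Because all the genuinely computational content is outsourced to the three cited results, the obstacle here is bookkeeping rather than analysis. The care needed is in aligning the parity: for even $n$ one must verify that $\lceil n/2\rceil=\lfloor n/2\rfloor=n/2$ so that the generalized families $f^{(3)}_{n,l}$ and $f^{(4)}_n$ of the odd case collapse to the single function $f^{(3)}_{n,n/2}$, and that the range of Proposition~\ref{GQZ15} exactly matches $k\geq n/2$. I expect the only real delicacy to be in handling the isomorphism closures consistently, so that the degree bound produces precisely the candidate set of Lemma~\ref{even} and no admissible function is either dropped or double-counted.
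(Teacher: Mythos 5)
Your proposal is correct and takes essentially the same route as the paper's own proof: both convert $Q_E(f)=1$ into $\deg(f)\leq 2$ via Lemma~\ref{Lemma-lower-bound1}, invoke Lemma~\ref{even} for the candidate list, eliminate the $f^{(2)}_{n,k}$ family by Theorem~\ref{T-k2}, and obtain sufficiency from Proposition~\ref{GQZ15}, the Deutsch--Jozsa algorithm for $f^{(3)}_{n,n/2}$, and Fact~\ref{fact1}. Your added bookkeeping (excluding constant functions so that $\deg(f)\geq 1$, and noting $\lceil n/2\rceil=n/2$ for even $n$) merely makes explicit what the paper leaves implicit.
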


\begin{proof}
Suppose that $Q_E(f)=1$.  Then according to Lemma \ref{Lemma-lower-bound1}, $\text{deg}(f)\leq 2Q_E(f)=2$. By virtue of Lemma \ref{even}, $f$ is isomorphic to one of these functions:
$f^{(1)}_{n,n}, f^{(1)}_{n,k}, f^{(2)}_{n,k}, f^{(3)}_{n,n/2}$ for $n-1\geq k\geq n/2$. Furthermore, Proposition \ref{GQZ15} shows that $Q_E(f^{(1)}_{n,k})=1$ for $n\geq k\geq n/2$; and $Q_E(f^{(3)}_{n,n/2})=1$  is derived from Deutsch-Jozsa algorithm; Theorem \ref{T-k2} gives $Q_E(f^{(2)}_{n,k})\geq 2$ for any $0<k<n$.
Consequently,  $Q_E(f)=1$ implies that  $f$ is isomorphic to one of these functions: $f^{(1)}_{n,k}$ and $f^{(3)}_{n,n/2}$,
 where $ k\geq n/2$.

 On the other hand, since $Q_E(f^{(1)}_{n,k})=1$ for $n\geq k\geq n/2$, and $Q_E(f^{(3)}_{n,n/2})=1$,
 if  $f$ is isomorphic to one of these functions: $f^{(1)}_{n,k}$ and $f^{(3)}_{n,n/2}$,
 where $ k\geq n/2$,  by {\bf Fact \ref{fact1}}, $Q_E(f)=1$ follows.

\end{proof}

To consider the case of $n$ being odd, we need the following result.
\begin{theorem}\label{T-floor-n/2}
 For any integer $h>0$, $Q_E(f^{(1)}_{2h+1,h})\geq 2$.
\end{theorem}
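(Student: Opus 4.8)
The plan is to argue directly at the level of query amplitudes, in the spirit of the proof of Theorem~\ref{T-k2}, rather than through a degree bound. Note that Lemma~\ref{SPF121} already gives $\text{deg}(f^{(1)}_{2h+1,h})=2$, so Lemma~\ref{Lemma-lower-bound1} only yields $Q_E(f^{(1)}_{2h+1,h})\geq 1$; the extra query has to be extracted from the exactness constraint itself. Suppose, for contradiction, that there is an exact quantum $1$-query algorithm for $f^{(1)}_{2h+1,h}$ with unitaries $U_0,O_x,U_1$ and starting state $|\psi_s\rangle$, and write $U_0|\psi_s\rangle=\sum_{i=0,j=1}^{n,m}\alpha_{ij}|i\rangle|j\rangle$ with $n=2h+1$. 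As in Theorem~\ref{T-k2}, set $\beta_i=\sum_{j=1}^m|\alpha_{ij}|^2$; unitarity of $U_0$ and normalization of $|\psi_s\rangle$ give $\sum_{i=0}^n\beta_i=1$.

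First I would record the effect of the query: for any input $x$,
\begin{equation}
O_xU_0|\psi_s\rangle=\sum_{j=1}^m\alpha_{0j}|0\rangle|j\rangle+\sum_{i=1}^n\sum_{j=1}^m(-1)^{x_i}\alpha_{ij}|i\rangle|j\rangle. \nonumber
\end{equation}
Since the algorithm is exact and $f^{(1)}_{2h+1,h}$ takes value $0$ at $|x|=0$ and value $1$ at every $y$ with $|y|=h$, the post-query states $|\psi_0\rangle$ and $|\psi_y\rangle$ must be mapped by $U_1$ into orthogonal measurement subspaces, whence $\langle\psi_0|\psi_y\rangle=0$. Using $(-1)^{y_i}=1-2y_i$ together with $\sum_{i=0}^n\beta_i=1$, this inner product evaluates to $1-2\sum_{i:y_i=1}\beta_i$, so the exactness requirement becomes
\begin{equation}
\sum_{i\in T}\beta_i=\tfrac12\qquad\text{for every $h$-subset }T\subseteq\{1,\ldots,2h+1\}. \nonumber
\end{equation}

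The key combinatorial step is then to exploit that \emph{all} $h$-subsets share this common sum. Comparing two $h$-subsets differing in a single index, $T=A\cup\{p\}$ and $T'=A\cup\{q\}$ with $|A|=h-1$, forces $\beta_p=\beta_q$; since $n=2h+1\geq h+1$ such $A$ always exists and this comparison links every pair $p,q\in\{1,\ldots,n\}$, so $\beta_1=\cdots=\beta_n$ equal a common value $\beta$. Substituting back gives $h\beta=\tfrac12$, i.e.\ $\beta=\tfrac{1}{2h}$, and the normalization $\beta_0+n\beta=1$ yields
\begin{equation}
\beta_0=1-\frac{2h+1}{2h}=-\frac{1}{2h}<0, \nonumber
\end{equation}
contradicting $\beta_0=\sum_j|\alpha_{0j}|^2\geq 0$. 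Hence no exact $1$-query algorithm exists and $Q_E(f^{(1)}_{2h+1,h})\geq 2$.

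I expect the main obstacle to be organizing the chaining argument cleanly and, more pointedly, isolating where oddness enters: the contradiction hinges on $n=2h+1$ making $\beta_0$ strictly negative, whereas for even $n=2h$ the identical computation gives $\beta_0=0$ and no contradiction—consistent with Proposition~\ref{GQZ15}, which permits a $1$-query algorithm precisely when the target weight is at least $\lceil n/2\rceil$. Some care is also needed to justify that exactness simultaneously forces orthogonality of $|\psi_0\rangle$ and $|\psi_y\rangle$ for \emph{all} $y$ with $|y|=h$, since that is exactly what allows $T$ to range over every $h$-subset in the argument above.
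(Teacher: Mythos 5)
Your proof is correct and follows essentially the same route as the paper's: both write $\beta_i=\sum_j|\alpha_{ij}|^2$, use exactness to force orthogonality of the post-query states against $|\psi_0\rangle$, compare weight-$h$ inputs differing in two positions (equivalently, $h$-subsets differing in one index) to conclude $\beta_1=\cdots=\beta_n=\beta$, and then reach a contradiction with non-negativity of the $\beta_i$. The only cosmetic difference is the last step: the paper derives $\beta_0+\beta=0$ from one more orthogonality relation and concludes all $\beta_i=0$, contradicting $|\psi_s\rangle\neq\mathbf{0}$, whereas you invoke the normalization $\sum_{i=0}^n\beta_i=1$ to get $\beta_0=-1/(2h)<0$ --- the same equation, phrased differently.
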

\begin{proof}

The method of proof is similar to that of Theorem \ref{T-k2}.
Let $n=2h+1$.
Assume that there is an exact quantum 1-query algorithm with $U_0$, $O_x$ and $U_1$ being the sequence of unitary operators for $f^{(1)}_{2h+1,h}$, and with starting state $|\psi_s\rangle$.
Let $U_0|\psi_s\rangle=\sum_{i=0,j=1}^{n,m}\alpha_{ij}|i\rangle|j\rangle$. When $|x|=0$, we have $|\psi_0\rangle= O_xU_0|\psi_s\rangle=\sum_{i=0,j=1}^{n,m}\alpha_{ij}|i\rangle|j\rangle$.
Denote $\beta_i=\sum_{j=1}^m |\alpha_{ij}|^2$. We prove that $\beta_1=\cdots=\beta_n$ as follows.

Given two inputs $x$ and $y$ such that $|x|=|y|=h$, $x_k\neq y_k$, $x_l\neq y_l$ and $x_i=y_i$ for $i\neq k,l$ (in this case $x_k\neq x_l$ and $y_k\neq y_l$ ), then we have
\begin{align}
0=&(U_1|\psi_0\rangle)^{\dag}U_1O_x|\psi_0\rangle-(U_1|\psi_0\rangle)^{\dag}U_1O_y|\psi_0\rangle\\
=& \langle\psi_0|O_x|\psi_0\rangle-\langle\psi_0|O_y|\psi_0\rangle\\
=& \left(\sum_{i=0,j=1}^{n,m}\overline{\alpha}_{ij}\langle i|\langle j|\right) \\
&\times\left(\sum_{i=1,j=1}^{n,m}(-1)^{x_i}\alpha_{ij}|i\rangle|j\rangle-\sum_{i=1,j=1}^{n,m}(-1)^{y_i}\alpha_{ij}|i\rangle|j\rangle\right)\\
=&(-1)^{x_k}\beta_k+(-1)^{x_l}\beta_l-\left((-1)^{y_k}\beta_k+(-1)^{y_l}\beta_l\right)\\
=&\left((-1)^{x_k}-(-1)^{y_k} \right)(\beta_k-\beta_l)=2(-1)^{x_k}(\beta_k-\beta_l).
\end{align}
Therefore, $\beta_1=\cdots=\beta_n=\beta$ for some $\beta$.

In addition, suppose that the input $x$ satisfies $x_i=1$ for $1\leq i\leq h$, and $x_{j}=0$ for $h+1\leq j\leq n$. Then we have
\begin{align}
0=&(U_1|\psi_0\rangle)^{\dag}U_1O_x|\psi_0\rangle\\
=&\left(\sum_{i=0,j=1}^{n,m}\overline{\alpha}_{ij}\langle i|\langle j|\right) \left(\sum_{j=1}^{m}\alpha_{0j}|0\rangle|j\rangle+\sum_{i=1,j=1}^{n,m}(-1)^{x_i}\alpha_{ij}|i\rangle|j\rangle\right)\\
=&\beta_0+(-1)h\beta+(h+1)\beta=\beta_0+\beta=0,
\end{align}
but this leads to $\beta_0=\cdots=\beta_n=0$ and thus $\alpha_{ij}=0$ for $0\leq i\leq n$ and $1\leq j\leq m$, which is a contradiction. Hence, $Q_E(f^{(1)}_{2h+1,h})\geq 2$.
\end{proof}

Since any exact quantum query algorithm being able to compute $f^{(3)}_{2m+1,m}$ can also compute $f^{(1)}_{2m+1,m}$,
it follows  that $Q_E(f^{(3)}_{2m+1,m})\geq Q_E(f^{(1)}_{2m+1,m})>1$. Moreover, $f^{(3)}_{2m+1,m+1}$ is isomorphic to $f^{(3)}_{2m+1,m}$.
Combining Theorems \ref{T-deg-less2} and \ref{T-floor-n/2} as well as Proposition \ref{GQZ15}, we have the following result.

\begin{theorem}\label{odd-1-query}
Let $n$ be odd and let $f:\{0,1\}^n \to\{0,1\}$  be an $n$-bit symmetrically  partial function. Then $Q_E(f)=1$ if and only if $f$ is isomorphic to one of the functions $f^{(1)}_{n,k}$,
 where $ k\geq \lceil n/2 \rceil$.
\end{theorem}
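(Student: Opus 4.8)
The plan is to combine the degree-$2$ classification for odd $n$ (Lemma \ref{odd}) with the general lower bound $Q_E(f)\geq \frac{1}{2}\widetilde{\text{deg}}_{0}(f)$ (Lemma \ref{Lemma-lower-bound1}), and then to prune the resulting finite list of candidates using the specialised lower bounds already established. Throughout I write $n=2m+1$, so that $\lfloor n/2\rfloor=m$ and $\lceil n/2\rceil=m+1$. For the forward direction, suppose $Q_E(f)=1$. Since $f$ is then non-constant, Lemma \ref{Lemma-lower-bound1} yields $\text{deg}(f)\leq 2Q_E(f)=2$, hence $\text{deg}(f)\in\{1,2\}$. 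By Lemma \ref{odd}, $f$ is therefore isomorphic to one of $f^{(1)}_{n,n}$, $f^{(1)}_{n,k}$, $f^{(2)}_{n,k}$, $f^{(3)}_{n,l}$, $f^{(4)}_{n,1}$, with $n-1\geq k\geq m$ and $m+1\geq l\geq m$. It remains to discard every candidate that is not isomorphic to some $f^{(1)}_{n,k}$ with $k\geq m+1=\lceil n/2\rceil$.

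The bulk of the pruning is a direct appeal to the earlier lower bounds. The family $f^{(2)}_{n,k}$ is eliminated at once by Theorem \ref{T-k2}, which gives $Q_E(f^{(2)}_{n,k})\geq 2$ for all $0<k<n$. For $f^{(1)}_{n,k}$, Theorem \ref{T-floor-n/2} gives $Q_E(f^{(1)}_{2m+1,m})\geq 2$, so the value $k=m=\lfloor n/2\rfloor$ is impossible; only $k\geq m+1=\lceil n/2\rceil$ can remain, and these genuinely survive because Proposition \ref{GQZ15} asserts $Q_E(f^{(1)}_{n,k})=1$ for $k\geq\lceil n/2\rceil$ (the degree-$1$ function $f^{(1)}_{n,n}$ being the case $k=n$). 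For $f^{(3)}_{n,l}$ the only admissible indices are $l\in\{m,m+1\}$, and the remark following Theorem \ref{T-floor-n/2} records both $Q_E(f^{(3)}_{2m+1,m})\geq Q_E(f^{(1)}_{2m+1,m})>1$ and the isomorphism of $f^{(3)}_{2m+1,m+1}$ with $f^{(3)}_{2m+1,m}$, so both values of $l$ are ruled out.

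The one candidate whose complexity is not supplied verbatim by the excerpt is $f^{(4)}_{n,1}$, and this is the step that requires a small idea rather than a citation. The observation I would use is that $f^{(3)}_{n,m}$ is precisely the restriction of $f^{(4)}_{n,1}$ to the smaller promise $\{x:|x|\in\{0,m,n\}\}$: on this subdomain the two functions take identical values. Consequently any exact quantum algorithm computing $f^{(4)}_{n,1}$ on its full domain computes $f^{(3)}_{n,m}$ on the restricted domain, whence $Q_E(f^{(4)}_{n,1})\geq Q_E(f^{(3)}_{n,m})>1$. This disposes of the final candidate and completes the forward direction, leaving exactly the functions isomorphic to $f^{(1)}_{n,k}$ with $k\geq\lceil n/2\rceil$.

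The reverse direction is immediate: if $f$ is isomorphic to $f^{(1)}_{n,k}$ with $k\geq\lceil n/2\rceil$, then Proposition \ref{GQZ15} gives $Q_E(f^{(1)}_{n,k})=1$, and Fact \ref{fact1} guarantees that isomorphic partial functions have equal exact quantum query complexity, so $Q_E(f)=1$. I expect the only non-routine point to be the elimination of $f^{(4)}_{n,1}$, precisely because its complexity is not stated directly in the excerpt; the domain-restriction reduction to $f^{(3)}_{n,m}$ (and thence to $f^{(1)}_{n,m}$ through the cited remark) is the single conceptual step, and although it is short, it is what makes the odd-$n$ list strictly smaller than the degree-$2$ list of Lemma \ref{odd}.
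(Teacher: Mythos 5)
Your proposal is correct and follows essentially the same route as the paper: degree bound via Lemma \ref{Lemma-lower-bound1}, classification via Lemma \ref{odd}, pruning with Proposition \ref{GQZ15}, Theorem \ref{T-floor-n/2} and Theorem \ref{T-k2}, and Fact \ref{fact1} for the converse. The step you flag as your own ``small idea''---eliminating $f^{(4)}_{n,1}$ by restricting its domain so that an exact algorithm for it also computes $f^{(3)}_{n,l}$, hence $Q_E(f^{(4)}_{n,1})\geq Q_E(f^{(3)}_{n,l})\geq Q_E(f^{(1)}_{n,\lfloor n/2\rfloor})>1$---is in fact precisely the argument the paper itself uses, so there is no genuine divergence.
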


\begin{proof}
Suppose that $Q_E(f)=1$.  From Lemma \ref{Lemma-lower-bound1} it follows $\text{deg}(f)\leq 2Q_E(f)=2$. Due to Lemma \ref{odd}, $f$ is isomorphic to one of these functions: $f^{(1)}_{n,n}, f^{(1)}_{n,k}, f^{(2)}_{n,k}, f^{(3)}_{n,l}, f^{(4)}_{n,1}$,
where $n-1\geq k\geq \lfloor n/2\rfloor$, and $\lceil n/2\rceil \geq l\geq \lfloor n/2\rfloor$.

By using Proposition \ref{GQZ15},  $Q_E(f^{(1)}_{n,k})=1$ for $n\geq k\geq \lceil n/2\rceil$.

Next we verify the remainder functions have exact query complexity more than 1.

Firstly, Theorem \ref{T-floor-n/2} shows that $Q_E(f^{(1)}_{n,k})>1$ for $k=\lfloor n/2\rfloor$;

Secondly, Theorem \ref{T-k2} verifies that $Q_E(f^{(2)}_{n,k})>1$ for $n-1\geq k\geq \lfloor n/2\rfloor$;

Thirdly, for $l=\lfloor n/2\rfloor$, since any exact quantum query algorithm being able to compute $f^{(3)}_{n,l}$ can also compute $f^{(1)}_{n,l}$,
it follows  that $Q_E(f^{(3)}_{n,l})\geq Q_E(f^{(1)}_{n,l})>1$. Moreover, for $l=\lfloor n/2\rfloor$, $f^{(3)}_{n,l}$ is isomorphic to $f^{(3)}_{n,l+1}$, so, $Q_E(f^{(3)}_{n,l+1})= Q_E(f^{(3)}_{n,l})\geq Q_E(f^{(1)}_{n,l})>1$.

Finally, for $\lceil n/2\rceil \geq l\geq \lfloor n/2\rfloor$, any exact quantum query algorithm computing $f^{(4)}_{n,1}$ can also compute $f^{(3)}_{n,l}$,  so, $Q_E(f^{(4)}_{n,1})\geq Q_E(f^{(3)}_{n,l})>1$.

In a word, $Q_E(f)=1$ implies $f$ is isomorphic to one of the functions $f^{(1)}_{n,k}$, for $ k\geq \lceil n/2 \rceil$.

 On the other hand, since $Q_E(f^{(1)}_{n,k})=1$ for $k\geq \lceil n/2\rceil$,
 if  $f$ is isomorphic to one of these functions $f^{(1)}_{n,k}$ for $ k\geq \lceil n/2 \rceil$,  by {\bf Fact \ref{fact1}}, it holds $Q_E(f)=1$.

\end{proof}

\begin{remark}
 $f^{(3)}_{n,n/2}$ is  the Deutsch-Jozsa problem. Sometimes we can equivalently transform some problems to the Deutsch-Jozsa problem or its more special cases by padding some strings. Indeed, if we pad $2k-n$ zeros to the input of the function $f^{(1)}_{n,k}$, then it is equivalently to solve $f^{(1)}_{2k,k}$ that is simpler and more special than the Deutsch-Jozsa  problem.  Therefore we can use the  Deutsch-Jozsa algorithm to solve the problem. That is to say, a symmetrically  partial Boolean function $f$ has $Q_E(f)=1$ if and only if $f$ can be computed by the Deutsch-Jozsa algorithm after appropriate padding of the input.
\end{remark}

Therefore, with Theorems \ref{even-1-query} and \ref{odd-1-query} we are ready to obtain the second main result of the article:

{\bf Theorem 1.} {\em
Any symmetric and partial Boolean function $f$ has $Q_E(f)=1$ if and only if $f$ can be computed by the Deutsch-Jozsa algorithm.}

Next we further discuss the exact quantum query complexity for the partial symmetric Boolean functions with degree 2.

We have already proved that  $Q_E(f^{(1)}_{2m+1,m})>1$ and $Q_E(f^{(2)}_{2m+1,m})>1$, and furthermore $Q_E(f^{(3)}_{2m+1,m+1})=Q_E(f^{(3)}_{2m+1,m})\geq Q_E(f^{(1)}_{2m+1,m})>1$.

 More investigations concerning  $Q_E(f^{(2)}_{n,k})$  and $Q_E(f^{(4)}_{n})$ will be done in next section. Now we give two optimal algorithms for $f^{(1)}_{2m+1,m}$ and $f^{(3)}_{2m+1,m+1}$ in the following.

\begin{theorem} \label{QE2}
 $Q_E(f^{(1)}_{2m+1,m})=Q_E(f^{(3)}_{2m+1,l})= 2$, where $m\leq l\leq m+1$.
\end{theorem}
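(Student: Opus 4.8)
The plan is to establish the matching upper bound $Q_E \le 2$ by a padding reduction to the generalized Deutsch--Jozsa problem, since the lower bound $Q_E \ge 2$ is already in hand: Theorem \ref{T-floor-n/2} gives $Q_E(f^{(1)}_{2m+1,m}) \ge 2$, and because any algorithm computing $f^{(3)}_{2m+1,m}$ also computes $f^{(1)}_{2m+1,m}$ (the promise of the latter is contained in that of the former), together with the isomorphism between $f^{(3)}_{2m+1,m+1}$ and $f^{(3)}_{2m+1,m}$, the same bound holds for $f^{(3)}_{2m+1,l}$ with $l\in\{m,m+1\}$. So only the algorithmic constructions remain.

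The key observation is that appending a bit of \emph{known} value to the input costs no query: the oracle $O_{x'}$ on the padded string of length $2m+2$ is realized by applying $O_x$ on the first $2m+1$ positions and a fixed phase $(-1)^b$ on the appended position, so each call to $O_{x'}$ uses exactly one call to $O_x$. I would use this to map each target function onto $\text{DJ}^1_{2m+2}$, whose exact quantum query complexity is $2$ by Theorem \ref{QE(DJ)} (the upper bound coming from Theorem \ref{Th-upper-bound}); note that $n'=2m+2$ is even and $k'=1<n'/2=m+1$ for $m\ge 1$, so the hypotheses apply.

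Concretely, for $f^{(1)}_{2m+1,m}$ and $f^{(3)}_{2m+1,m}$ I would append a single $1$, producing $x'=x1$ with $|x'|=|x|+1$; for $f^{(3)}_{2m+1,m+1}$ I would append a single $0$, producing $x'=x0$ with $|x'|=|x|$. In every case the admissible weights are sent into the promise of $\text{DJ}^1_{2m+2}$: the weight-$l$ inputs become midpoint inputs $|x'|=m+1$ (value $1$), while the weights $0$ and $2m+1$ become padded weights among $\{0,1,2m+1,2m+2\}$ (value $0$, since these satisfy $|x'|\le 1$ or $|x'|\ge 2m+1$, which is exactly the value-$0$ region of $\text{DJ}^1_{2m+2}$). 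Moreover the function values agree, namely $f^{(1)}_{2m+1,m}(x)=\text{DJ}^1_{2m+2}(x1)$, $f^{(3)}_{2m+1,m}(x)=\text{DJ}^1_{2m+2}(x1)$, and $f^{(3)}_{2m+1,m+1}(x)=\text{DJ}^1_{2m+2}(x0)$. Running the optimal $2$-query algorithm for $\text{DJ}^1_{2m+2}$ on the padded input therefore computes each target function exactly with $2$ queries, giving $Q_E\le 2$, and combining with the lower bound yields $Q_E(f^{(1)}_{2m+1,m})=Q_E(f^{(3)}_{2m+1,l})=2$.

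The step that needs the most care is verifying that every promised weight of the original function lands inside the promise of $\text{DJ}^1_{2m+2}$ (in particular that the padded weights $0,1,2m+1,2m+2$ all fall in the value-$0$ region and that the midpoint is hit exactly), together with the clean argument that known-bit padding is genuinely query-free. Should the reduction be deemed less transparent, an alternative is to build the $2$-query algorithm directly from the subroutine $\mbox{Xquery}$ as in Algorithm \ref{DJA}; but the padding reduction is shorter and reuses Theorem \ref{QE(DJ)} verbatim.
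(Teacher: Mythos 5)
Your proof is correct, and its lower-bound half coincides with the paper's own argument (Theorem \ref{T-floor-n/2}, the promise-containment reduction from $f^{(3)}_{2m+1,m}$ to $f^{(1)}_{2m+1,m}$, and the isomorphism between $f^{(3)}_{2m+1,m}$ and $f^{(3)}_{2m+1,m+1}$ via Fact \ref{fact1}). For the upper bound, however, you take a genuinely different route. The paper builds explicit adaptive algorithms (Algorithms \ref{Algorithm-F1} and \ref{Algorithm-F3}): it queries $x_1$ classically, then branches into a single $1$-query subroutine on the remaining $2m$ bits --- the $\mathrm{DHW}$ primitive of Proposition \ref{GQZ15} in one branch, the Deutsch--Jozsa algorithm in the other --- so its two queries are ``one classical bit plus one $1$-query quantum primitive.'' You instead pad one bit of known value and run the $2$-query algorithm for $\text{DJ}^1_{2m+2}$ from Theorem \ref{Th-upper-bound}; this single non-adaptive reduction handles all three functions at once, and your weight bookkeeping checks out: weights $\{0,2m+1\}$ land among $\{0,1,2m+1,2m+2\}$, which is exactly the value-$0$ region $|x'|\le 1$ or $|x'|\ge 2m+1$ of $\text{DJ}^1_{2m+2}$, while weights $m$ and $m+1$ land precisely on the midpoint $m+1$; the hypothesis $k'=1<(2m+2)/2$ requires $m\ge 1$, the same restriction already implicit in Theorem \ref{T-floor-n/2}. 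As for what each approach buys: your reduction is shorter and uniform, reuses Theorem \ref{QE(DJ)} verbatim, and justifies the query-freeness of known-bit padding more carefully than the paper does anywhere (the paper pads inputs in several proofs without comment); the paper's route produces concrete algorithms assembled entirely from $1$-query primitives, which makes the adaptive structure visible and fits its broader theme that low-query exact algorithms decompose into Deutsch--Jozsa-type subroutines.
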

\begin{proof}
Let $n=2m+1$.    Let $\text{DJ}(n, 0, x)$ be the subroutine to compute the Deutsch-Jozsa  problem.  Let $\text{DHW}(n, k, x)$ be the subroutine to compute the function $f^{(1)}_{n,k}$, where $k\geq \lceil n/2\rceil$.
  As we knew, the exact quantum algorithms just use 1 query in the above  subroutines.
We now  give an exact quantum 2-query algorithm to compute $f^{(1)}_{2m+1,m}$ as Algorithm \ref{Algorithm-F1}.
\begin{algorithm}
\caption{Algorithm for $f^{(1)}_{2m+1,m}$}
\label{Algorithm-F1}
\begin{algorithmic}[1]
 \Procedure{f1} {{\bf integer} $n$, {\bf array} $x$}\Comment{$x\in\{0,1\}^n$}
\State Query $x_1$
\If{$x_1=1$} \Return 1
\EndIf
\If{$x_1=0$}
\State  $x\gets x\setminus\{x_1\}$
\State \Return $\text{DHW}(n-1, \lfloor n/2 \rfloor, x)$
\EndIf
\EndProcedure
\end{algorithmic}
\end{algorithm}

We give an   exact quantum  2-query algorithm to compute $f^{(3)}_{2m+1,m+1}$  as Algorithm \ref{Algorithm-F3}.
\begin{algorithm}
\caption{Algorithm for $f^{(3)}_{2m+1,m+1}$}
\label{Algorithm-F3}
\begin{algorithmic}[1]
 \Procedure{f3} {{\bf integer} $n$,  {\bf array} $x$}\Comment{$x\in\{0,1\}^n$}
\State Query $x_1$
\State  $x\gets x\setminus\{x_1\}$
\If{$x_1=1$}
\State \Return $\text{DJ}(n-1, 0, x)$
\EndIf
\If{$x_1=0$}
\State \Return $\text{DHW}(n-1, \lceil n/2\rceil, x)$
\EndIf
\EndProcedure
\end{algorithmic}
\end{algorithm}

It is clear that the above two algorithms can compute the functions $f^{(1)}_{2m+1,m}$ and $f^{(3)}_{2m+1,m}$ with two queries, respectively.
The function $f^{(3)}_{2m+1,m}$ is isomorphic to the function $f^{(3)}_{2m+1,m+1}$, so it holds that $Q_E(f^{(3)}_{2m+1,m})= 2$.
\end{proof}

\begin{remark}
From Theorem \ref{QE2} it follows the optimal exact quantum query complexity is 2 for solving a variant of the  Deutsch-Jozsa problem, i.e. distinguishes between inputs of Hamming weight in $\{ \lfloor n/2\rfloor, \lceil n/2\rceil \}$ and Hamming weight in $\{ 0, n\}$ for  all odd $n$.
\end{remark}

\section{Further results}

This section first studies  $n$-bit symmetrically  partial  function $\text{DW}_n^{k,l}$:
\begin{equation}
\text{DW}_n^{k,l}(x)=\left\{\begin{array}{ll}
                    0 &\text{if}\ |x|=k, \\
                    1 &\text{if}\ |x|=l.
                  \end{array}
 \right.
\end{equation}
We will give some optimal exact quantum query algorithms to  compute the function $\text{DW}_n^{k,l}$ for some special choices of $k$ and $l$, and then use $\text{DW}_n^{k,l}$ as subroutines to give some exact quantum query algorithms to compute the functions   $f^{(2)}_{n,k}$ and $f^{(4)}_{n}$.

\begin{theorem}\label{Th-4m-m-3m}
 $Q_E(\text{{\em DW}}_{4m}^{m,3m})=2$.
\end{theorem}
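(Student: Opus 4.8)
The plan is to establish the two inequalities $Q_E(\text{DW}_{4m}^{m,3m})\ge 2$ and $Q_E(\text{DW}_{4m}^{m,3m})\le 2$ separately, obtaining the lower bound from the polynomial method and the upper bound from an explicit exact $2$-query algorithm.

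For the lower bound I would first show $\widetilde{\text{deg}}_0(\text{DW}_{4m}^{m,3m})\ge 3$ and then invoke Lemma \ref{Lemma-lower-bound1}, which gives $Q_E\ge \tfrac12\widetilde{\text{deg}}_0\ge \tfrac32$, hence $Q_E\ge 2$ since the complexity is an integer. To prove the degree bound, suppose for contradiction that some degree-$2$ polynomial represents the function; by Lemma \ref{Lemma-degree} it may be taken symmetric, i.e. $q(x)=Q(|x|)$ with $Q(i)=c_0+c_1 i+c_2\binom{i}{2}$ a quadratic in $i$. Writing $Q$ in vertex form about the centre $2m$, the two equalities $Q(m)=0$ and $Q(3m)=1$ fix the linear coefficient and one relation among the remaining coefficients, leaving a single free parameter, the leading coefficient $a$. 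Evaluating the boundedness requirement $0\le Q(i)\le 1$ at the two extreme weights $i=0$ and $i=4m$ then forces $3am^2\ge \tfrac12$ and $3am^2\le -\tfrac12$ simultaneously, a contradiction. Hence no degree-$2$ representation exists and the exact degree is at least $3$.

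For the upper bound I would exhibit an exact $2$-query algorithm. The design target is an acceptance probability that, as a function of $|x|$, vanishes at weight $m$ and equals $1$ at weight $3m$ while lying in $[0,1]$ for every weight; a convenient candidate is the odd cubic centred at $2m$,
\begin{equation}
p(|x|)=\frac12+\frac{3}{4m}\,(|x|-2m)-\frac{1}{4m^3}\,(|x|-2m)^3 ,
\end{equation}
which one checks satisfies $p(m)=0$, $p(3m)=1$ and $0\le p\le 1$ on $\{0,1,\dots,4m\}$ (so it simultaneously certifies $\text{deg}=3$, matching the lower bound). The intuition guiding the construction is the Deutsch--Jozsa phase kickback: a single query inside the circuit $H^{\otimes 4m}O_xH^{\otimes 4m}$ places amplitude $\tfrac{4m-2|x|}{4m}$ on $|0\rangle$, which equals $+\tfrac12$ when $|x|=m$ and $-\tfrac12$ when $|x|=3m$, so one query already separates the two classes in \emph{phase} but not in measurement statistics. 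The role of the second query is to convert this sign into a distinguishable (orthogonal) final state, which is exactly why one query is insufficient while two should suffice.

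The main obstacle is the upper bound itself: turning the representing polynomial $p$ into an actual algorithm, since the polynomial method (Lemma \ref{lm-bw-1}) is only a lower-bound tool and does not return a unitary realisation for free. Concretely I would fix a starting state and operators $U_0,O_x,U_1,O_x,U_2$ together with a two-outcome measurement, compute the final amplitudes $\alpha_i(x)$ on the two promised weight classes, and verify that the $1$-output subspace collects all the amplitude when $|x|=3m$ and none when $|x|=m$, equivalently that the two final states are orthogonal. One should note that the single-classical-query reductions used for $f^{(1)}_{2m+1,m}$ and $f^{(3)}_{2m+1,m+1}$ in Theorem \ref{QE2} cannot be reused here: conditioning on one queried bit leaves a residual problem on $4m-1$ bits whose two surviving weights are still far from $0$, from $n$, and from the balanced point, so it is again of degree $\ge 3$ and hence not solvable by any of the one-query primitives characterised in Theorems \ref{even-1-query} and \ref{odd-1-query} (Deutsch--Jozsa and the functions $f^{(1)}_{n,k}$). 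Thus a genuinely quantum $2$-query construction, and the verification of its exactness, is the crux of the proof.
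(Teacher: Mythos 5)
Your lower bound argument is correct and complete: writing the putative degree-$2$ symmetric representation in vertex form about $2m$, the constraints $Q(m)=0$, $Q(3m)=1$ do give $Q(0)=3am^2-\tfrac12$ and $Q(4m)=3am^2+\tfrac32$, so $Q(0)\ge 0$ and $Q(4m)\le 1$ are incompatible; hence $\widetilde{\text{deg}}_{0}(\text{DW}_{4m}^{m,3m})\ge 3$ and, by Lemma \ref{Lemma-lower-bound1}, $Q_E(\text{DW}_{4m}^{m,3m})\ge 2$. This is in fact a more self-contained route than the paper's, which instead observes that $\text{DW}_{4m}^{m,3m}$ is isomorphic to none of the degree-$\le 2$ functions classified in Theorem \ref{T-deg-less2}.

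The genuine gap is the upper bound, which you yourself identify as ``the crux'' but never carry out: you specify the shape of a would-be algorithm ($U_0,O_x,U_1,O_x,U_2$ plus a two-outcome measurement) and a target acceptance polynomial $p$, but the polynomial method is one-directional, so exhibiting $p$ establishes nothing about $Q_E\le 2$, and no unitaries are ever produced. The missing idea, which is the paper's key observation, is that a single Grover iteration is \emph{exact} at precisely these Hamming weights. Apply $G=-WZ_1W^{\dagger}Z_f$ (one query) to the uniform superposition over $n=4m$ indices; the resulting state is
\begin{equation}
\frac{1}{\sqrt{n}}\sum_{i=1}^{n}\left(\left(\frac{2}{n}\sum_{j=1}^{n}(-1)^{x_j}\right)-(-1)^{x_i}\right)|i\rangle .
\end{equation}
When $|x|=m=n/4$ the prefactor $\frac{2}{n}\sum_j(-1)^{x_j}$ equals $1$, so all amplitude lies on indices with $x_i=1$; when $|x|=3m$ it equals $-1$, so all amplitude lies on indices with $x_i=0$. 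Measuring therefore returns, with certainty, an index $i$ whose bit value is $1$ in the first case and $0$ in the second, and a second (classical) query to $x_i$ decides the function by outputting $1-x_i$. Note that this also shows your closing remark is only half right: a classical query \emph{is} reusable here, just not as the first query --- conditioning on a classically queried bit first indeed leaves a residual problem of degree $\ge 3$, as you say, but the correct ordering (quantum Grover step first, classical query second) converts the global weight promise into a deterministic local certificate, which is exactly what your proposal lacks.
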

\begin{proof}

We give a 2-query exact quantum algorithm to compute $\text{DW}_{4m}^{m,3m}$ as Algorithm \ref{A-DW-4m}, where the subroutine Grover($n,x$) is a 1-query Grover search \cite{Gro96} which returns an index $i$. We describe Grover($n,x$) as follows:
\begin{enumerate}
\item [(1)] Begin with the quantum state $|1\rangle$, and a unitary transformation acts on it, resulting in $|\psi_0\rangle=W|1\rangle=\frac{1}{\sqrt{n}}\sum_{i=1}^n|i\rangle$.
\item [(2)] Act on the quantum state $|\psi_0\rangle$ with the transformation $G=-WZ_1W^{\dag}Z_f$, where
\begin{equation}
Z_1|i\rangle=\left\{\begin{array}{ll}
                    -|i\rangle \ \text{if } i=1, \\
                   \ \ \ |i\rangle \  \text{if } i\neq 1, \\
                  \end{array}
 \right. \ \text{and\  \ }
Z_f|i\rangle=(-1)^{x_i}|i\rangle.
\end{equation}
\item [(3)] Measure the quantum state with the projective measurement $\{|i\rangle\langle i|\}_{i=1}^{n}$, then returning the measurement result $i$.
\end{enumerate}
Let us consider the quantum state after the transformation $G$:
\begin{align}
|\psi\rangle &=G|\psi_0\rangle=-WZ_1W^{\dag}Z_f\frac{1}{\sqrt{n}}\sum_{i=1}^n|i\rangle \\
 &=-W(I-2|1\rangle\langle1|)W^{\dag}Z_f\left(\frac{1}{\sqrt{n}}\sum_{i=1}^n|i\rangle\right)\\
&=(2W|1\rangle\langle1|W^{\dag}-I) \left(\frac{1}{\sqrt{n}}\sum_{i=1}^n(-1)^{x_i}|i\rangle\right)\\
&=\left(2\left(\frac{1}{\sqrt{n}}\sum_{i=1}^n|i\rangle\right) \left(\frac{1}{\sqrt{n}}\sum_{i=1}^n\langle i|\right) -I\right) \left(\frac{1}{\sqrt{n}}\sum_{i=1}^n(-1)^{x_i}|i\rangle\right)\\
&=\frac{1}{\sqrt{n}}\sum_{i=1}^n\left( \left(\frac{2}{n}\sum_{j=1}^n(-1)^{x_j}\right)-(-1)^{x_i}\right)|i\rangle.
\end{align}
When $|x|=n/4$, we have $\frac{2}{n}\sum_{j=1}^n(-1)^{x_j}=1$ and $|\psi\rangle=\frac{2}{\sqrt{n}}\sum_{i:x_i=1}|i\rangle$. When $|x|=3n/4$, we have $\frac{2}{n}\sum_{j=1}^n(-1)^{x_j}=-1$ and $|\psi\rangle=\frac{-2}{\sqrt{n}}\sum_{i:x_i=0}|i\rangle$. Therefore, after the measurement, the subroutine Grover($n,x$) will return an index $i$ such that $x_i=1$ if $|x|=n/4$ and $x_i=0$ if
$|x|=3n/4$.

\begin{algorithm}
\caption{Algorithm for $\text{DW}_{4m}^{m,3m}$}
\label{A-DW-4m}
\begin{algorithmic}[1]
 \Procedure{$\text{DW1}$} {{\bf integer} $n$,  {\bf array} $x$}\Comment{$x\in\{0,1\}^n$}
\State $i\gets$ Grover($n,x$)
\State Query $x_i$
\State \Return $1-x_i$.
\EndProcedure
\end{algorithmic}
\end{algorithm}

According to the above analysis, it is clear that Algorithm \ref{A-DW-4m} computes the function $\text{DW}_{4m}^{m,3m}$ with 2 queries and thus $Q_E(\text{DW}_{4m}^{m,3m})\leq 2$.  We now prove $Q_E(\text{DW}_{4m}^{m,3m})\geq 2$ as follows. We note that the function $\text{DW}_{4m}^{m,3m}$ is not isomorphic to any function  pointed out in Theorem \ref{T-deg-less2}. Therefore, we have $\text{deg}(\text{DW}_{4m}^{m,3m})>2$ and $Q_E(\text{DW}_{4m}^{m,3m})> 1$ as well. Therefore, $Q_E(\text{DW}_{4m}^{m,3m})= 2$.
\end{proof}

Using the padding method (similar method used in \cite{BBD+15}), we have
\begin{corollary}
For any integers $n$ and $k$ such that $0<k< n/3$,  $l\geq \max\{(2n+k)/3,3k\}$ and $l-k$ is even,  then $Q_E(\text{{\em DW}}_n^{k,l})=2$.
\end{corollary}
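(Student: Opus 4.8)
The plan is to establish the two bounds separately: the upper bound $Q_E(\text{DW}_n^{k,l})\le 2$ by a padding reduction to Theorem~\ref{Th-4m-m-3m}, and the lower bound $Q_E(\text{DW}_n^{k,l})\ge 2$ by showing that the function has degree greater than $2$ and invoking Lemma~\ref{Lemma-lower-bound1}.

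For the upper bound, I would append to each input $x\in\{0,1\}^n$ a fixed block of $b$ ones and $a$ zeros, producing a string $\tilde x$ of length $N=n+a+b$. If $|x|=k$ then $|\tilde x|=k+b$, and if $|x|=l$ then $|\tilde x|=l+b$. The goal is to make $\tilde x$ land on the two Hamming weights treated by Theorem~\ref{Th-4m-m-3m}, i.e. $k+b=N/4$ and $l+b=3N/4$. Subtracting these gives $N=2(l-k)$, hence $m:=N/4=(l-k)/2$, $b=(l-3k)/2$, and $a=N-n-b=(3l-k-2n)/2$. The hypotheses are exactly what legitimize this padding: $l-k$ even forces $m$, $a$ and $b$ to be integers (and $N=4m$); $l\ge 3k$ gives $b\ge 0$; and $l\ge(2n+k)/3$ gives $a\ge 0$. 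With these choices $\text{DW}_n^{k,l}(x)=\text{DW}_{4m}^{m,3m}(\tilde x)$. Since $\tilde x$ agrees with $x$ on its first $n$ coordinates and is a known constant on the padded coordinates, each query $O_{\tilde x}$ used by the $2$-query algorithm of Theorem~\ref{Th-4m-m-3m} can be simulated by one query $O_x$ together with input-independent phases. Thus two real queries suffice and $Q_E(\text{DW}_n^{k,l})\le 2$.

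For the lower bound, I would argue that $\text{DW}_n^{k,l}$ is not isomorphic to any of the functions classified in Theorem~\ref{T-deg-less2}, so that $\text{deg}(\text{DW}_n^{k,l})>2$ and therefore $Q_E(\text{DW}_n^{k,l})\ge \tfrac12\widetilde{\text{deg}}_0(\text{DW}_n^{k,l})>1$ by Lemma~\ref{Lemma-lower-bound1}, whence $Q_E\ge 2$ since it is an integer. The key observation is that $\text{DW}_n^{k,l}$ has exactly two defined Hamming weights, $k$ and $l$, while among the degree-$\le 2$ functions only $f^{(1)}_{n,n}$ and $f^{(1)}_{n,k'}$ have exactly two defined weights, and each of those carries one defined weight at an endpoint ($0$, or $n$ after the reversal isomorphism). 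Under the isomorphism group the defined-weight set $\{k,l\}$ can only be sent to $\{k,l\}$ or $\{n-k,n-l\}$; because $0<k<n/3$ and $l$ is interior, none of $k,l,n-k,n-l$ equals $0$ or $n$, so $\text{DW}_n^{k,l}$ can match neither $f^{(1)}_{n,n}$ nor any $f^{(1)}_{n,k'}$. Being non-constant it is not degree $0$, which finishes the lower bound.

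The main obstacle is the bookkeeping in the lower bound: I must check that neither defined weight, nor its complement under input negation, falls on an endpoint, which is precisely where the interiority of $k$ and $l$ is needed. The boundary case $l=n$ is the one to watch, since there $\text{DW}_n^{k,n}$ degenerates into an $f^{(1)}$-type function computable with a single query, and must be excluded. The upper bound is then routine once the padding parameters $a,b,m$ are fixed, the only subtlety being the integrality and non-negativity constraints, which the three hypotheses on $k$ and $l$ supply exactly.
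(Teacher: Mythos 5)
Your proposal is correct, and its core is exactly the paper's: you pad $(3l-k-2n)/2$ zeros and $(l-3k)/2$ ones (the paper's padding precisely), reducing $\text{DW}_n^{k,l}$ to $\text{DW}_{4m}^{m,3m}$ with $m=(l-k)/2$ and invoking Theorem~\ref{Th-4m-m-3m}, with the three hypotheses serving exactly to make the padding integral and non-negative. Where you go beyond the paper: its published proof consists \emph{only} of this padding reduction and then asserts equality, but padding is one-directional and by itself yields only $Q_E(\text{DW}_n^{k,l})\le 2$. You supply the missing lower bound explicitly --- $\text{DW}_n^{k,l}$ has exactly two defined Hamming weights, both interior, so it is isomorphic to no function classified in Theorem~\ref{T-deg-less2}, whence $\text{deg}(\text{DW}_n^{k,l})>2$ and $Q_E\ge 2$ by Lemma~\ref{Lemma-lower-bound1} --- which mirrors the argument the paper used inside the proof of Theorem~\ref{Th-4m-m-3m} but did not repeat for the corollary. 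Moreover, your flag on the boundary case is a genuine catch: the stated hypotheses do admit $l=n$ (whenever $n-k$ is even), and there $\text{DW}_n^{k,n}$ is isomorphic to $f^{(1)}_{n,n-k}$ with $n-k\ge\lceil n/2\rceil$, so $Q_E(\text{DW}_n^{k,n})=1$ by Proposition~\ref{GQZ15} and the claimed equality fails (the padding still gives the upper bound $2$, which is exactly why the failure is invisible to the paper's one-sided argument). So the corollary, and your lower-bound bookkeeping, both require the implicit restriction $l<n$; with that restriction your proof is complete and, strictly speaking, more rigorous than the paper's.
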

\begin{proof}
By padding $(3l-k)/2-n$ zeroes and $(l-3k)/2$ ones to the inputs of the function $\text{DW}_n^{k,l}$, we have the new $n'=n+(3l-k)/2-n+(l-3k)/2=2(l-k)$.
The new $k'=k+(l-3k)/2=(l-k)/2$ and the new $l'=l+(l-3k)/2=3(l-k)/2$.   Therefore, the new function is $\text{DW}_{n'}^{n'/4,3n'/4}$ and, from Theorem \ref{Th-4m-m-3m} it follows that $Q_E(\text{DW}_n^{k,l})=2$.
\end{proof}

Let $k\leq n/4$.  Then $\max\{(2n+k)/3,3k\}=3n/4$. We have the following result.

\begin{corollary}
For any integers $n$ and $k$ such that $0<k\leq n/4$, $3n/4\leq l<n$ and $l-k$ is even,   then $Q_E(\text{{\em DW}}_n^{k,l})=2$.
\end{corollary}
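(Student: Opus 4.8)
The plan is to obtain this statement as a direct specialization of the preceding corollary, so that the whole argument reduces to verifying that the three hypotheses assumed here --- $0<k\leq n/4$, $3n/4\leq l<n$, and $l-k$ even --- imply the three conditions required there, namely $0<k<n/3$, $l\geq\max\{(2n+k)/3,\,3k\}$, and $l-k$ even. First I would dispose of the range condition on $k$: since $k\leq n/4<n/3$ for $n>0$, the requirement $0<k<n/3$ of the preceding corollary is immediate.

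The only genuine computation is the bound on the threshold $\max\{(2n+k)/3,\,3k\}$. Using $k\leq n/4$ I would estimate each argument of the maximum separately: on one hand $3k\leq 3n/4$, and on the other hand $(2n+k)/3\leq (2n+n/4)/3 = 3n/4$, whence $\max\{(2n+k)/3,\,3k\}\leq 3n/4$. Consequently the assumption $l\geq 3n/4$ already forces $l\geq\max\{(2n+k)/3,\,3k\}$, which is exactly the second condition needed. The parity condition $l-k$ even is carried over verbatim, and the bound $l<n$ merely keeps $l$ within the admissible Hamming-weight range.

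With all three hypotheses of the preceding corollary verified, I would invoke it directly to conclude $Q_E(\text{DW}_n^{k,l})=2$. There is no substantive obstacle in this step; the one point requiring care is the elementary inequality showing that both arguments of the maximum are bounded by $3n/4$, and this is precisely where the sharper cutoff $k\leq n/4$ is used rather than the weaker $k<n/3$ --- indeed the estimate $(2n+k)/3\leq 3n/4$ is tight at $k=n/4$.
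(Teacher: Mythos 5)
Your proposal is correct and follows essentially the same route as the paper: the paper also obtains this corollary as a direct specialization of the preceding one, noting just before the statement that $k\leq n/4$ makes the threshold $\max\{(2n+k)/3,\,3k\}$ comparable to $3n/4$. In fact your argument is slightly more careful than the paper's one-line justification, which asserts $\max\{(2n+k)/3,3k\}=3n/4$ when the correct (and sufficient) claim is the inequality $\max\{(2n+k)/3,3k\}\leq 3n/4$, with equality only at $k=n/4$.
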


The next theorem is
implicit in the combination of the proof in \cite{BBD+15} and  Theorem \ref{T-deg-less2}.

\begin{theorem}
 For any integers $n$ and $l$ such that $\frac{n}{4}\leq l<\lfloor \frac{n}{2}\rfloor $, then $Q_E(\text{{\em DW}}_{n}^{0,l})=2$.
\end{theorem}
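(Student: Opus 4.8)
The plan is to establish the two inequalities $Q_E(\text{DW}_n^{0,l})\ge 2$ and $Q_E(\text{DW}_n^{0,l})\le 2$ separately, taking the lower bound from the degree characterization of Theorem \ref{T-deg-less2} and the upper bound from a padding reduction to the Grover-based algorithm of Theorem \ref{Th-4m-m-3m}.

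For the lower bound I would first argue that $\text{deg}(\text{DW}_n^{0,l})>2$. The function $\text{DW}_n^{0,l}$ is defined at exactly two Hamming weights, namely $0$ (value $0$) and $l$ (value $1$). Under the four isomorphism operations (reversal $b_i\mapsto b_{n-i}$, output negation $b_i\mapsto \bar b_i$, and their composition) its whole iso-class consists of the four shapes $\{0\mapsto 0,\ l\mapsto 1\}$, $\{n\mapsto 0,\ (n-l)\mapsto 1\}$, $\{0\mapsto 1,\ l\mapsto 0\}$ and $\{n\mapsto 1,\ (n-l)\mapsto 0\}$. I would then compare this with the list of Theorem \ref{T-deg-less2}: since $\text{DW}_n^{0,l}$ has only two defined points, it cannot be isomorphic to $f^{(2)}_{n,k}$, $f^{(3)}_{n,l}$ or $f^{(4)}_{n}$ (which have three or four defined points), nor to the degree-$1$ function $f^{(1)}_{n,n}$; the only candidate is $f^{(1)}_{n,k}$ with $\lfloor n/2\rfloor\le k\le n-1$, whose admissible shape carrying a $0$ at weight $0$ forces $k=l\ge\lfloor n/2\rfloor$. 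As $l<\lfloor n/2\rfloor$ by hypothesis, no such match exists, so $\text{deg}(\text{DW}_n^{0,l})>2$. Lemma \ref{Lemma-lower-bound1} then yields $Q_E(\text{DW}_n^{0,l})\ge\tfrac12\text{deg}(\text{DW}_n^{0,l})>1$, hence $Q_E(\text{DW}_n^{0,l})\ge 2$.

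For the upper bound I would pad the input with $4l-n$ extra $0$'s, a nonnegative quantity precisely because $l\ge n/4$, turning $\text{DW}_n^{0,l}$ into $\text{DW}_{n'}^{0,l}$ with $n'=4l$, so that the heavy weight becomes exactly $l=n'/4$. The padded oracle is simulated by the original oracle at no extra cost, since the padded bits are fixed constants that are never really queried. On this padded instance I would run the one-query Grover subroutine of Theorem \ref{Th-4m-m-3m}, whose output amplitude on $|i\rangle$ is proportional to $\big(\tfrac{2}{n'}\sum_j(-1)^{x_j}-(-1)^{x_i}\big)$. For $|x|=0$ this is the uniform state and a measurement returns some index $i$ with $x_i=0$; for $|x|=l=n'/4$ one has $\tfrac{2}{n'}\sum_j(-1)^{x_j}=1$, so every $0$-position amplitude cancels and the measurement returns an index $i$ with $x_i=1$. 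A second query to $x_i$ together with the output rule ``return $x_i$'' then decides the function exactly with two queries in total, giving $Q_E(\text{DW}_n^{0,l})\le 2$; combined with the lower bound this is $Q_E(\text{DW}_n^{0,l})=2$.

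The main obstacle is the upper bound, and specifically its exactness rather than a bounded-error guarantee. The Grover subroutine produces a measurement supported entirely on the correct set of positions only when $\tfrac{2}{n'}\sum_j(-1)^{x_j}=1$, i.e. when the heavy weight sits at exactly $n'/4$; this exact cancellation is what the padding by $4l-n$ zeros achieves, and it is here that the hypothesis $l\ge n/4$ is indispensable (for $l<n/4$ the number of padding zeros would be negative and the cancellation would break). The remaining steps—checking that the resulting amplitudes have unit norm and that fixing the padded bits to $0$ leaves the quantum query count unchanged—are routine and parallel the verification already carried out for Algorithm \ref{A-DW-4m}.
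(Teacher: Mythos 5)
Your proposal is correct and follows essentially the same route as the paper: the upper bound by padding $4l-n$ zeros to reduce to $\text{DW}_{4l}^{0,l}$ and running the one-query Grover subroutine followed by a single verification query (exactly the paper's Algorithm for $\text{DW}_{4l}^{0,l}$), and the lower bound by noting $\text{deg}(\text{DW}_n^{0,l})>2$ via Theorem \ref{T-deg-less2} together with Lemma \ref{Lemma-lower-bound1}. Your only additions are to spell out explicitly the isomorphism check (ruling out $f^{(1)}_{n,k}$ since $l<\lfloor n/2\rfloor\leq k$) and the amplitude cancellation, both of which the paper leaves implicit by citing its earlier theorems.
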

\begin{proof}
Padding $(4l-n)$ zeros to the input, we have $n'=n+4l-n=4l$.  The function changes to $\text{DW}_{4l}^{0,l}$. We give an  exact quantum 2-query algorithm to compute $\text{DW}_{4l}^{0,l}$ as Algorithm \ref{A-DW-0-m}.
\begin{algorithm}
\caption{Algorithm for $\text{DW}_{4l}^{0,l}$}
\label{A-DW-0-m}
\begin{algorithmic}[1]
 \Procedure{$\text{DW2}$} {{\bf integer} $n$,  {\bf array} $x$}\Comment{$x\in\{0,1\}^n$}
\State $i\gets$ Grover($n,x$)
\State Query $x_i$
\State \Return $x_i$.
\EndProcedure
\end{algorithmic}
\end{algorithm}
Similar to the analysis in Theorem \ref{Th-4m-m-3m}, it is clear that Algorithm \ref{A-DW-0-m} computes $\text{DW}_{4l}^{0,l}$. Therefore, we have $Q_E(\text{DW}_{n}^{0,l})\leq 2$ for  $\frac{n}{4}\leq l<\lfloor \frac{n}{2}\rfloor $. According to Theorem \ref{T-deg-less2}, we have $\text{deg}(\text{DW}_{n}^{0,l})>2$ for $\frac{n}{4}\leq l<\lfloor \frac{n}{2}\rfloor $. Therefore, we have  $Q_E(\text{DW}_{n}^{0,l})\geq 2$ for $\frac{n}{4}\leq l<\lfloor \frac{n}{2}\rfloor $.
\end{proof}

Now we use $\text{DW}_n^{k,l}$ as subroutines to give some algorithms for the functions that we discussed in Section \ref{section-degree}.
\begin{theorem}\label{T-k2-less4}
For $n/4\leq k<n$, $Q_E(f^{(2)}_{n,k})\leq 4$, where
\begin{equation}
    f^{(2)}_{n,k}(x)=\left\{\begin{array}{ll}
                    0 &\text{if}\ |x|=0, \\
                    1 &\text{if}\ |x|= k\ \text{\em or } |x|=k+1.
                  \end{array}
 \right.
\end{equation}
\end{theorem}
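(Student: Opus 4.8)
The plan is to write $f^{(2)}_{n,k}$ as a disjunction of two simpler distinguishing problems and to solve each of them with two exact queries, for a total of four. Concretely, on the promised domain $\{x:|x|\in\{0,k,k+1\}\}$ I claim that
\begin{equation}
f^{(2)}_{n,k}(x)=\text{DW}_n^{0,k}(x)\vee \text{DW}_n^{0,k+1}(x),
\end{equation}
in the operational sense that when $|x|=0$ both $\text{DW}_n^{0,k}$ and $\text{DW}_n^{0,k+1}$ return $0$ with certainty; when $|x|=k$ the routine $\text{DW}_n^{0,k}$ returns $1$ with certainty; and when $|x|=k+1$ the routine $\text{DW}_n^{0,k+1}$ returns $1$ with certainty. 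Hence running an exact algorithm for each of the two $\text{DW}$ problems and outputting the logical OR of the two outcomes should compute $f^{(2)}_{n,k}$ exactly.

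Next I would produce the two-query routines for $\text{DW}_n^{0,k}$ and $\text{DW}_n^{0,k+1}$. The hypothesis $n/4\le k<n$ gives $k\ge n/4$ and $k+1>n/4$, so both targets $l\in\{k,k+1\}$ satisfy $l\ge n/4$. For each such $l$ I pad the input with $4l-n\ge 0$ extra zeros; this leaves the Hamming weight unchanged and converts the task into $\text{DW}_{4l}^{0,l}$, i.e.\ distinguishing weight $0$ from weight $4l/4$. This is precisely the instance solved with two queries by the Grover-based Algorithm~\ref{A-DW-0-m}: on an all-zero input the amplitude step produces a uniform superposition whose measured index points to a $0$-bit, whereas on a weight-$l$ input it points to a $1$-bit, and the single follow-up query reveals the answer. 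This gives $Q_E(\text{DW}_n^{0,l})\le 2$ for every $l\ge n/4$, in particular for $l=k$ and $l=k+1$, the extreme case $k+1=n$ being handled by padding to $4n$ zeros.

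The combined four-query procedure is then: run the two-query algorithm for $\text{DW}_n^{0,k}$ and record its bit $r_1$; run the two-query algorithm for $\text{DW}_n^{0,k+1}$ and record its bit $r_2$; output $r_1\vee r_2$. I would verify exactness weight class by weight class: for $|x|=0$ both subroutines output $0$ with probability $1$, so the OR is $0$; for $|x|=k$ the first subroutine outputs $1$ with probability $1$, forcing the OR to $1$; symmetrically for $|x|=k+1$. The only genuine subtlety, and the step I expect to be the main obstacle, is exactly this: each $\text{DW}$ subroutine is fed inputs that lie \emph{outside} its own promise (weight $k+1$ for the first, weight $k$ for the second), where it may return an arbitrary or even randomized bit. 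The OR combination is chosen precisely so that on every weight class the on-promise subroutine pins the output to the correct value, rendering the off-promise bit harmless. Checking this, together with the boundary padding at $k+1=n$, completes the argument that $Q_E(f^{(2)}_{n,k})\le 4$.
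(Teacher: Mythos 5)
Your proposal is correct and takes essentially the same approach as the paper: Algorithm~\ref{A-f2} runs the same two padded Grover-based subroutines (for $\text{DW}_n^{0,k}$ and $\text{DW}_n^{0,k+1}$, two queries each) and combines them as a lazy OR --- returning $1$ as soon as the first queried bit is $1$ and otherwise returning the second bit --- which is exactly your $r_1\vee r_2$, relying on the same observation that an off-promise subroutine's output is harmless under this combination. The only cosmetic difference is that the paper evaluates the OR adaptively while you run both subroutines unconditionally.
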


\begin{proof}
We give an  exact quantum 4-query algorithm for the function as Algorithm \ref{A-f2}.
\begin{algorithm}
\caption{Algorithm for $f^{(2)}_{n,k}$}
\label{A-f2}
\begin{algorithmic}[1]
 \Procedure{f2} {{\bf integer} $n$, {\bf integer} $k$,  {\bf array} $x$}\Comment{$x\in\{0,1\}^n$}
 \State Pad $4k-n$ zeros to the input $x$ and get a new input $y$.
\State $i\gets$ Grover($4k,y$)
\State Query $x_i$
\If{$x_i=1$}
\State \Return 1
\EndIf
\State Pad $4(k+1)-n$ zeros to the input $x$ and get a new input $y$.
\State $i\gets$ Grover($4(k+1),y$)
\State Query $x_i$
\State \Return $x_i$
\EndProcedure
\end{algorithmic}
\end{algorithm}

It is clear that  Algorithm \ref{A-f2} can computes the function $f^{(2)}_{n,k}$.
\end{proof}

\begin{theorem}\label{T-f4-less5}
 For any odd $n$, $Q_E(f^{(4)}_{n})\leq 5$, where
\begin{equation}
   f^{(4)}_{n}(x)=\left\{\begin{array}{ll}
                    0 &\text{if}\ |x|=0 \ \text{\em or } |x|=n,\\
                    1 &\text{if}\ |x|=\lfloor n/2\rfloor \ \text{\em or } |x|=\lceil n/2\rceil.
                  \end{array}
 \right.
\end{equation}
\end{theorem}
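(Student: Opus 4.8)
The plan is to spend a single query on one input bit and then reduce the residual problem, in both possible outcomes, to computing $f^{(2)}_{2m,m}$ on the remaining $n-1=2m$ bits, for which Theorem \ref{T-k2-less4} already supplies a $4$-query exact algorithm. Together with the one initial query this yields the bound $5$. Write $n=2m+1$, so the promised Hamming weights are $0,m,m+1,2m+1$, with desired output $0$ on $\{0,2m+1\}$ and $1$ on $\{m,m+1\}$.

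First I would query $x_1$ and branch on its value. If $x_1=0$, the remaining $2m$ bits carry the full weight $|x|\in\{0,m,m+1\}$ (weight $2m+1$ is impossible once a $0$ has been observed), and the target output is exactly $f^{(2)}_{2m,m}$ evaluated on those bits: value $0$ at weight $0$ and value $1$ at weights $m$ and $m+1$. I therefore invoke the algorithm of Theorem \ref{T-k2-less4} on the remaining $2m$ bits, spending at most $4$ further queries. Its hypothesis $n'/4\le k<n'$ holds here with $n'=2m$ and $k=m$, since $2m/4\le m<2m$ for $m\ge 1$.

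If $x_1=1$, the remaining $2m$ bits carry weight $|x|-1\in\{m-1,m,2m\}$, and the target output is $1$ on weights $\{m-1,m\}$ and $0$ on weight $2m$. I would dispose of this case by negating the residual input: the $2m$-bit function $g$ with $g(y)=1$ for $|y|\in\{m-1,m\}$ and $g(y)=0$ for $|y|=2m$ satisfies $g(y)=f^{(2)}_{2m,m}(\bar y)$, because complementing a weight-$w$ string produces weight $2m-w$, sending $2m\mapsto 0$, $m\mapsto m$, and $m-1\mapsto m+1$. Thus $g$ is isomorphic to $f^{(2)}_{2m,m}$ by input negation alone (no output negation), and by Fact \ref{fact1} it has the same exact quantum query complexity; I run the $4$-query algorithm of Theorem \ref{T-k2-less4} on the complemented bits. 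In both branches the promise of $f^{(2)}_{2m,m}$ is respected, so the subroutine is always called on a legitimate input.

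Counting queries, the single query to $x_1$ plus the at most $4$ queries of the $f^{(2)}_{2m,m}$ subroutine gives at most $5$ queries in total, which establishes $Q_E(f^{(4)}_n)\le 5$. The one step that genuinely needs care is the $x_1=1$ branch: one must check that the residual function is isomorphic to $f^{(2)}_{2m,m}$ specifically through input negation (and verify the resulting weights $0,m,m+1$ land inside its promised domain), rather than mistaking it for a different degree-$2$ function. Once that identification is made, Fact \ref{fact1} together with Theorem \ref{T-k2-less4} closes the argument immediately.
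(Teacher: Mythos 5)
Your proof is correct and follows essentially the same route as the paper: one query to $x_1$, then in the $x_1=0$ branch a direct reduction of the residual problem to $f^{(2)}_{2m,m}$, and in the $x_1=1$ branch a reduction via complementing the remaining input, with Theorem \ref{T-k2-less4} supplying the $4$-query subroutine in both branches. Your explicit check of the hypothesis $n'/4\leq k< n'$ and of the weight bookkeeping under negation is exactly the identification the paper makes (there phrased as $f^{(4)}_{n-1,1}$ being isomorphic to $f^{(2)}_{n-1,\lfloor n/2\rfloor}$), so nothing further is needed.
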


\begin{proof}
We give a 5-query exact quantum algorithm for the function as Algorithm \ref{A-f4}.
\begin{algorithm}
\caption{Algorithm for $f^{(4)}_{n}$}
\label{A-f4}
\begin{algorithmic}[1]
 \Procedure{f4} {{\bf integer} $n$,  {\bf array} $x$}\Comment{$x\in\{0,1\}^n$}
\State Query $x_1$
\State  $x\gets x\setminus\{x_1\}$
\If{$x_1=0$}
\State \Return $\text{F2}(n-1, \lfloor n/2\rfloor, x)$
\EndIf
\If{$x_1=1$}
\State Let the new input $y=\overline{x}$
\State \Return $\text{F2}(n-1, \lfloor n/2\rfloor, y)$
\EndIf
\EndProcedure
\end{algorithmic}
\end{algorithm}

If $x_1=1$, then the function reduces to the following function:
\begin{equation}
   f^{(4)}_{n-1,1}(x)=\left\{\begin{array}{ll}
                    0 &\text{if}\ |x|=n-1,\\
                    1 &\text{if}\ |x|=\lfloor n/2\rfloor-1 \ \text{\em or } |x|=\lceil n/2\rceil-1,
                  \end{array}
 \right.
\end{equation}
which is isomorphic to $f_{n-1,\lfloor n/2\rfloor}^{(2)}$.
It is clear that  Algorithm \ref{A-f4} can compute the function $f^{(4)}_{n}$.
\end{proof}

\section*{Acknowledgements}
We thank Lvzhou Li for much essential discussions on  query complexity and  this article.
This work  was  supported by the National
Natural Science Foundation of China (Nos.   61572532, 61272058, 61602532) and the Fundamental Research Funds for the Central Universities of China (Nos. 17lgjc24, 161gpy43).

\section*{Appendix A: The subroutine for Xquery }\label{Xquery}

The subroutine will use basis state $|0,0\rangle$, $|i,0\rangle$ and  $|i,j\rangle$ with $1\leq i<j\leq m$.
\begin{enumerate}
  \item
The subroutine $\mbox{Xquery}$  begins in the  state $|0,0\rangle$ and then a unitary mapping $U_1$ is applied on it:
\begin{equation}
    U_1|0,0\rangle=\sum_{i=1}^m\frac{1}{\sqrt{m}}|i,0\rangle.
\end{equation}
 \item
The subroutine $\mbox{Xquery}$  then  performs the query:
\begin{equation}
 \sum_{i=1}^m\frac{1}{\sqrt{m}}|i,0\rangle\to  \sum_{i=1}^m\frac{1}{\sqrt{m}}(-1)^{x_i}|i,0\rangle.
\end{equation}

 \item The subroutine $\mbox{Xquery}$  performs a unitary mapping $U_2$ to the current state such that
 \begin{equation}
 U_2|i,0\rangle=\sum_{j>i}\frac{1}{\sqrt{m}}|i,j\rangle-\sum_{ j<i}\frac{1}{\sqrt{m}}|j,i\rangle+\frac{1}{\sqrt{m}}|0,0\rangle
 \end{equation}
 and the resulting quantum state will be
  \begin{align}
  &U_2\sum_{i=1}^m\frac{1}{\sqrt{m}}(-1)^{x_i}|i,0\rangle\nonumber\\
  =&\frac{1}{m}\sum_{i=1}^m(-1)^{x_i}|0,0\rangle+\frac{1}{m}\sum_{1\leq i<j}((-1)^{x_i}-(-1)^{x_j})|i,j\rangle.
   \end{align}
   \item The subroutine $\mbox{Xquery}$   measures the resulting state in the standard basis.  If the outcome is  $|0,0\rangle$, then $\sum_{i=1}^m(-1)^{x_i}\neq 0 $ and $|x|\neq m/2$. Otherwise, suppose that we get the state $|i,j\rangle$. Then we have $x_i\neq x_j$ and the subroutine outputs $(i,j)$.
\end{enumerate}

\section*{Appendix B: Proof of Equality (\ref{eq-det})}\label{details}

\begin{proof}
We define ${p \choose l}=0$ if $p<l$ and also ${p \choose l}=0$ if $l<0$.  For any integers $p$ and $l$,  it is easy to see that ${p \choose l}={p \choose p-l}$.   Now we prove that for any integers $p$ and $l$,
\begin{equation}
   (p+1){p \choose l}=(l+1){p+1 \choose l+1}.
\end{equation}
There are several cases as follows:
\begin{enumerate}
\item [Case 1]  $p<l$.  In this case, ${p \choose l}=0$ and ${p+1 \choose l+1}$, the equality holds.
  \item [Case 2] $p\geq l\geq 0$. In this case, $(p+1){p \choose l}=(p+1)\frac{p!}{l!(p-l)!}=(l+1)\frac{(p+1)!}{(l+1)!((p+1)-(l+1))!}=(l+1){p+1 \choose l+1}$.
  \item [Case 3] $p\geq l$ and $l<-1$. In this case,  ${p+1 \choose l+1}=0$ and ${p \choose l}=0$, the equality holds.
  \item [Case 4] $p\geq l$ and $l=-1$. In this case,   ${p \choose l}=0$ and $(l+1){p+1 \choose l+1}=0$, the equality holds.
\end{enumerate}
Therefore, the  equality holds.

Now we are ready to prove Equality (\ref{eq-det}).
\begin{align}
   &\left|
      \begin{array}{ccccc}
        {n \choose k+1} & {n \choose k+2}  & \cdots &  {n \choose 2k}   &  {n \choose 2k+1}\\
       {n-1 \choose k+1}& {n-1 \choose k+2}& \cdots & {n-1 \choose 2k}& {n-1 \choose 2k+1}\\
       \vdots& \vdots& \ddots & \vdots& \vdots\\
       {n-k+1 \choose k+1} & {n-k+1 \choose k+2} & \cdots &{n-k+1 \choose 2k}  &{n-k+1 \choose 2k+1}\\
       {n-k \choose k+1} & {n-k \choose k+2} & \cdots  &{n-k \choose 2k} &{n-k \choose 2k+1}\\
      \end{array}
    \right|\nonumber\\
&=\left|
      \begin{array}{ccccc}
        {n \choose n-k-1} & {n \choose n-k-2}  & \cdots &  {n \choose n-2k} &  {n \choose n-2k-1}\\
       {n-1 \choose n-k-2}& {n-1 \choose n-k-3}& \cdots & {n-1 \choose n-2k-1} & {n-1 \choose n-2k-2}\\
       \vdots& \vdots& \ddots & \vdots& \vdots\\
       {n-k+1 \choose n-2k} & {n-k+1 \choose n-2k-1} & \cdots &{n-k+1 \choose n-3k-1} &{n-k+1 \choose n-3k}\\
       {n-k \choose n-2k-1} & {n-k \choose n-2k-2} & \cdots  &{n-k \choose n-3k}  &{n-k \choose n-3k-1}\\
      \end{array}
    \right|\nonumber\\
&=\frac{1}{(n-k+1)}\times \nonumber\\
&\left|
      \begin{array}{ccccc}
        {n \choose n-k-1} & {n \choose n-k-2}  & \cdots &  {n \choose n-2k} &   {n \choose n-2k-1}\\
       {n-1 \choose n-k-2}& {n-1 \choose n-k-3}& \cdots  & {n-1 \choose n-2k-1}& {n-1 \choose n-2k-2}\\
       \vdots& \vdots& \ddots & \vdots&\vdots\\
       {n-k+1 \choose n-2k} & {n-k+1 \choose n-2k-1} & \cdots  &{n-k+1 \choose n-3k-1} &{n-k+1 \choose n-3k}\\
       (n-k+1){n-k \choose n-2k-1} & (n-k+1){n-k \choose n-2k-2} & \cdots &(n-k+1){n-k \choose n-3k} &(n-k+1){n-k \choose n-3k-1}
      \end{array}
    \right|\nonumber\\
&=\frac{1}{(n-k+1)}\times \nonumber\\
&\left|
      \begin{array}{ccccc}
        {n \choose n-k-1} & {n \choose n-k-2}  & \cdots  &  {n \choose n-2k}&  {n \choose n-2k-1}\\
       {n-1 \choose n-k-2}& {n-1 \choose n-k-3}& \cdots  & {n-1 \choose n-2k-1}& {n-1 \choose n-2k-2}\\
       \vdots& \vdots& \ddots & \vdots& \vdots\\
       {n-k+1 \choose n-2k} & {n-k+1 \choose n-2k-1} & \cdots &{n-k+1 \choose n-3k-1} &{n-k+1 \choose n-3k}\\
       (n-2k-1){n-k+1 \choose n-2k} & (n-2k-2){n-k+1 \choose n-2k-1} & \cdots &(n-3k){n-k+1 \choose n-3k-1} &(n-3k-1){n-k+1 \choose n-3k}\\
      \end{array}
    \right|\nonumber\\
    \end{align}
\begin{align}
&\xlongequal{r_{k+1}-(n-3k-1)r_{k}}\frac{1}{(n-k+1)}\left|
      \begin{array}{ccccc}
        {n \choose n-k-1} & {n \choose n-k-2}  & \cdots &  {n \choose n-2k} &  {n \choose n-2k-1}\\
       {n-1 \choose n-k-2}& {n-1 \choose n-k-3}& \cdots & {n-1 \choose n-2k-1} & {n-1 \choose n-2k-2}\\
       \vdots& \vdots& \ddots & \vdots & \vdots \\
       {n-k+1 \choose n-2k} & {n-k+1 \choose n-2k-1} & \cdots &{n-k+1 \choose n-3k-1}  &{n-k+1 \choose n-3k}\\
       k{n-k+1 \choose n-2k} & (k-1){n-k+1 \choose n-2k-1}  & \cdots &{n-k+1 \choose n-3k-1}  &0\\
      \end{array}
    \right|\nonumber\\
&\xlongequal{\cdots}\frac{1}{(n-k+1)(n-k+2)\cdots n}\left|
      \begin{array}{ccccc}
        {n \choose n-k-1} & {n \choose n-k-2}  & \cdots &  {n \choose n-2k}  &  {n \choose n-2k-1}\\
       k{n \choose n-k-1}& (k-1){n \choose n-k-2}& \cdots & {n \choose n-2k} &0\\
       \vdots& \vdots& \ddots & \vdots& \vdots\\
       k{n-k+2 \choose n-2k+1} & (k-1){n-k+2 \choose n-2k} & \cdots & {n-k+2 \choose n-3k} &0\\
       k{n-k+1 \choose n-2k} & (k-1){n-k+1 \choose n-2k-1} & \cdots &{n-k+1 \choose n-3k-1}&0\\
      \end{array}
    \right|\nonumber
    \end{align}
    \begin{align}
&=\frac{k(k-1)\cdots 1}{(n-k+1)(n-k+2)\cdots n}\left|
      \begin{array}{ccccc}
        \frac{1}{k}{n \choose n-k-1} &  \frac{1}{k-1}{n \choose n-k-2}  & \cdots &   \frac{1}{1}{n \choose n-2k}  &  {n \choose n-2k-1}\\
       {n \choose n-k-1}& {n \choose n-k-2}& \cdots & {n \choose n-2k} &0\\
       \vdots& \vdots& \ddots & \vdots& \vdots\\
       {n-k+2 \choose n-2k+1} & {n-k+2 \choose n-2k} & \cdots & {n-k+2 \choose n-3k} &0\\
       {n-k+1 \choose n-2k} & {n-k+1 \choose n-2k-1} & \cdots &{n-k+1 \choose n-3k-1}&0\\
      \end{array}
    \right| \nonumber\\
&=\frac{1}{{n \choose k}}
\left|
      \begin{array}{ccccc}
        \frac{1}{k}{n \choose n-k-1} &  \frac{1}{k-1}{n \choose n-k-2}  & \cdots &   \frac{1}{1}{n \choose n-2k}  &  {n \choose n-2k-1}\\
       {n \choose n-k-1}& {n \choose n-k-2}& \cdots & {n \choose n-2k} &0\\
       \vdots& \vdots& \ddots & \vdots& \vdots\\
       {n-k+2 \choose n-2k+1} & {n-k+2 \choose n-2k} & \cdots & {n-k+2 \choose n-3k} &0\\
       {n-k+1 \choose n-2k} & {n-k+1 \choose n-2k-1} & \cdots &{n-k+1 \choose n-3k-1}&0\\
      \end{array}
    \right|\nonumber \\
&=(-1)^{k+2}\cdot\frac{{n \choose n-2k-1}}{{n \choose k}}
\left|
      \begin{array}{cccc}
       {n \choose n-k-1}& {n \choose n-k-2}& \cdots & {n \choose n-2k} \\
       \vdots& \vdots& \ddots & \vdots\\
       {n-k+2 \choose n-2k+1} & {n-k+2 \choose n-2k} & \cdots & {n-k+2 \choose n-3k} \\
       {n-k+1 \choose n-2k} & {n-k+1 \choose n-2k-1} & \cdots &{n-k+1 \choose n-3k-1}\\
      \end{array}
    \right| \nonumber\\
&\xlongequal{\cdots}
(-1)^{k+2}(-1)^{k+1}\cdots (-1)^{3}
\frac{{n \choose n-2k-1}{n \choose n-2k}\cdots {n \choose n-k-2} }{{n \choose k}{n \choose k-1}\cdots {n \choose 1}}\left|
     {n \choose n-k-1}
    \right|\nonumber \\
&=(-1)^{\frac{k(k+5)}{2}}\cdot\frac{\prod_{i=k+1}^{2k+1}{n \choose i}}{\prod_{i=1}^{k}{n \choose i}}\neq 0.\nonumber
\end{align}

\end{proof}


\begin{thebibliography}{ABCD}


 \bibitem{AA14}S. Aaronson, A. Ambainis,  {\it The Need for Structure in Quantum Speedups}, Theory of Computing {\bfseries 10} (2014)  133--166.

 \bibitem{AA15}S. Aaronson, A. Ambainis,  {\it Forrelation: A problem that optimally separates quantum from classical computing}, In Proceedings
of the 47th  STOC (2015)  307--316.



 \bibitem{ABK16}
	S. Aaronson,  	S. Ben-David,    R. Kothari,  {\it Separations in query complexity using cheat sheets},  In Proceedings of the 48th  STOC(2016) 863--876.


\bibitem{AAIK15}
S.~Aaronson, A.~Ambainis, J.~Iraids,  M.~Kokainis, {\it Polynomials, Quantum Query Complexity, and Grothendieck's Inequality},   In Proceedings
of the 31st  CCC (2016),  pp. 25:1-25:19. Also arXiv:1511.08682.



\bibitem{ABA+16}
A. Ambainis,  K. Balodis, A. Belovs,  T. Lee,  M. Santha,   J. Smotrovs,
 {\it Separations in Query Complexity Based on Pointer Functions},  In  Proceedings of 48th  STOC (2016) 800--813.


  \bibitem{Amb98} A. Ambainis, {\it A note on quantum black-box complexity of almost all Boolean functions}, Information Processing Letters {\bfseries 71} (1999) 5--7.
    Also, quant-ph/9811080.



\bibitem{Amb04}A. Ambainis, {\it Quantum walk algorithm for element distinctness}, SIAM J. Comput. {\bfseries 37} (2007) 201--239. Earlier version in FOCS'04.  Also quant-ph/0311001.


 \bibitem{Amb13}A. Ambainis, {\it Superlinear advantage for exact quantum algorithms}, SIAM J. Comput.  {\bfseries 45} (2016)  617--631. Earlier version in STOC'13.












\bibitem{AGZ14} A.~Ambainis,  J.~Gruska,  S.G.~Zheng, {\it  Exact quantum algorithms have advantage for almost all Boolean functions}, Quantum Information and Computation {\bfseries 15} (2015) 0435--0452.  Also arXiv:1404.1684.



\bibitem{AIS13}
A.~Ambainis, A.~Iraids,  J.~Smotrovs,
 {\it Exact quantum query complexity of EXACT and THRESHOLD},
In Proceedings of the 8th Conference on the Theory of Quantum Computation, Communication and Cryptography (TQC'2013), pp. 263--269.
 Also arXiv:1302.1235.


\bibitem{AY12}
A. Ambainis, A. Yakaryilmaz, {\it Superiority of exact quantum automata for one promise problems}, Information Processing Letters {\bfseries 112} (2012) 289-291.

\bibitem{Bel11}
A. Belovs, {\it Span programs for functions with constant-sized 1-certificates}, In
Proceedings of the 43rd STOC (2011) 77--84. Also arXiv:1105.4024.

\bibitem{Ber14}
D. Bera, {\it A Different Deutsch-Jozsa}, Quantum Information Processing  {\bfseries 14} (2015) 1777--1785.


\bibitem{BBC+98}
R.~Beals, H.~Buhrman, R.~Cleve, M.~Mosca,  R.~{de} Wolf,
 {\it Quantum lower bounds by polynomials},
 Journal of the ACM  {\bfseries   48 } (2001) 778--797.
 Earlier version in FOCS'98. Also arXiv:9802049.

\bibitem{BV04}
S.~Boyd and L.~Vandenberghe, {\it Convex optimization}, Cambridge University Press, Cambridge, 2004.

 \bibitem{BH97}
G.~Brassard,  P.~H{\o}yer,
{\it An exact quantum polynomial-time algorithm for Simon's problem},
In Proceedings of the Fifth Israeli Symposium on Theory of Computing and Systems, 1997,
  pp. 12--23.
 Also arXiv:9704027.

\bibitem{BBD+15} J.~Bri{\"e}t, H.~Buhrman, D.~Leung,  T.~Piovesan,  F.~Speelman, {\it  Round elimination in exact communication complexity},
 In Proceedings of the 10th Conference on the Theory of Quantum Computation, Communication and Cryptography (TQC 2015),  pp. 206--225.


\bibitem{BDHHMSW01} H. Buhrman, C. D\"{u}rr, M Heiligman, P. H{\o}yer, F. Magniez,
M. Santha,  R. de Wolf, {\it Quantum Algorithms for Element Distinctness}, In
Proceedings of 16th CCC (2001) 131--137.



   \bibitem{BSS03}
  H. Barnum, M. Saks,  M. Szegedy, {\it Quantum query complexity and semi-definite programming},
In Proceedings 18th CCC (2003) 179--193.


\bibitem{BW02}
H.~Buhrman,  R.~de Wolf,
{\it Complexity measures and decision tree complexity: a survey},
 Theoretical Computer Science {\bfseries  288}  (2002) 1--43.


\bibitem{BWY15}
E. Blais, A. Weinstein,  Y. Yoshida,
 {\it Partially symmetric functions are efficiently
isomorphism testable}, SIAM J. Comput.  {\bfseries 44} (2015) 411--432.


 \bibitem{CD08}
 A. Childs,  W. van Dam,  {\it Quantum algorithms
for algebraic problems}, Reviews of Mod-
ern Physics {\bfseries 82} (2010) 1-52. Also arXiv:0812.0380.

\bibitem{CEMM98}
R.~Cleve, A.~Eckert, C.~Macchiavello,  M.~Mosca,
{\it Quantum algorithms revisited},
In  Proceedings of the Royal Society of London,   454A
 (1998): 339--354.
 Also arXiv:9708016.


\bibitem{CKH98}
D. Collins, K. W. Kim,  W. C. Holton,
{\it Deutsch-Jozsa algorithm as a test of quantum
computation}, Phys. Rev. A {\bfseries 58} (1998) R1633--R1636.


\bibitem{CKL00}
D. P. Chi, J. Kim,  S. Lee, {\it Quantum Algorithm for Generalized Deutsch-Jozsa Problem}, 2000, quant-ph/0005059.


\bibitem{CV05}

A. Canteaut,  M.Videau, {\it Symmetric Boolean Functions}, IEEE Transactions on Information Theory {\bfseries  51} (2005) 2791--2811.



\bibitem{DHHM04} C. D\"{u}rr, M. Heiligman, P. H{\o}yer,  M. Mhalla, {\it Quantum query com-
plexity of some graph problems}, In  Proceedings of
the 31st ICALP (2004) 481--
493.


  \bibitem{DJ92}
D.~Deutsch,   R.~Jozsa,
 {\it Rapid solution of problems by quantum computation},
 In Proceedings of the Royal Society of London, 439A (1992): 553--558.

   \bibitem{DM06}
   A. Dubrovska, T. Mischenko-Slatenkova, {\it Computing boolean functions: Exact
quantum query algorithms and low degree polynomials},  2006, quant-ph/0607022


    \bibitem{DT07}  S. D\"{o}rn,  T. Thierauf,
  {\it The Quantum Query Complexity
of Algebraic Properties}, 2007, arXiv:0705.1446.


\bibitem{ESY84} S.~Even, A.L.~Selman,  Y.~Yacobi, {\it The Complexity of Promise Problems
with Applications to Public-Key Cryptography},  Information and Control {\bfseries  61} (1984) 159--173.






\bibitem{FGG08}
   E. Farhi, J. Goldstone,  S. Gutman, {\it A Quantum Algorithm for the Hamil-
tonian NAND Tree}, Theory of Computing {\bfseries 4} (2008) 169--190. Also quant-
ph/0702144.


  \bibitem{FGGS98}
E.~Farhi, J.~Goldstone, S.~Gutmann,  M.~Sipser,
 {\it A limit on the speed of quantum computation in determining parity},
Physical Review Letters {\bfseries 81} (1998) 5442--5444.
 Also arXiv:9802045.





\bibitem{Go06} O.~Goldreich, {\it On promise problems: A survey, Shimon Even Festschrift},  Lecture Notes in Computer Science {\bfseries 3895} (2006) 254--290.






  \bibitem{GQZ15}

 J. Gruska, D.W. Qiu,  S.G. Zheng, {\it Potential of quantum finite automata with exact acceptance}, International Journal of Foundation of Computer Science {\bfseries 26} (2015)  381--398.


 \bibitem{GR97}
J.~von zur Gathen,  J.~R.~Roche,
{\it Polynomials with two values,
 Combinatorica} {\bfseries 17} (1997) 345--362.


 \bibitem{Gro96}
L.K. Grover,
 {\it A fast quantum mechanical algorithm for database search},
 In Proceedings of the 28th STOC (1996) 212--219.




 \bibitem{GQZ15} J.~Gruska, D.W.~Qiu,  S.G.~Zheng, {\it Generalizations of the distributed Deutsch-Jozsa promise problem},  Mathematical Structures in Computer Science, 2015, doi:10.1017/S0960129515000158. Also arXiv:1402.7254.


 \bibitem{HKM02}
T. Hayes, S. Kutin, D. van Melkebeek,  {\it The quantum black-box complexity of majority},
Algorithmica  {\bfseries 34} (2002) 480--501. Also quant-ph/0109101.

 \bibitem{HS05}
P. H{\o}yer,  R. \u{S}palek, {\it Lower bounds on quantum query complexity}, Bulletin of the European
Association for Theoretical Computer Science {\bfseries 87} (2005) 78--103. Also quant-ph/0509153.

 \bibitem{Mid04}
G.  Midrij\={a}nis,
 {\it Exact quantum query complexity for total Boolean functions}, 2004,  quant-ph/
0403168.



\bibitem{MJM11}
A.~Montanaro, R.~Jozsa,  G.~Mitchison,
 {\it On exact quantum query complexity},
 Algorithmica {\bfseries 419} (2015) 775--796.
Also  arXiv:1111.0475.

\bibitem{MP68}
M.~Minsky and S.~Papert, {\it Perceptrons}, 2nd expanded ed., MIT Press, Cambridge, MA, 1988.


\bibitem{NC00}
M.~Nielsen,  I.~Chuang,
{\it Quantum Computation and Quantum Information},
Cambridge University Press, Cambridge, 2000.

 \bibitem{NS94}
N.~Nisan,  M.~Szegedy,
{\it On the degree of Boolean functions as real polynomials},
Computational Complexity {\bfseries  4} (1994) 301--313.
 Earlier version in STOC'92.


  \bibitem{PMG+15}
 B. Perez-Garciaa, M. McLarenb, S. K. Goyalc,
R. I. Hernandez-Arandaa, A. Forbesb,  T. Konradc, {\it Quantum computation with classical light:
implementation of the Deutsch-Jozsa Algorithm}, 2015, arXiv:1510.03365.



  \bibitem{RS07}
 B. Reichardt, R. \u{S}palek,  {\it Span-program-based quantum algorithm
for evaluating formulas}, In Proceedings of the 40th STOC (2008) 103--112. Also
arXiv:0710.2630.



 \bibitem{Sho94}
P.W. Shor,  {\it Polynomial-time algorithms for prime factorization and discrete logarithms on a quantum
computer}, SIAM J. Comput.  {\bfseries 26} (1997) 1484--509. Earlier version in FOCS'94. Also quant-ph/9508027.



\bibitem{Sim94}
D.R. Simon, {\it On the power of quantum computation},  SIAM J. Comput. {\bfseries 26} (1997) 1474--1483. Earlier version in FOCS'94.


\bibitem{SVKF15}
T. Mischenko-Slatenkova, A. Vasilieva, I. Kucevalovs,  R. Freivalds, {\it  Quantum Queries on Permutations},  Lecture Notes in Computer Science  {\bfseries 9118} (2015) 177--184.


\bibitem{ZQ14} S.G.~Zheng, D.W.~Qiu, {\it From quantum query complexity to state complexity},  Gruska Festschrift, Lecture Notes in Computer Science {\bfseries 8808} (2014) 231--245.  Also  arXiv:1407.7342.



\end{thebibliography}
\end{document}